\author[1]{Kaizhao Liu} 
\author[2]{Qi Long}
\author[3]{Zhekun Shi}
\author[2]{Weijie J.~Su}
\author[2]{Jiancong Xiao}
\affil[1]{Massachusetts Institute of Technology}
\affil[2]{University of Pennsylvania}
\affil[3]{Princeton University}
\definecolor{blue}{rgb}{0, 0, 1}
\begin{document}

  \title{Statistical Impossibility and Possibility of Aligning LLMs with Human Preferences: From Condorcet Paradox to Nash Equilibrium}
    \date{}
  \maketitle

\begin{abstract}

Aligning large language models (LLMs) with diverse human preferences is critical for ensuring fairness and informed outcomes when deploying these models for decision-making. In this paper, we seek to uncover fundamental statistical limits concerning aligning LLMs with human preferences, with a focus on the probabilistic representation of human preferences and the preservation of diverse preferences in aligned LLMs. We first show that human preferences can be represented by a reward model if and only if the preference among LLM-generated responses is free of any Condorcet cycle. Moreover, we prove that Condorcet cycles exist with probability converging to one exponentially fast under a general probabilistic preference model called the Luce model, thereby demonstrating the \textit{impossibility} of fully aligning human preferences using reward-based approaches such as reinforcement learning from human feedback. Next, we explore the conditions under which LLMs would employ mixed strategies---meaning they do not collapse to a single response---when aligned in the limit using a non-reward-based approach, such as Nash learning from human feedback. We identify a necessary and sufficient condition for mixed strategies: the \textit{absence} of a response that is preferred over all others by a majority. As a blessing, we prove that this condition holds with high probability under the Luce model, thereby highlighting the statistical \textit{possibility} of preserving minority preferences without explicit regularization in aligning LLMs.

\end{abstract}

\iffalse
\noindent%
{\it Keywords:} Large Language Models, Preference Alignment, Reward Model, Condorcet Paradox, Nash Equilibrium
% \vfill
\fi
% \newpage
\begingroup
\renewcommand\thefootnote{}\footnotemark\footnotetext{This paper has been accepted for publication in the \textit{Annals of Statistics}.}
\endgroup
\begingroup
\renewcommand\thefootnote{}\footnotemark\footnotetext{Authors are listed in alphabetical order. Emails: \texttt{mrzt@mit.edu, qlong@upenn.edu, zs3417@princeton.edu, suw@wharton.upenn.edu, and jcxiao@upenn.edu.}}
\endgroup
\tableofcontents

\section{Introduction}\label{sec:intro}
Large language models (LLMs), such as ChatGPT-4o and Claude-3.7 Sonnet, have exhibited remarkable capabilities across diverse domains, including code generation, data analysis, elementary mathematics, and reasoning \citep{hurst2024gpt,anthropic2024claude,chowdhery2023palm,touvron2023llama,ji2025overview}. These models are increasingly integrated into decision-making processes previously considered unlikely to be automated in the near future \citep{bubeck2023sparks,eloundou2024gpts}.

A central pillar enabling their popularity and effectiveness lies in alignment: LLMs learn to interact with human users and accommodate diverse opinions and values by aligning outputs with human preferences through reinforcement learning from human feedback (RLHF) \citep{Ouyang2022,casper2023open,dong2024rlhf}. In essence, RLHF begins by constructing a reward model trained on preference data from human labelers using the Bradley-Terry (BT) \citep{bradley1952rankanalysis} model. A high scalar score assigned by the reward model to a response generated by an LLM suggests that this response is relatively preferred by human labelers, thereby presumably representing the broader human population. The LLM is then fine-tuned with respect to this reward model to generate responses that reflect high preferences in distribution.

From a statistical perspective, a critical question is whether reward models can sufficiently capture all possible human preferences structures. Without understanding the potential statistical limits of using reward models for preference representation, one risks failing to capture the diversity of human preferences, potentially introducing biases and unwanted consequences in high-stakes decision-making \citep{casper2023open,azar2024general}. It is plausible that statistical limitations may arise because reward models in RLHF implicitly assume homogeneous preferences across individuals. Previous literature has informally discussed the potential shortcomings of such an approach \citep{calandriello2024human,zhang2024general}. If fundamental limitations exist, it is crucial to consider alternatives to reward-based alignment techniques and explore their statistical properties.

\subsection{A Statistical Impossibility}
In this paper, we first identify a fundamental impossibility result regarding the representational capacity of reward models for human preferences. We observe that a human preferences structure (among multiple responses) cannot be represented by a reward model if and only if a \textit{Condorcet cycle} exists. A Condorcet cycle of length \(r \geqslant 3\) consists of \(r\) responses \(\{y_i\}_{i=1}^r\) such that each \(y_i\) is preferred over \(y_{i+1}\) by a majority of labelers, with the cyclic convention \(y_{r+1} = y_1\) (see Example \ref{ex:condorcet} in Section \ref{sec:benefit}). While it is evident that a reward-induced preference structure cannot admit Condorcet cycles---because a response with the highest score (assuming no ties) cannot be beaten by any other response---we show that the existence of a Condorcet cycle is also necessary for any human preferences structure that cannot be represented by a reward model.

We next demonstrate that Condorcet cycles emerge under mild conditions, establishing that reward models are fundamentally incapable of fully capturing human preferences for LLM alignment. We consider a common setting called \emph{linear preference ranking} that each individual's preference on $n$ responses is a strict ranking, and adopt a probabilistic model called \emph{Luce model}\footnote{To avoid potential confusion, we emphasize that the Luce model considered here is a probabilistic framework and should not be confused with the Plackett–Luce model \citep{luce2012individual}, a K-ary extension of the BT model.} for the individuals' preferences. The Luce model assigns each responses a positive latent utility parameter and generating rankings probabilistically, where the probability of selecting an responses at each stage is proportional to its utility. When all utility parameters are equal, the Luce model reduces to a uniform distribution over all rankings, corresponding to the Impartial Culture condition \citep{guilbaud1952theories,kelly1974voting}, one of the most widely used assumptions in social choice theory. Consequently, our assumption strictly generalizes this classical setting by allowing structured yet principled deviations from uniform randomness. Under this model, we prove that the probability of observing at least one Condorcet cycle is bounded below by $1 - e^{-\mathrm{poly}(n)}$ when the number of labelers $m \geqslant 3$, where $\mathrm{poly}(n)$ denotes a (positive) polynomial in the number $n$ of responses. This result complements the earlier work of \citet{blin1973intransitive} in social choice theory, which analyzed the probability that a Condorcet cycle exists when the number of voters (labelers) $m \to \infty$.

\subsection{A Statistical Possibility}
Given the impossibility of fully capturing human preferences using reward models, a natural alternative is to directly model the pairwise probability that one response is preferred over another. One promising alternative is Nash learning from human feedback (NLHF) \citep{munos2024nash}, recently introduced by Google DeepMind. NLHF employs a pairwise preference modeling strategy by using a neural network to estimate the probability of one response being preferred over another. The alignment process is then carried out through a two-player game, where two LLMs are trained to generate responses, each aiming to maximize the probability that its own response is preferred over the other's. As such, any optimal solution of NLHF must be a Nash equilibrium of this two-player game.

Although the pairwise preference modeling in NLHF can, by design, represent \emph{all} human preferences, a crucial statistical question arises: does an NLHF-aligned LLM \emph{diversify} its outputs, or, put differently, preserve minority preferences, rather than ``collapse'' to a single response for a prompt? This is important because generating only a single response (even if widely preferred) can suppress minority preferences, leading to biases and fairness concerns, particularly when prompts solicit opinions rather than objective facts. Empirically, RLHF-aligned LLMs often exhibit substantial bias toward certain preferred responses \citep{santurkar2023whose,chakraborty2024maxmin}. Indeed, \citet{xiao2024algorithmic} demonstrated that RLHF, unless regularized, can induce ``preference collapse'' due to its reward-based nature, where the LLM tends to generate the response with the highest reward with a probability higher than expected, even when the reward model captures human preferences.

To address this, we derive a necessary and sufficient condition under which any optimal solution of NLHF is a \emph{mixed strategy}---that is, the NLHF-aligned LLM generates at least two\footnote{As we will show in Section \ref{sec:align}, when the NLHF solution has a mixed strategy, it must have at least three distinct responses with positive probability under a no-tie assumption.} distinct responses with positive probability, thereby avoiding collapse to a single response. 
We prove that a mixed strategy emerges if and only if there is no ``winning response,'' which is defined as a response that a majority of labelers prefer over every other response, often referred to as the winner in the literature of social choice theory \citep{gehrlein2006condorcet}. The necessity of this condition is intuitive: if a Condorcet winning response exists, then a best response in NLHF trivially selects that single response. Establishing the sufficiency, however, requires novel proof techniques.

Under the same probabilistic ranking model we utilize above, we show that the probability of a Condorcet winning response is small, thereby illustrating the provable statistical possibility of NLHF in preserving minority preferences. Formally, we prove that this probability is of the order 
\[
n^{-\frac{m}{\lceil m/2\rceil} + 1},
\]
up to logarithmic factors, when the number $m$ of labelers is greater than or equal to 3, under the classical impartial culture assumption and for any Luce model with uniformly bounded weights. Note that, above, $\frac{m}{\lceil m/2\rceil} = 2$ when $m$ is even and $\frac{m}{\lceil m/2\rceil} = \frac{2m}{m+1}$ when $m$ is odd. As $n \to \infty$, this probability converges to zero. These results provide new insight into the nature of LLM alignment. Under the impartial culture assumption, it is natural to expect that the existence of a Condorcet winning response is unlikely, and our results provide a rigorous theorem formalizing this intuition. However, under the Luce model, our analysis yields new and surprising conclusions. Consider a setting in which one preference ranking is assigned a relatively large utility weight, so that individuals select this ranking with high probability, while the remaining rankings have smaller but strictly positive weights. One might reasonably expect the former to be the Condorcet winning response. Nevertheless, our results show that even in this scenario, the probability that a Condorcet winning response exists still converges to zero.

\subsubsection*{Summary of Statistical Limits} Our results of statistical limits discussed above can be summarized in the following tree diagram (Figure \ref{fig:preference_tree_intro}), which illustrates the statistical limits of aligning LLMs with human preferences. Whether human preferences can be captured by reward models depends on the presence of Condorcet cycles. In scenarios where the coefficient of the KL term tends to zero and Condorcet cycles exist (with high probability), an NLHF-aligned LLM would output a mixed strategy if no Condorcet winning response is present, and would otherwise output the winning response. In contrast, under the same regime, an LLM aligned using standard RLHF with a reward model, if fully optimized in fine-tuning, is prone to collapsing to a single response, irrespective of whether a Condorcet winning response exists or not. As we have demonstrated that the Condorcet winning response does not exist with high probability under the probabilistic ranking model, the practical implication is that RLHF-aligned LLMs tend to collapse, whereas NLHF is more likely to generate diverse responses.

\begin{figure}[!ht]
\centering
\scalebox{0.48}{
\begin{tikzpicture}[
    >=latex,
    node distance=2.4cm and 3.8cm,
    every node/.style={font=\large},
    base/.style={
        draw=black!65,
        rounded corners=3pt,
        thick,
        align=center,
        fill=white,
        inner xsep=10pt,
        inner ysep=8pt,
        minimum height=1.0cm
    },
    mainbox/.style={
        base,
        fill=gray!6
    },
    resultbox/.style={
        base,
        text width=8.8cm
    }
]
\node[mainbox] (root) {\LARGE Human preferences};
\node[mainbox, above right=3.6cm and 5.6cm of root] (noreward) {\LARGE No reward model};
\node[mainbox, below right=3.6cm and 5.8cm of root] (reward) {\LARGE Reward model};
\node[resultbox, right=8.0cm of noreward] (mixed) {\LARGE
    \textcolor{red}{Converge to a mixed strategy}\\[4pt]
    \textcolor{blue}{Collapse to one response}
};
\node[resultbox, right=8.0cm of noreward, yshift=-4.5cm] (winner1) {\LARGE
    \textcolor{red}{Collapse to the winning response}\\[4pt]
    \textcolor{blue}{Collapse to one response}
};
\node[resultbox, right=8.4cm of reward] (winner2) {\LARGE
    Collapse to the winning response
};

\draw[->, thick] (root) -- (noreward)
    node[pos=0.45, above, sloped, fill=white, inner sep=1pt, font=\normalsize]
    {\LARGE Condorcet cycles exist (w.h.p.)};
\draw[->, thick] (root) -- (reward)
    node[pos=0.45, below, sloped, fill=white, inner sep=1pt, font=\normalsize]
    {\LARGE No Condorcet cycles};
\draw[->, thick] (noreward) -- (mixed)
    node[pos=0.52, above, sloped, fill=white, inner sep=1pt, font=\normalsize]
    {\LARGE No winning response (w.h.p.)};
\draw[->, thick] (noreward) -- (winner1)
    node[pos=0.48, above, sloped, fill=white, inner sep=1pt, font=\normalsize]
    {\LARGE Winning response exists};
\draw[->, thick] (reward) -- (winner1)
    node[pos=0.52, above, sloped, fill=white, inner sep=1pt, font=\normalsize]
    {\LARGE Others};
\draw[->, thick] (reward) -- (winner2)
    node[pos=0.52, below, sloped, fill=white, inner sep=1pt, font=\normalsize]
    {\LARGE BT model};
\end{tikzpicture}}
\caption{Illustration of statistical impossibility and possibility of aligning LLMs with human preferences. Reward models are impossible to capture human preferences with Condorcet cycles, while non-reward-based NLHF can maintain minority preferences when there is no Condorcet winning response. \textcolor{red}{Red text} indicates properties of \textcolor{red}{NLHF}, while \textcolor{blue}{blue text} indicates properties of \textcolor{blue}{RLHF}. Under the BT model, both RLHF- and NLHF-aligned LLMs have the same property, thus we use black text. The abbreviation ``w.h.p.'' stands for ``with high probability.'' In general, when collapsing, a RLHF-aligned LLM might not necessarily collapse to generate the Condorcet winning response (see discussion in Example~\ref{exam:rlhf_solution}). ``Other'' reward models are discussed in Section~\ref{sec:benefit}.}
\label{fig:preference_tree_intro}
\end{figure}

\subsection{Technical Contributions}
Beyond the context of LLM alignment, our work also provides a strong probabilistic contribution to social choice theory from a theoretical perspective. To discuss the technical contributions, we begin from a reverse direction, starting with the probabilities of the Condorcet winning response, known in classical studies as the Condorcet winner, exists. In 1968, \citet{garman1968paradox} conjectured, based on numerical evidence, that the probability of a Condorcet winner exists under the impartial culture condition tends to zero as the number of alternatives increases. Later, \citet{may1971some} proved that this probability converges to zero at a rate of $O(n^{-\frac12})$ for any fixed $m \geqslant 3$. However, these results have two main limitations. First, it is unclear whether the bound is tight, as it lacks a matching lower bound. As a result, the following problem have remained open for fifty years: \emph{What is the tight bound on the probability that a Condorcet winner exists under the uniform distribution?}
This problem was resolved after 50 years by \citet{sauermann2022probability}. Under the impartial culture assumption, they provided the tight rate $\Theta\!\left(n^{-\frac{m}{\lceil m/2\rceil} + 1}\right)$ for odd \(m\). 

The second is that the analysis is restricted to impartial culture, that is, the uniform distribution, which is a strong and often unrealistic assumption in practice. However, the proof techniques in \citet{sauermann2022probability} rely heavily on the assumption that all rankings are equally likely, and it remains unclear how to obtain a bound when the underlying distribution deviates from uniformity. Therefore, the following problem has remained open until this work:
\begin{center}
\emph{How can one obtain a (tight) bound on the probability that a Condorcet winner exists under a non-uniform distribution?}
\end{center}
In the existing literature, methods for deriving such bounds are largely unexplored, let alone establishing tight ones. Our results (Theorem \ref{thm:no Condorcet winning} and \ref{thm:no Condorcet winning lower bound}) show that, up to logarithmic factors, the rate $n^{-\frac{m}{\lceil m/2\rceil} + 1}$ continues to hold for any Luce model with uniformly bounded weights. These results show that the bound in \cite{may1971some} is tight for \(m=3\), and they improve upon the bounds obtained in \cite{may1971some} for any \(m > 3\), since \(\frac{m}{\lceil m/2\rceil} > \frac{3}{2}\). Moreover, the exponent \(-\frac{m}{\lceil m/2\rceil}+1\) is sharp and cannot be further improved. 
A comparison of the bounds is provided in Figure~\ref{fig:condorcet_improvement}. Technically, our idea of proof is completely different from the existing literature, which enables us to tackle non-uniform settings.
\begin{figure}[ht]
    \centering
    \includegraphics[width=0.75\linewidth]{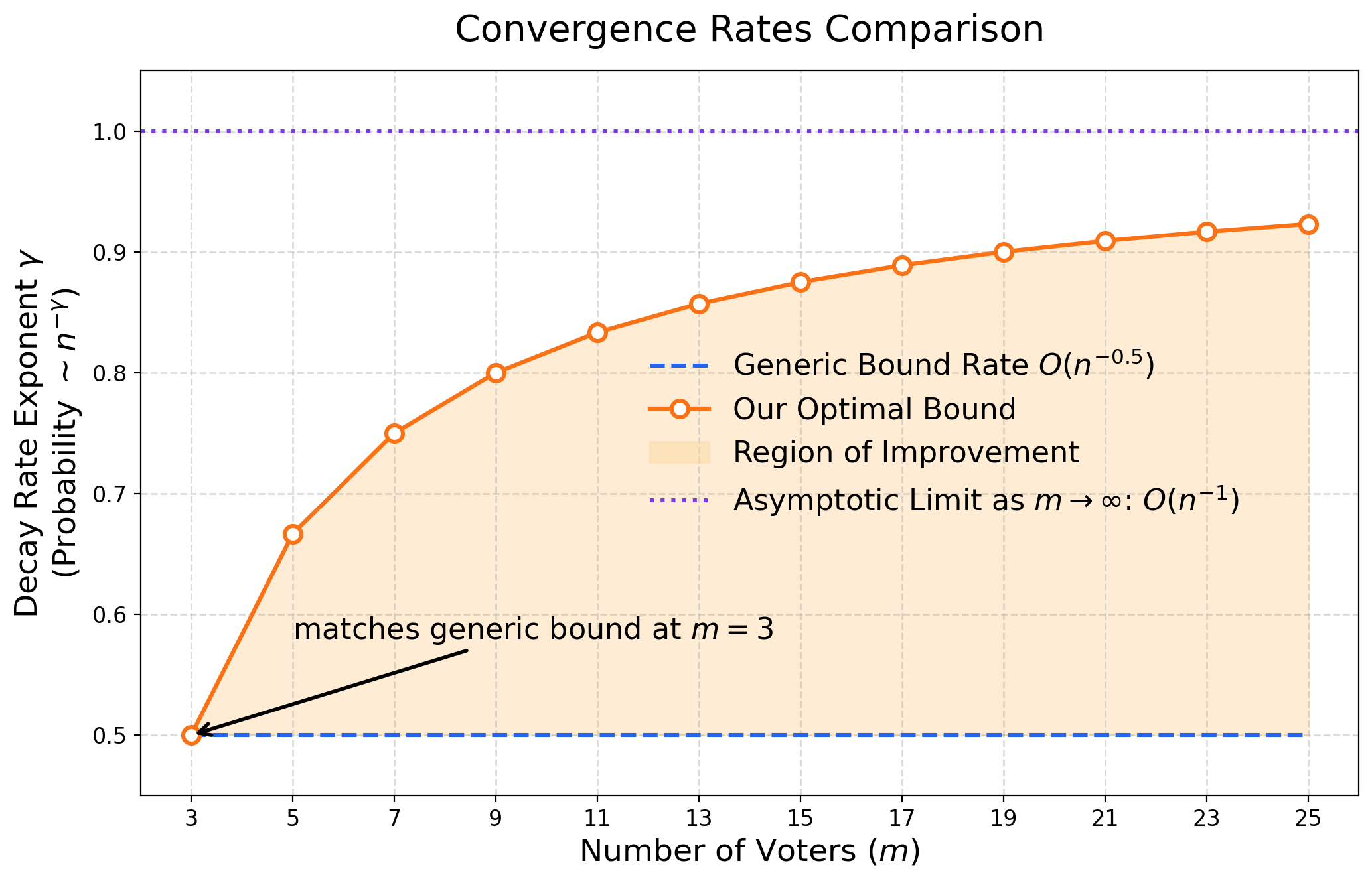}
\caption{Comparison of convergence rate exponents for the probability of a Condorcet winner, restricted to odd numbers of voters $m$. The $y$-axis represents the exponent $\gamma$ in the asymptotic decay rate $n^{-\gamma}$ (ignoring logarithmic factors) as the number of candidates $n \to \infty$. A higher value of $\gamma$ indicates a faster decay and a strictly tighter bound. The blue dashed line indicates the previously known generic bound of $O(n^{-0.5})$. The orange line with markers shows the exponent derived from our optimal bound, $\tilde{\Theta}(n^{1-m/\lceil m/2 \rceil})$, for both uniform and non-uniform settings. While matching the existing bound at $m=3$, our work demonstrates a monotonic improvement for all odd $m > 3$, asymptotically approaching the theoretical limit of $O(n^{-1})$.}
    \label{fig:condorcet_improvement}
\end{figure}

We then turn our attention to the probability that Condorcet cycles exist. To the best of our knowledge, this question has received comparatively less attention in the classical social choice literature. \citet{blin1973intransitive} studied this problem under a model where, for every pair of alternatives, the associated pairwise comparison probabilities are \(50\%\) to \(50\%\). This probability model is adopted primarily for mathematical tractability; however, it is unrealistic in practice, not only because it assumes equal preferences, but also because it imposes no linear-order restriction on individual preferences.
After that, questions similar to the above two problems for Condorcet cycles under linear preference rankings remain largely unexplored. Our work fills this gap by providing tight bounds under both uniform and non-uniform distribution on rankings, namely, under impartial culture and the Luce model.

\subsection{Related Work} 
Concerns have been raised about LLMs regarding fairness and potential biases, which have been extensively discussed in recent literature \citep{bender2021dangers}. These issues include gender bias \citep{kotek2023gender}, political bias \citep{feng2023pretraining}, and language bias \citep{luo2024perspectival}. One contributing factor to these biases arises from RLHF---\citet{xiao2024algorithmic} observed that conventional RLHF methods often converge to a single preference type, a phenomenon known as preference collapse. Additionally, a recent study \citep{li2024entropic} analyzed diversity in the supervised fine-tuning phase, further highlighting the challenges in maintaining preference diversity.

Several preference fine-tuning methods have been proposed for RLHF. The work of \citet{li2023remax} showed that proximal policy optimization (PPO)~\citep{schulman2017proximal} does not fully leverage the advantages of RLHF in aligning LLMs with human preferences. The work of \citet{tang2024understanding} examined the performance gap between online and offline algorithms for alignment tasks, while \citet{ye2024theoretical} introduced an online iterative RLHF algorithm that incorporates a general preference model. 

A growing body of work studies LLM alignment through the lens of social choice theory, seeking to understand when alignment objectives faithfully aggregate heterogeneous human preferences. Recent studies formalize axiomatic desiderata for alignment from human feedback \citep{siththaranjan2023distributional,ge2024axioms,golzdistortion}, analyze reward-based alignment methods under social-choice criteria \citep{xiao2025theoretical}, and connect alignment mechanisms to classical solution concepts such as maximal lotteries \citep{maura2025jackpot}. In subsequent work, we further investigate the Condorcet and Smith consistency of game-theoretic alignment approaches under noisy preference matrices \citep{shi2025fundamental}.

In social choice theory, the probabilities of the existence of Condorcet cycles and Condorcet winning responses have been studied for decades \citep{guilbaud1952theories, may1971some,kelly1974voting,bell1978happens, gehrlein2004probability, gehrlein2006condorcet} under various assumptions. These studies primarily focus on electoral settings, where the number of candidates (responses) $n$ is relatively small, while the number of voters (labelers) $m$ is large and is often assumed to tend to infinity in analysis.

The remainder of the paper is structured as follows. In Section \ref{sec:benefit}, we demonstrate the limitation of reward models for representing human preferences. Section \ref{sec:align} investigates under what conditions NLHF would preserve minority preferences. 
Section \ref{sec:upper bound proof} and Section \ref{sec:lower bound proof} provide proof sketches of our main technical contributions, Theorem \ref{thm:no Condorcet winning} and Theorem \ref{thm:no Condorcet winning lower bound} repsectively.
Finally, we conclude the paper with a discussion in Section \ref{sec:discuss}.

\section{Impossibility of Reward Models for Preference Representation}
\label{sec:benefit}  

In this section, we analyze the statistical limits of using reward models to capture human preferences. Specifically, we will demonstrate the following results:
\begin{enumerate}
    \item Human preferences can be modeled by a reward model if and only if there does not exist a Condorcet cycle in the preference structure. 
    \item Under mild assumptions, the probability of a Condorcet cycle emerging approaches one exponentially as the number of responses increases.
\end{enumerate}

\subsection{Preliminaries of RLHF} We briefly introduce the RLHF framework. Let $\pi(y\mid x)$ denote the LLM's distribution over responses $y\in\mathcal{Y}$ given a prompt $x\in\mathcal{X}$. Let $\pi_{\mathrm{ref}}$ denote a reference LLM, and let $\rho$ be a distribution over prompts. In practice, \( \rho \) can be either a fixed database of prompts \citep{rafailov2023direct} or adaptively updated based on prior responses \citep{xiong2023iterative}.

In RLHF, human preferences are typically modeled using the BTL model, which assumes the existence of a reward model $r(x,y)$ such that:
\begin{equation}
\label{eq:BT}
\cP(y \succ y'\mid x)=\frac{\exp(r(x,y))}{\exp(r(x,y))+\exp(r(x,y'))}.
\end{equation}
The standard RLHF objective \citep{Ouyang2022} is
\begin{equation}\label{eq:prlloss}
\max_\pi \mathbb{E}_{x\sim\rho}\left[\mathbb{E}_{y\sim\pi(\cdot\mid x)} r(x,y)-\tau \operatorname{KL}(\pi (\cdot\mid x)\| \pi_{\textnormal{ref}}(\cdot\mid x)) \right],
\end{equation}
where $\operatorname{KL}(\cdot\|\cdot)$ is the Kullback–Leibler divergence. For notational simplicity, we omit the dependence on the prompt \( x \) when it does not cause confusion in the following discussion. The formulation can also be interpreted as maximizing the reward subject to a constraint that the KL divergence between $\pi$ and $\pi_{\textnormal{ref}}$ remains small. In practice, $\tau$ is typically small; accordingly, in our statistical analysis we focus on the regime where $\tau \to 0$.

\subsection{Reward Model and Condorcet Cycle}
To facilitate our analysis, we begin with the following standard assumptions commonly used in the social science study of human preferences \citep{guilbaud1952theories, may1971some,bell1978happens, gehrlein2004probability,gehrlein2006condorcet}.

\begin{assumption}[Linear Preference Ranking]\label{ass:individual}
Given a prompt $x$ and its associated $n$ responses ${y_1,y_2,\ldots,y_n}$, an individual (labeler) expresses a preference in the form of a ranking of these $n$ responses. 
\end{assumption}

Assumption \ref{ass:individual} states that each individual has a rational preference, meaning they maintain a strict ranking over the \( n \) responses. Individuals cannot have cyclic preferences, such as \( y_1 \succ y_2 \succ y_3 \succ y_1 \), as this would make it impossible to align with a single individual, let alone a group. Given this assumption, individuals can express up to \( n! \) distinct preference rankings.  

The aggregate preference \( \mathcal{P}(y \succ y') \) is defined as the expected fraction of individuals who prefer \( y \) over \( y' \). Naturally, it follows that \( \mathcal{P}(y \succ y') + \mathcal{P}(y' \succ y) = 1 \). For any distinct responses \( y \) and \( y' \), we say that \( y \) is preferred over \( y' \) if \( \mathcal{P}(y \succ y') > \frac{1}{2} \). In parts of the analysis, we impose the following assumption:
\begin{assumption}[No-Tie]
\label{ass:strict_preference}
For any distinct responses \( y \) and \( y' \), we assume that \( \mathcal{P} (y \succ y') \neq 1/2 \).  
\end{assumption}  
This assumption is practical in real-world settings. First, if the number of labelers is odd, it automatically holds. Even in cases where a tie occurs, it can always be resolved through a more precise comparison without affecting the overall distribution or our theoretical analysis.

Having introduced our assumptions, we now turn our attention to the reward model. At its core, a reward model should reflect how labelers prefer certain responses, meaning that whenever the preference model satisfies $\cP(y \succ y') > \frac{1}{2}$, the  reward must satisfy $r(y)>r(y')$, leading to the following definition:

\begin{definition}[Reward-Consistent Preference]\label{def:what_is_reward}
A reward model $r: \mathcal{Y}\rightarrow \mathbb{R}$ captures a preference $\cP(y \succ y')$ if for any two distinct responses $y$ and $y'$, $\cP(y\succ y')>1/2$ implies $r(y)>r(y')$. 
% \kz{faithfully represents?}
\end{definition}

In this definition, to maintain generality, we do not assume a specific functional relationship between the preference model and the reward model. Rather than restricting the preference-reward relationship, we focus on the most general reward-based framework for modeling preferences, ensuring only that the underlying decision problem remains consistent. Notably, \( \mathcal{P}(y \succ y') \) is not required to be translation-invariant with respect to \( r(y) - r(y') \) as in the BT model. Our Definition~\ref{def:what_is_reward} imposes the weakest possible constraint on reward-based preferences, thereby encompassing the broadest class of such models. As a result, our impossibility result is particularly strong. The BT model \eqref{eq:BT} used in practice is a specific instance of a reward model within our proposed class.

From a decision-theoretic perspective, this definition establishes an equivalence between two decision-making paradigms: one induced by the utility function (reward model) \citep{tadelis2013game} and another defined by the preference relation \citep{ismail2023existence}. 

Given a reward model, for any three distinct responses $y_{i_1}$, $y_{i_2}$, and $y_{i_3}$, if $r(y_{i_1})>r(y_{i_2})$ and $r(y_{i_2})>r(y_{i_3})$, then $r(y_{i_1})>r(y_{i_3})$ by transitivity.
Therefore, any preference model induced by a reward model must be \emph{acyclic} (or \emph{transitive}): if $\cP(y_{i_1}\succ y_{i_2})>\frac{1}{2}$ and $\cP(y_{i_2}\succ y_{i_3})>\frac{1}{2}$, then $\cP(y_{i_1}\succ y_{i_3})>\frac{1}{2}$ must hold. However, many real-world preferences are cyclic, as demonstrated by Condorcet paradox \citep{gehrlein2006condorcet}.

\begin{example}[Condorcet paradox \citep{gehrlein2006condorcet}]\label{ex:condorcet}
   Suppose one-third of the population prefers $y_1\succ y_2\succ y_3$, one-third prefers $y_2\succ y_3\succ y_1$, and the remaining third prefers $y_3\succ y_1\succ y_2$. In this case we have $\cP(y_1\succ y_2)=\frac{2}{3}$, $\cP(y_2\succ y_3)=\frac{2}{3}$, and $\cP(y_3\succ y_1)=\frac{2}{3}$. Therefore this preference relationship is cyclic.

\end{example}
 In this case, cyclic human preferences cannot be captured by any reward model. Condorcet paradox demonstrates that cyclic preferences can arise from simple scenarios: just three labelers with different preferences can produce a cyclic group preference. Inspired by this observation, we introduce the following definition of Condorcet $r$-cycles:
\begin{definition}[Condorcet Cycle]
A sequence of responses $y_{i_1},\ldots,y_{i_r}$ forms a Condorcet $r$-cycle if $\cP(y_{i_p}\succ y_{i_{p+1}})>\frac{1}{2}$ for all $p=1,\ldots,r$, where $y_{i_{r+1}}:=y_{i_1}$ by convention.
\end{definition}

To characterize the capacity of reward models precisely, we show that transitivity is not only necessary but also sufficient for a preference model to be induced from a reward model, as detailed in Theorem \ref{thm:preferencetoreward}. 

\begin{restatable}[Necessary and Sufficient Conditions for Reward Modeling]{theorem}{preferencetoreward}\label{thm:preferencetoreward}
    Under Assumption~\ref{ass:strict_preference}, for any set of responses $\{y_1,\ldots,y_n\}$ with $n\geqslant 3$ and any preference $\cP(y\succ y')$ defined on this set, there exists a reward model $r(y)$ that captures the preference $\cP(y\succ y')$ if and only if there is no Condorcet cycle in the set of responses.
\end{restatable}

\begin{remark}

The non-existence of a reward model has previously been established through the lens of social welfare theory, where the reward model $r(y)$ can be interpreted as a social welfare function.
Arrow's impossibility theorem demonstrates that when there are at least three choices, no social welfare function can simultaneously satisfy three conditions: non-imposition, non-dictatorship, and independence of irrelevant alternatives \citep{wilson1972social}.
Here, in contrast, we require that the reward model represents the preference relationship, and we establish the equivalence between the non-existence of a reward model and cyclic preferences. 
\end{remark}

To conclude this subsection, we demonstrate that the non-existence of Condorcet cycles can be simplified to checking for Condorcet triangles, which are defined as Condorcet 3-cycles. The following proposition establishes that the existence of any Condorcet cycle implies the existence of a Condorcet triangle. 

\begin{restatable}{proposition}{cycleequivalenttotriangle}
\label{prop: cycle equivalent to triangle}
Let ${y_1,\ldots,y_n}$ be a set of responses with $n\geqslant 3$. If this set contains a Condorcet cycle, then it must also contain a Condorcet triangle.
\end{restatable}
Conversely, if a set of responses contains a Condorcet triangle, it automatically contains a Condorcet cycle, that is, the triangle itself. Therefore, calculating the probability that human preferences cannot be captured by a reward model reduces to estimating the probability of a Condorcet cycle existing, which further reduces to estimating the probability of a Condorcet triangle existing. We will discuss this probability analysis in the next subsection.

\subsection{Probability of Condorcet Cycle} 

Next, we analyze the probability of a Condorcet cycle occurring given \( m \) labelers and \( n \) responses, denoted as \( \mathbb{P}_{m,n}(\text{Condorcet cycle}) \). To get an initial understanding of this probability, we begin with the case where $n=3$.
\begin{proposition}\label{prop:n=3 transitivity}
Under Assumption~\ref{ass:individual},
let $\alpha_1$ represent the proportion of labelers who prefer $y_1 \succ y_2 \succ y_3$, 
$\alpha_2$ the proportion who prefer $y_2 \succ y_3 \succ y_1$, 
$\alpha_3$ the proportion who prefer $y_3 \succ y_1 \succ y_2$, 
$\alpha_4$ the proportion who prefer $y_2 \succ y_1 \succ y_3$, 
$\alpha_5$ the proportion who prefer $y_3 \succ y_2 \succ y_1$, and 
$\alpha_6$ the proportion who prefer $y_1 \succ y_3 \succ y_2$, where $\alpha_i \geqslant 0$ for $i = 1, \dots, 6$, 
and $\alpha_1 + \alpha_2 + \alpha_3 + \alpha_4 + \alpha_5 + \alpha_6 = 1$. Then, the overall preference relationship is cyclic if and only if the following conditions hold:
\[
\begin{cases}
    \alpha_1 + \alpha_2 + \alpha_4 > \frac{1}{2}, \\
    \alpha_2 + \alpha_3 + \alpha_5 > \frac{1}{2}, \\
    \alpha_3 + \alpha_1 + \alpha_6 > \frac{1}{2},
\end{cases}
\text{ or }
\begin{cases}
    \alpha_1 + \alpha_2 + \alpha_4 < \frac{1}{2}, \\
    \alpha_2 + \alpha_3 + \alpha_5 < \frac{1}{2}, \\
    \alpha_3 + \alpha_1 + \alpha_6 < \frac{1}{2}.
\end{cases}
    \]
\end{proposition}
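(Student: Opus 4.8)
The plan is to reduce the statement to a direct computation of the three pairwise margins and then recognize that ``cyclic'' is exactly the statement that all three margins point the same way around the triangle $y_1 \to y_2 \to y_3 \to y_1$.

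First I would write out the three pairwise preference probabilities in terms of the $\alpha_i$. Counting which rankings place $y_1$ above $y_2$: these are $y_1 \succ y_2 \succ y_3$, $y_2 \succ y_3 \succ y_1$ (here $y_1$ is last, so \emph{not} above $y_2$ — I must be careful), $y_3 \succ y_1 \succ y_2$, $y_1 \succ y_3 \succ y_2$. So $\cP(y_1 \succ y_2) = \alpha_1 + \alpha_3 + \alpha_6$, and hence $\cP(y_2 \succ y_1) = \alpha_2 + \alpha_4 + \alpha_5$. Similarly $\cP(y_2 \succ y_3) = \alpha_1 + \alpha_2 + \alpha_4$ (rankings with $y_2$ above $y_3$: $y_1\succ y_2\succ y_3$, $y_2\succ y_3\succ y_1$, $y_2\succ y_1\succ y_3$), so $\cP(y_3 \succ y_2) = \alpha_3 + \alpha_5 + \alpha_6$; and $\cP(y_3 \succ y_1) = \alpha_2 + \alpha_3 + \alpha_5$, so $\cP(y_1 \succ y_3) = \alpha_1 + \alpha_4 + \alpha_6$. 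I would double-check each of these six expressions against the list of six rankings to make sure the bookkeeping is exact, since this is the only place an error could creep in.

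Next I would unpack the definition of a cyclic preference relation among three responses. With three distinct responses there are, up to relabeling the direction of the cycle, exactly two possible cycles: $y_1 \succ y_2 \succ y_3 \succ y_1$ and $y_1 \succ y_3 \succ y_2 \succ y_1$ (equivalently $y_2 \succ y_1 \succ y_3 \succ y_2$, etc.). The first cycle means $\cP(y_2\succ y_3) > \tfrac12$, $\cP(y_3\succ y_1) > \tfrac12$, $\cP(y_1\succ y_2) > \tfrac12$, i.e.\ $\alpha_1+\alpha_2+\alpha_4 > \tfrac12$, $\alpha_2+\alpha_3+\alpha_5 > \tfrac12$, $\alpha_1+\alpha_3+\alpha_6 > \tfrac12$ — this is the left system. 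The second (reversed) cycle means the three reversed inequalities hold, $\cP(y_3\succ y_2) > \tfrac12$, $\cP(y_1\succ y_3) > \tfrac12$, $\cP(y_2\succ y_1) > \tfrac12$, and using $\cP(a\succ b) = 1 - \cP(b\succ a)$ these become $\alpha_1+\alpha_2+\alpha_4 < \tfrac12$, $\alpha_2+\alpha_3+\alpha_5 < \tfrac12$, $\alpha_1+\alpha_3+\alpha_6 < \tfrac12$ — the right system. (I am implicitly assuming, as the paper does throughout via a ``no-tie'' convention, that none of these margins equals exactly $\tfrac12$; I would note that if a tie occurs the preference is neither cyclic nor relevant, so we may restrict to the strict case.)

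Finally I would argue the converse direction: if the preference relation is \emph{not} cyclic, then neither system holds. A preference relation on three elements that is not cyclic must be transitive, hence it linearly orders $\{y_1,y_2,y_3\}$; in any such linear order the ``top'' element beats both others, so among the three margins at least one points ``forward'' and at least one points ``backward'' around the triangle, meaning the three quantities $\alpha_1+\alpha_2+\alpha_4 - \tfrac12$, $\alpha_2+\alpha_3+\alpha_5 - \tfrac12$, $\alpha_1+\alpha_3+\alpha_6 - \tfrac12$ are not all of the same sign — so neither the all-$>$ nor the all-$<$ system can hold. This shows the two systems capture \emph{exactly} the cyclic case, completing the equivalence. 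The only real obstacle is getting the six counting expressions for $\cP(y_i \succ y_j)$ exactly right; everything after that is a mechanical translation between ``all three margins agree in direction'' and ``the relation is a $3$-cycle.''
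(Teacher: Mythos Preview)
Your proof is correct. The paper does not include a proof of this proposition, treating it as an elementary computation, and your approach---writing out the three pairwise preference probabilities $\cP(y_2\succ y_3)=\alpha_1+\alpha_2+\alpha_4$, $\cP(y_3\succ y_1)=\alpha_2+\alpha_3+\alpha_5$, $\cP(y_1\succ y_2)=\alpha_1+\alpha_3+\alpha_6$ and identifying the two displayed systems with the two possible $3$-cycles---is exactly the natural argument. Your treatment of the tie case is also fine: if any margin equals $\tfrac12$, then neither a Condorcet cycle (which requires strict inequalities) nor either strict system can hold, so both sides of the biconditional are false. One minor remark: your final paragraph (the converse via ``not cyclic $\Rightarrow$ transitive'') is redundant, since once you have matched each system to one of the two possible $3$-cycles, the equivalence is already complete; but it does no harm.
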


Through a Monte Carlo simulation, we can see that approximately 6\% of the values of $\{\alpha_i\}_{i=1}^6$ in the simplex satisfy the condition in Proposition \ref{prop:n=3 transitivity}, resulting in cyclic preferences. This demonstrates that the probability of the existence of a Condorcet cycle is not small even in the case of \( n = 3 \). When $\alpha_1=\alpha_2=\alpha_3=\frac{1}{3}$, and $\alpha_4=\alpha_5=\alpha_6=0$, we recover the Condorcet paradox described earlier. 
Beyond these synthetic examples, cyclic preferences are also found in real world data \citep{kurrild2001empirical}. Next, we consider the case for arbitrary \( n \).

The study of this problem dates back to the work of \citet{becker1967value}, who observed through simulations that \( \mathbb{P}_{m,n}(\text{Condorcet cycle}) \) remains very large for large \( n \) and all \( m \). Later, as noted in the introduction, an early contribution by \citet{blin1973intransitive} examined this problem under a model in which, for every pair of alternatives, the corresponding pairwise comparison probabilities are fixed at \(50\%\)–\(50\%\). While this assumption facilitates mathematical analysis, it is not well aligned with practical preference modeling, as it both enforces equal pairwise probabilities and allows individual preferences that are not constrained to arise from linear orderings.

To address this limitation, we consider two probabilistic assumptions under the linear preference ranking framework specified in Assumption~\ref{ass:individual}. The first is the impartial culture condition in Assumption~\ref{ass:population}, under which each ranking is uniformly distributed. The second is the Luce model in Assumption~\ref{ass:luce}, which allows for non-uniform preference distributions.

\begin{assumption}[Impartial Culture Condition \citep{guilbaud1952theories}]\label{ass:population} Given Assumption~\ref{ass:individual}, each individual independently samples a linear preference ranking with equal probability $1/n!$.\end{assumption}

The term \emph{impartial culture} (IC) was first introduced in social choice theory and is one of the most widely used assumptions in the literature. Although the uniform distribution underlying IC is strong and can be unrealistic in practice, it provides a natural starting point for probabilistic analysis. From a mathematical perspective, the resulting analysis is already nontrivial, as demonstrated by the works discussed before in the social choice literature. Below, we present our first probabilistic bound on the existence of Condorcet cycles.

\begin{restatable}{theorem}{existcondorcet}\label{thm:exist condorcet}
    Suppose the population satisfies Assumption \ref{ass:population}. Let the number of responses be $n\geqslant 3$ and the number of labelers be $m\geqslant 3$. Then,
    \[
\begin{aligned}
    &\ \mathbb{P}_{m,n}(\textnormal{Condorcet cycle})\geqslant 1-c_1e^{-c_2n},
\end{aligned}
\]
where $c_1,c_2>0$ are universal constants.
\end{restatable}

However, the IC condition requires strict uniformity between responses, which is sometimes too restrictive. In this paper, we propose to study the following Luce model (see for example \citet{borga2025permuton}):
\begin{assumption}[Luce Model]\label{ass:luce}
    Given Assumption \ref{ass:individual}, each individual independently samples a linear preference ranking from $\operatorname{Luce}(\lambda_1,\dots,\lambda_n)$, where $\lambda_1,\dots,\lambda_n$ are some positive real weights. Mathematically, for any permutation $\{i_1,\ldots,i_n\}$ of $\{1,\ldots,n\}$, the probability of each linear preference is given by:
    \begin{align*}
        \mathbb{P}\left(y_{i_1}\succ\cdots\succ y_{i_n}\right)=\frac{\lambda_{i_1}}{\lambda_{i_1}+\cdots+\lambda_{i_n}}\times \frac{\lambda_{i_2}}{\lambda_{i_2}+\cdots+\lambda_{i_n}}\times\cdots\times\frac{\lambda_{i_{n-1}}}{\lambda_{i_n}+\lambda_{i_{n-1}}}\,.
    \end{align*}
\end{assumption}
This mechanism is indeed a sequential sampling scheme without replacement: at step $t$, choose item $i_t$ from the remaining set $R_t$ with probability:
\begin{align*}
\mathbb{P}\left(i_t=j\mid R_t\right)=\frac{\lambda_j}{\sum_{k\in R_t}\lambda_k}\,,
\end{align*}
remove $i_t$ from the set, and repeat until a complete ranking $(i_1,\ldots,i_n)$ is obtained. It is worthy to note that when $\lambda_1=\dots=\lambda_n$, the Luce model reduces to the impartial culture condition.

Under the Luce model, expoential decay still holds, whereas the constant might depend on the number of labelers:
\begin{restatable}{theorem}{luceexsitcondorcet}\label{thm:luce exist condorcet}
    Suppose the population satisfies Assumption \ref{ass:luce}. 
    Let the number of responses be $n\geqslant 3$ and the number of labelers be $m\geqslant 3$. 
    In addition, suppose the weights are bounded with $\lambda_i\in[A,B]$, where $B\geqslant A>0$. 
    Then,
    \[
\begin{aligned}
    &\ \mathbb{P}_{m,n}(\textnormal{Condorcet cycle})\geqslant 1-c_1e^{-c_2n},
\end{aligned}
\]
where $c_1,c_2>0$ are constants that depend on $m,A,B$.
\end{restatable}

Notably, the above two probability results also hold for $m$ even, where Assumption~\ref{ass:strict_preference} might fail.
As \( n \to \infty \), the probability of Condorcet cycles existing approaches one exponentially at a rate of \( 1 - e^{-\textnormal{poly}(n)} \). As a result, human preferences cannot be accurately captured by a reward model with high probability. Consequently, RLHF-trained models will fail to effectively represent human preferences.

This demonstrates the necessity of using preference models directly, even under the idealized assumption of rational labelers. Moreover, since people typically do not exhibit perfect rationality in practice, the case for direct preference modeling becomes even stronger. This analysis highlights why modeling preferences directly is more appropriate than using reward-based approaches.

\section{Possibility of Preserving Diverse Preferences}
\label{sec:align}

Having shown the impossibility of representing general human preferences using reward models, in this section, we turn to the analysis of NLHF, which uses pairwise preference and is thus non-reward-based. In particular, NLHF can represent any possible human preferences. We are interested in whether this enhanced preference representation would bring benefits in preserving diverse preferences. Specifically, we examine whether NLHF can avoid preference collapse---a phenomenon observed in RLHF \citep{xiao2024algorithmic}, where LLMs tend to generate a single response if fine-tuned sufficiently long in alignment. In NLHF, this question reduces to determining when the Nash equilibria consist of mixed strategies rather than pure strategies. We will show the following results:

\begin{enumerate}
\item A Nash equilibrium in pure strategies exists if and only if there exists a Condorcet winning response (see Definition \ref{def:Condorcet winning response}) in the human preferences structure.
\item Under mild assumptions, the probability of a Condorcet winning response existing approaches zero.
\end{enumerate}
Combining these two, the Nash equilibria of NLHF are mixed strategies with a probability approaches one under mild assumptions. As a result, NLHF achieves better alignment than RLHF, which fails to preserve diverse human preferences under such conditions. 

\subsection{Preliminaries of NLHF} We first briefly introduce preference models and NLHF. Given the preference between two responses, we define the preference between two policies $\pi$ and $\pi'$ conditioned on a prompt $x$ by 
$$
\cP\left(\pi\succ \pi'\mid x\right):=\mathbb{E}_{y\sim\pi(\cdot\mid x),y'\sim\pi'(\cdot\mid x)}\left[\cP\left(y\succ y'\mid x\right)\right]\,,
$$
and given a distribution $\rho$ over prompts, we define the preference between two policies $\pi$ and $\pi'$ by 
$$
\cP\left(\pi\succ \pi'\right):=\mathbb{E}_{x\sim\rho}\left[\cP\left(\pi\succ \pi'\mid x\right)\right]\,.
$$
NLHF is a framework for aligning LLMs with human preferences by formulating model training as a two-player zero-sum game. Unlike the reward-based methods, NLHF ensures that the learned policy is resistant to adversarial preference comparisons, leveraging the concept of Nash equilibrium from game theory.
For a given prompt $x$, the LLM's policy $\pi$ competes against an opposing policy $\pi'$ in a pairwise preference contest, where the objective is to find a policy that maximizes its worst-case preference score. Formally, NLHF solves the following min-max optimization problem:
\begin{equation} \label{eq:ne} \max_\pi \min_{\pi'} \mathcal{P}(\pi \succ \pi').\end{equation}
As in RLHF, the practical objective of NLHF also includes two KL divergence terms between $\pi$ (or $\pi'$) and $\pi_{\textnormal{ref}}$, with coefficients $\tau$ and $\tau'$, respectively. In our analysis, we focus on the regime where both coefficients tend to zero, and we retain the objective in Eq.~(\ref{eq:ne}).

\subsection{Nash Equilibria and Condorcet Winning Responses} We begin with the main question of this section: When are the Nash equilibria of NLHF mixed strategies rather than pure strategies? Based on our analysis in Section~\ref{sec:benefit}, it is natural to hypothesize that Nash equilibria are mixed strategies when human preferences contain Condorcet cycles. This hypothesis seems to be supported by the Condorcet paradox in the previous section. 
\begin{example}\label{eg:condorcet mixed}
    Let the human preferences over three responses be given by the matrix in Table \ref{table:condorcet mixed}, where $\alpha_{12},\alpha_{23},\alpha_{31}>0.5$. This preference structure contains a Condorcet cycle $y_1\succ y_2\succ y_3\succ y_1$. In this case, no pure strategy Nash equilibrium exists, and the Nash equilibrium must be mixed, with positive probability assigned to all three responses.
    \begin{table}[htbp]
    \caption{Payoff matrix with three responses $\{y_1,y_2,y_3\}$.}
    \label{table:condorcet mixed}
    \centering
    \scalebox{1.1}{
    \begin{tabular}{c | c  c  c }
          $\cP(y\succ y')$ &  
          $y'=y_1$ & 
          $y'=y_2$ &
          $y'=y_3$
         \\\hline
         $y=y_1$ &
         0.5 &
         $\alpha_{12}$ &
         $1-\alpha_{31}$
         \\
         $y=y_2$ &
         $1-\alpha_{12}$ &
         0.5 &
         $\alpha_{23}$
         \\
         $y=y_3$ &
         $\alpha_{31}$ &
         $1-\alpha_{23}$ &
         0.5
         \\ 
    \end{tabular}
    }
\end{table}

\end{example}
However, the following example demonstrates that a Nash equilibrium in pure strategies can exist even in the presence of a Condorcet cycle, disproving the initial hypothesis.
\begin{example}\label{eg:Condorcet winning response}
    Let the human preferences over four responses be given by the matrix in Table \ref{table:Condorcet winning response}. This preference structure contains a Condorcet cycle $y_1\succ y_2\succ y_3\succ y_1$. The pure strategy $(y_4,y_4)$ is a Nash equilibrium in pure strategies.
    \begin{table}[htbp]
    \caption{Payoff matrix with four responses $\{y_1,y_2,y_3,y_4\}$.}
    \label{table:Condorcet winning response}
    \centering
    \scalebox{1.1}{
    \begin{tabular}{c | c  c  c  c }
          $\cP(y\succ y')$ &  
          $y'=y_1$ & 
          $y'=y_2$ &
          $y'=y_3$ &
          $y'=y_4$
         \\\hline
         $y=y_1$ &
         0.5 &
         $0.51$ &
         $0.46$ &
         0.47
         \\
         $y=y_2$ &
         $0.49$ &
         0.5 &
         $0.51$ &
         0.48
         \\
         $y=y_3$ &
         $0.54$ &
         $0.49$ &
         $0.5$ &
         0.49
         \\
         $y=y_4$ &
         $0.53$ &
         $0.52$ &
         0.51 &
         0.5
         \\ 
    \end{tabular}
    }
\end{table}
\end{example}
In this case, \( (y_4, y_4) \) is a Nash equilibrium due to the unbeaten status of \( y_4 \), as \( \mathcal{P}(y_4 \succ y_i) > 1/2 \) for \( i = 1,2,3 \). This observation leads to our second hypothesis: Nash equilibria consist of mixed strategies when no such response exists. This is essentially the concept of a Condorcet winning response. We formalize this notion with the following definition:
\begin{definition}[Condorcet Winning Response]\label{def:Condorcet winning response}  
A response \( y^\star \) is called a Condorcet winning response if \( \mathcal{P}(y^\star \succ y) > 1/2 \) for all \( y \neq y^\star \).  
\end{definition}

\begin{remark}
The concept of a Condorcet winning response corresponds to the winner of an election in traditional social choice theory. In the literature, such a winner has been referred to by various names, including the outright winner \citep{may1971some} and the pairwise majority rule winner \citep{gehrlein2006condorcet}, but they all describe the same fundamental idea. In the context of LLMs, where the ranked alternatives are responses rather than candidates, we use the term Condorcet winning response to maintain clarity and avoid confusion.
\end{remark}

By the definition, there cannot exist two Condorcet winning responses for a
single human preferences structure, which leads to the following proposition.

\begin{restatable}{proposition}{zerooneondorcetwinningresponse}\label{prop: 0/1 Condorcet winning response}
    There is either 0 or 1 Condorcet winning response.
\end{restatable}
Building upon this, we characterize when NLHF solutions avoid pure strategies by establishing a necessary and sufficient condition based on Condorcet winning responses.

\begin{restatable}{theorem}{Condorcetwinningresponse}\label{thm:Condorcet winning response} Under Assumption~\ref{ass:strict_preference}, there exists a Nash equilibrium in pure strategies if and only if there exists a Condorcet winning response. 
    Moreover, when there exists a Condorcet winning response, the Nash equilibrium is unique and corresponds to this Condorcet winning response.
\end{restatable}

A similar result by \citet{duersch2012pure} showed that Nash equilibrium in pure strategies exists in a symmetric two-player zero-sum game if and only if it is not a generalized rock-paper-scissors matrix (gRPS). Their notion of gRPS is equivalent to games without any Condorcet winning response in our setting. 

In the context of NLHF, our Theorem~\ref{thm:Condorcet winning response} strengthens this result by further proving that when such an equilibrium exists, it is unique. This refinement is significant, as uniqueness ensures stability in equilibrium selection. In traditional social choice theory, a social choice function is said to be Condorcet consistent if it returns the Condorcet winning response whenever one exists. Thus, our Theorem~\ref{thm:Condorcet winning response} directly leads to the following proposition:
\begin{proposition}  
Under Assumption~\ref{ass:strict_preference}, the Nash equilibrium of NLHF, when considered as a function of human preferences, is Condorcet consistent.  
\end{proposition}

Therefore, calculating the probability of having only mixed-strategy Nash equilibria in NLHF reduces to estimating the probability of having no Condorcet winning response. We will discuss this probability analysis in the next subsection.

\subsection{Probability of No Condorcet Winning Response} 
Now, we analyze the probability of a Condorcet winning response occurring given \( m \) labelers and \( n \) responses. 
Previous studies of this probability are limited to the IC condition (Assumption \ref{ass:population}). In 1968, \citet{garman1968paradox} conjectured that the probability converges to 0 as \( n \to \infty \) for all \( m \geq 3 \). This conjecture was later resolved by \citet{may1971some}, who proved that the probability decreases at a rate of \( O(1/\sqrt{n}) \) for all \( m \geq 3 \) and at a rate of \( O(1/n) \) for \( m = \infty \). Recently, \citet{sauermann2022probability} obtained the tight rate $\Theta\left(n^{-\frac{m}{\lceil m/2\rceil} + 1}\right)$ for odd $m$ under the IC condition. 
However, the proof techniques in \citet{sauermann2022probability} highly relies on the equal probability between rankings, and it is unclear whether the established rate still holds true when the probability distribution deviates from strict equality.  

Surprisingly, our results reveal that the same rate  holds for any Luce model with bounded weights, up to logarithm factors. As a remark, the probability results in this section also hold for $m$ even, where Assumption~\ref{ass:strict_preference} might fail.

\begin{theorem}\label{thm:no Condorcet winning}
    Suppose the population satisfies Assumption \ref{ass:luce}. Let the number of responses be $n+1$, the number of labelers be $m\geqslant 3$, and define $l=\lceil m/2 \rceil$. 
    In addition, suppose the weights are bounded with $\lambda_i\in[A,B]$, where $B\geqslant A>0$.
    Then 
    \begin{align*}
        &\ \mathbb{P}_{m,n}(\textnormal{Condorcet winning response})\\
        \leqslant &\ (m+1)(n+1)n^{-\frac{m}{l}}\left(\log n\right)^{\frac{2m}{l}}\left(\log n+1\right)^{m} \left(\frac{2B}{A}\right)^{2m}+ \frac{m(m-1)(n+1)B^2}{A^2n^2} \\
        &\ \quad \quad + e^{\frac{B}{A}m}(n+1)^{(3+\frac{B}{A})m+1}\cdot\left(\exp\left(-\frac{3(\log n)^2}{16}\right)+\exp\left(-\left(\frac{1}{2}\right)^l(\log n)^2\right)\right)\,.
    \end{align*}
\end{theorem}

We discuss the challenges in bounding this probability and outline the main idea of the proof in Section~\ref{sec:upper bound proof}. In practical language modeling scenarios, the number of labelers, i.e., $m$, is finite, while the number $n$ of generated responses can be infinite. Therefore, we analyze the probability in the regime where \( m \) is fixed and \( n \to \infty \). Under this regime and when $A,B$ are fixed constants, the third term decays super-polynomially with \( n \), while both the first and second terms decay polynomially with \( n \). Moreover, the first term decays more slowly than the second term and satisfies
\[
\text{the first term}= \tilde{O}\left(n^{-\frac{m}{\lceil m/2\rceil}+1}\right)\,,
\]
which dominates the other two terms, matching the rate established under IC condition. When $A=\min_i\lambda_i$ and $B=\max_i\lambda_i$ is allowed to be dependent on $n$, we can still apply Theorem \ref{thm:no Condorcet winning} and obtain a different rate.

\begin{theorem}\label{thm:no Condorcet winning lower bound}
    Suppose the population satisfies Assumption \ref{ass:luce}. Let the number of responses be $n+1$, the number of labelers be $m\geqslant 3$, and define $l=\lceil m/2\rceil$. 
    In addition, suppose the weights are bounded with $\lambda_i\in[A,B]$, where $B\geqslant A>0$.
    Then for some universal constant $c$, 
    \begin{align*}
        \mathbb{P}_{m,n}(\textnormal{Condorcet winning response})
        \geqslant c\cdot\left(\frac{A}{B}\right)^m\cdot\frac{1}{2^{m+\frac{m}{l}}\binom{m}{l}^{\frac{m}{l}}}\cdot\frac{n+1}{n^{\frac{m}{l}}}\,.
    \end{align*}
\end{theorem}

In practical language modeling scenarios where \( m \) is fixed and \( n \to \infty \), our main results simplify to the following corollary:
\begin{corollary}\label{corollary:Condorcet winning}
    Suppose the population satisfies Assumption \ref{ass:luce}. Let the number of responses be $n+1$ and the number of labelers be $m\geqslant 3$, and define $l=\lceil m/2\rceil$. 
    In addition, suppose the weights are bounded with $\lambda_i\in[A,B]$, where $B\geqslant A>0$. Then
    \[\mathbb{P}_{m,n}(\textnormal{Condorcet winning response})= \Tilde{\Theta}\left(n^{1-\frac{m}{l}}\right)\to 0 \quad \text{as } n \to \infty.\]
    Furthermore, the probability that no Condorcet-winning response exists approaches one as $n\to\infty$. In this case, the Nash equilibria of NLHF are mixed strategies.
\end{corollary}

We conduct simulations to confirm this rate. Specifically, we consider a Luce model with weights \(\lambda_i=\exp\left(\lambda_p\,\mathds{1}\{i=1\}\right)\) for $i\in\{1,\ldots,n\}$. As shown in Figure~\ref{fig:simulation-probability-short}, the empirical probabilities match our theoretical rate. We provide more simulation details and results in Appendix.

\begin{figure}[htbp]
    \centering
    \includegraphics[width=0.96\linewidth]{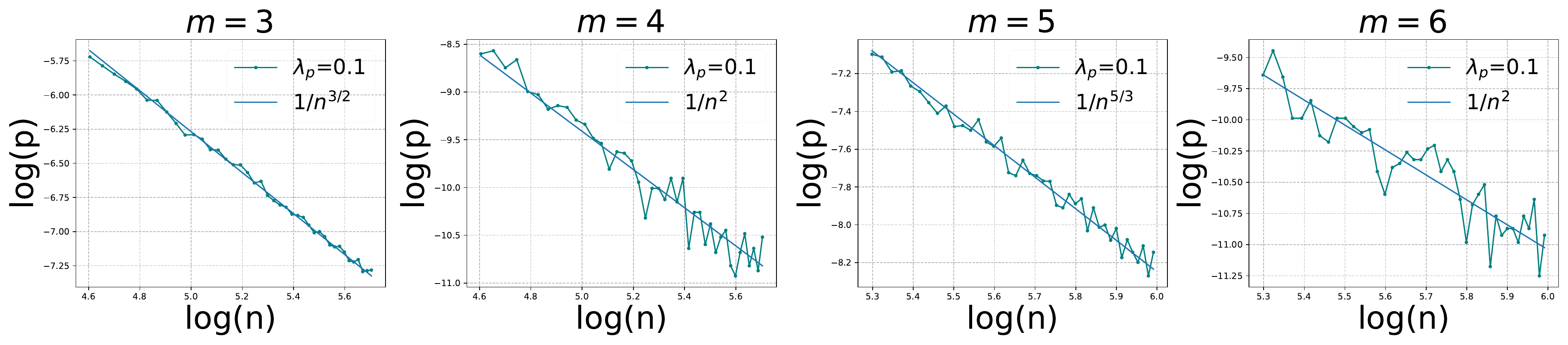}
    \includegraphics[width=0.96\linewidth]{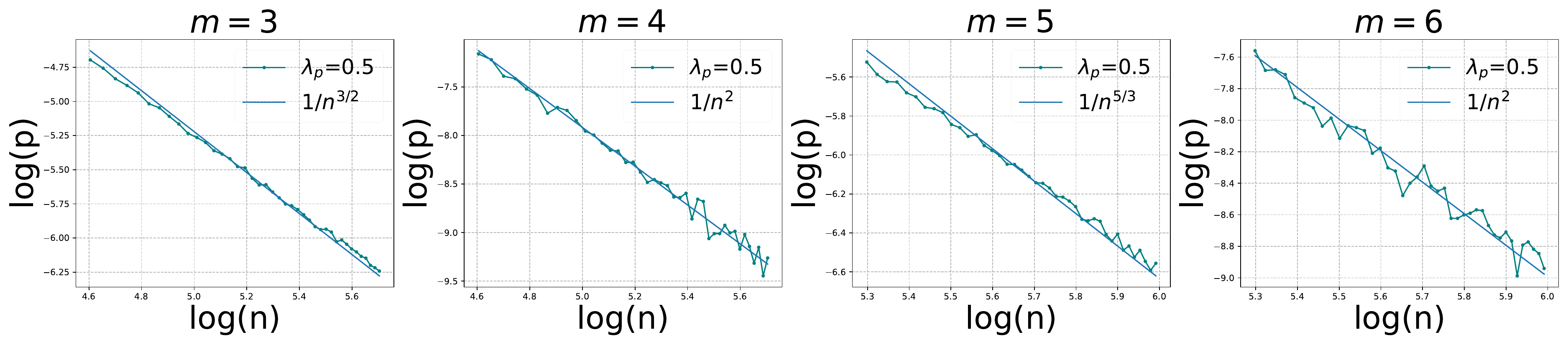}
    \caption{Simulation results for the probability that $y_1$ is the Condorcet winning response under the Luce model with varying $\lambda_p$, $m$, and $n$. The empirical estimates align closely with our theoretical rate $\tilde{\Theta}(n^{-\frac{m}{l}})$.}
    \label{fig:simulation-probability-short}
\end{figure}

Finally, we investigate the setting in which some individuals are allowed to have arbitrary preferences, including the possibility that their preferences are not chosen independently. We show that as long as the proportion of individuals that follow the Luce model is larger than a half, the probability that there exists a Condorcet winning response still tends to zero.
\begin{restatable}{corollary}{determinsticpreferencesluce}\label{coro:determinstic-preferences-luce}
     Let the number of responses be $n+1$ and the number of labelers be $m\geqslant 3$, and define $l=\lceil m/2\rceil$. For $0\leqslant k<m-l$, suppose that $m-k$ individuals follows Assumption \ref{ass:luce} with bounded weights $\lambda_i\in[A,B]$, where $B\geqslant A>0$, and the remaining $k$ individuals have arbitrary linear preferences (satisfying Assumption~\ref{ass:individual}), independent of the first $m-k$ individuals. Then, 
    \begin{align*}
        &\ \mathbb{P}_{m,n}\left(\text{Condorcet winning response}\right)\\
        &\ \leqslant (m-k+1)n^{-\frac{m-k}{l}}(n+1)\left(\log n\right)^{\frac{2m-2k}{l}}\left(\log n+1\right)^{m-k} \left(\frac{2B}{A}\right)^{2m-2k}\\
        &\ \ \ + \frac{(m-k)(m-k-1)(n+1)B^2}{A^2n^2} \\
        &\ \ \ + e^{\frac{B}{A}(m-k)}(n+1)^{(3+\frac{B}{A})(m-k)+1}\cdot\left(\exp\left(-\frac{3(\log n)^2}{16}\right)+\exp\left(-\left(\frac{1}{2}\right)^l(\log n)^2\right)\right)\,.
    \end{align*}
    Furthermore, since $(m-k)/l>1$, $\mathbb{P}_{m,n}\left(\text{Condorcet winning response}\right)\to 0$ as $n\to\infty$.
\end{restatable}

\subsection{An Overview of NLHF vs. RLHF} Now, we can make a comprehensive comparison between the solutions of NLHF and RLHF in different scenarios. To properly categorize the scenarios, we first discuss the relationship between Condorcet cycles and Condorcet winning responses. The general relationship between the existence of Condorcet cycles and Condorcet winning responses is the following:
\begin{restatable}{proposition}{condorcetandcondorcetwinningresponse}\label{prop: condorcet and Condorcet winning response}
    Under Assumption~\ref{ass:strict_preference}, suppose that there are $n$ ($n\geqslant 3$) responses $y_1,\ldots, y_n$. If there is no Condorcet winning response, there must exist a Condorcet cycle. In other words, if there is no Condorcet cycle, then there must exist a Condorcet winning response.
\end{restatable}
On the other hand, if there exists a Condorcet winning response, we cannot determine whether the preference structure contains Condorcet cycles. Building upon this, we can categorize human preferences into four scenarios. First, depending on whether human preferences contain Condorcet cycles, we can divide them into two cases based on whether the preferences can be modeled by a reward model. If they cannot be modeled by a reward model, we further divide into two cases based on whether a Condorcet winning response exists. If they can be modeled by a reward model, we further divide into two cases based on whether they can be modeled by the BTL model. Finally, this gives us four scenarios. An illustration is provided in Figure~\ref{fig:preference_tree_intro}.

\subsubsection*{Scenario 1: BTL Reward Model} We first consider the scenario where human preferences can be modeled by the BTL model. In this setting, there are no Condorcet cycles in the human preferences, thus there must exist a Condorcet winning response. Indeed, the Condorcet winning response in this case is the response with maximum BTL reward. As a result, when fully optimized to the optimal solution, both RLHF and NLHF\footnote{Again, under the regime where the coefficient $\tau$ of the KL term tends to zero. In addition, for RLHF, this regime can be broadened to cases where the output of $\pi_{\textnormal{ref}}$ collapses to a single response for any $\tau$ \citep{xiao2024algorithmic}.} converge to this response. Below, we formally state this result.

\begin{restatable}{theorem}{BTLNLHF}\label{thm:BTL NLHF}
Under Assumption~\ref{ass:strict_preference}, suppose there are \( n \) responses, and human preferences over these responses are modeled by the BTL model. In this case, a Condorcet winning response exists. Consequently, the Nash equilibrium of NLHF and the optimal solution of RLHF both correspond to this Condorcet winning response.
\end{restatable}
% The proof is provided in Appendix \ref{pf:BTL NLHF}.
\subsubsection*{Scenario 2: Non-BTL Reward Model} For RLHF, since the human preferences cannot be modeled by the BTL model, Step 2 (reward learning) in RLHF only learns an approximated BTL reward, and the RLHF solution converges to a response with maximum approximated reward. Notice that such a response might not be the Condorcet winning response. Below we provide an example.

\begin{example}  
\label{exam:rlhf_solution}
Given a prompt, suppose five labelers rank three responses \( \{y_1, y_2, y_3\} \). Two labelers prefer \( y_1 \succ y_2 \succ y_3 \), while the remaining three prefer \( y_3 \succ y_1 \succ y_2 \).
\end{example}
Under the pairwise majority rule, \( y_3 \) emerges as the Condorcet winning response, as it is preferred over any other response in pairwise comparisons. However, \( y_1 \) has the highest average ranking, corresponding to the highest approximated reward. As a result, a fully optimized RLHF-trained model will collapse to always selecting \( y_1 \). Thus, Example~\ref{exam:rlhf_solution} directly leads to the following proposition:
\begin{proposition}
The optimal solution of RLHF, when considered as a function of human preferences, is not Condorcet consistent.
\end{proposition}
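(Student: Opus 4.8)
The plan is to certify the instance of Example~\ref{exam:rlhf_solution} as an explicit counterexample to Condorcet consistency, which is advertised to be all that is needed. Recall that a rule mapping human preferences to a policy is \emph{Condorcet consistent} if, whenever a Condorcet winning response $y^\star$ exists, the rule outputs the pure strategy that always returns $y^\star$. So it suffices to produce one preference profile in which (i) a Condorcet winning response exists, yet (ii) the optimal RLHF policy is not the point mass at it. Take the profile of Example~\ref{exam:rlhf_solution}: five labelers, two reporting $y_1\succ y_2\succ y_3$ and three reporting $y_3\succ y_1\succ y_2$.

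\emph{Step 1 (the Condorcet winner is $y_3$).} Aggregating the five rankings pairwise gives $\cP(y_3\succ y_1)=\tfrac35$, $\cP(y_3\succ y_2)=\tfrac35$ and $\cP(y_1\succ y_2)=1$, all exceeding $\tfrac12$; hence $y_3$ beats every other response by a majority, so it is a Condorcet winning response, and it is the only one by Proposition~\ref{prop: 0/1 Condorcet winning response}. Consequently Condorcet consistency would force the RLHF output on this profile to be the point mass at $y_3$.

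\emph{Step 2 (RLHF collapses to $y_1\ne y_3$).} Run the RLHF pipeline. Decomposing the five rankings into pairwise comparisons: $y_1$ is preferred to $y_2$ by all five labelers, $y_3$ is preferred to $y_1$ by three of the five, and $y_3$ is preferred to $y_2$ by three of the five. Fitting the BTL reward $\hat r$ of \eqref{eq:BT} by maximum likelihood is a concave program in $(\hat r(y_1),\hat r(y_2),\hat r(y_3))$ (strictly concave modulo the global shift), and the comparison digraph here is strongly connected, so the maximizer exists and is unique up to an additive constant. Writing $\sigma(t)=(1+e^{-t})^{-1}$, $u=\hat r(y_1)-\hat r(y_2)$, $v=\hat r(y_1)-\hat r(y_3)$, the two first-order conditions reduce to
\[
\sigma(u-v)=\tfrac35+\sigma(-u),\qquad \sigma(-u)+\sigma(-v)=\tfrac35 .
\]
The first equation forces $\sigma(-u)<\tfrac25$, hence $u>0$, i.e.\ $\hat r(y_1)>\hat r(y_2)$; a short case split on the size of $\sigma(-u)$, using only monotonicity of $\sigma$, then also gives $\sigma(-v)<\tfrac12$, i.e.\ $v>0$ and $\hat r(y_1)>\hat r(y_3)$ (intuitively, $y_1$ carries $7$ of its $10$ pairwise wins and the best average rank, so its fitted reward is largest). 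Since the optimizer of \eqref{eq:prlloss} is $\pi^\star_\tau(y)\propto\pi_{\textnormal{ref}}(y)\exp(\hat r(y)/\tau)$, the ratio $\pi^\star_\tau(y_1)/\pi^\star_\tau(y_3)\to\infty$ as $\tau\downarrow0$; the fully-optimized RLHF policy is therefore the point mass at $y_1$, not at $y_3$, contradicting Condorcet consistency.

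The pairwise tallies in Step 1 and the $\tau\downarrow0$ limit in Step 2 are routine; the one genuine obstacle is the last verification — that the fitted BTL reward really ranks $y_1$ on top — since in general the BTL order need not coincide with the win-count or Borda order. I would dispatch it precisely via the two reduced stationarity equations and monotonicity of $\sigma$ above (or, equivalently, from the KKT conditions of the concave log-likelihood in the zero-temperature limit, where only the reward order matters); this is exactly where the specific numbers in Example~\ref{exam:rlhf_solution} do the work.
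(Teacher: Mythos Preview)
Your proof is correct and follows the paper's approach: both use Example~\ref{exam:rlhf_solution} to exhibit a preference profile with Condorcet winner $y_3$ while RLHF collapses to $y_1$. The paper, however, only asserts informally that ``$y_1$ has the highest average ranking, corresponding to the highest approximated reward,'' whereas you actually verify via the first-order conditions of the BTL log-likelihood that $\hat r(y_1)>\hat r(y_3)$---a genuine and necessary step, since (as you rightly note) highest Borda/average rank does not in general coincide with highest BTL reward.
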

On the other hand, for NLHF, since there are no Condorcet cycles in the human preferences, there must exist a Condorcet winning response that NLHF will converge to. Therefore, although both RLHF and NLHF suffer from the preference collapse issue, the response that NLHF converges to has better properties.

\subsubsection*{Scenario 3: No Reward Model \& Exists a Condorcet winning Response} This scenario is essentially equivalent to Scenario 2 in terms of the solutions of RLHF and NLHF. For RLHF, regardless of whether the human preferences can be modeled by a non-BTL reward, RLHF cannot learn a BTL reward. For NLHF, the Nash solution is the Condorcet winning response.
\subsubsection*{Scenario 4: No Reward Model \& No Condorcet winning Response} In this case, the Nash equilibria of NLHF are mixed strategies, which preserve the diversity of opinions. 
Moreover, since the probability of this scenario approaches one under the IC condition, this suggest that NLHF is better than RLHF in diversifying outputs.

\subsection{Extension of Condorcet Winning Response}
Finally, we examine how mixed strategies are distributed across responses. To this end, we extend the concept of a Condorcet winning response to Condorcet winning sets of responses.
The following theorem provides a fundamental partition of the response space:
\begin{restatable}{theorem}{responsedecomposition}\label{thm:response decomposition}
Under Assumption~\ref{ass:strict_preference}, the set of responses can be partitioned into disjoint subsets $S_1,\ldots,S_k$ such that: 
\begin{enumerate}
\item Each $S_i$ either forms a Condorcet cycle or is a single response.
\item For any $j>i$, any response $y\in S_i$ and $y'\in S_j$, $\cP(y\succ y')>\frac{1}{2}$.
\end{enumerate}
Moreover, this decomposition is unique.
\end{restatable}
%The proof is provided in Appendix \ref{appendix: proof of response decomposition}. 
% Algorithm \ref{algorithm:decomposition}
An algorithm for obtaining $S_1$ with complexity $O(n^2)$ and such decomposition with complexity $O(n^3)$ is provided in Appendix. The case where there exists a Condorcet winning response occurs when $|S_1|=1$. Based on Theorem~\ref{thm:response decomposition}, we refer to the subset $S_1$ as the Condorcet winning set. The following theorem generalizes our previous theorem to the Condorcet winning set $S_1$.

\begin{restatable}{theorem}{support}\label{thm:support}
    Under Assumption~\ref{ass:strict_preference}, the support of the strategies of any Nash equilibrium must be contained in the Condorcet winning subset $S_1$.
\end{restatable}
%The proof is provided in Appendix \ref{appendix: proof of support}. 
A natural following question is whether the support is exactly $S_1$. We observe that this is indeed true when $|S_1|=3$, where the Nash equilibrium is unique and fully supported on $S_1$.
However, this is generally false when $|S_1|>3$ as shown by the following example:
\begin{example}\label{eg:not fully supported mixed strategy}
    For the payoff matrix in Table \ref{table:4-cycle}, the four responses form a Condorcet 4-cycle. However, $(\frac{1}{3},\frac{1}{3},\frac{1}{3},0)$ is the unique Nash equilibrium of this game, which is not fully support on all four responses.
    \begin{table}[htbp]
    \caption{Payoff matrix with four responses $\{y_1,y_2,y_3,y_4\}$.}
    \label{table:4-cycle}
    \centering
    \renewcommand{\arraystretch}{1.5}
    \scalebox{1.1}{
    \begin{tabular}{c | c  c  c  c }
          $\cP(y\succ y')$ &  
          $y'=y_1$ & 
          $y'=y_2$ &
          $y'=y_3$ &
          $y'=y_4$
         \\\hline
         $y=y_1$ &
         $\frac{1}{2}$ &
         $\frac{2}{3}$ &
         $\frac{1}{3}$ &
         $\frac{1}{3}$
         \\
         $y=y_2$ &
         $\frac{1}{3}$ &
         $\frac{1}{2}$ &
         $\frac{2}{3}$ &
         $\frac{2}{3}$
         \\
         $y=y_3$ &
         $\frac{2}{3}$ &
         $\frac{1}{3}$ &
         $\frac{1}{2}$ &
         $\frac{2}{3}$
         \\
         $y=y_4$ &
         $\frac{2}{3}$ &
         $\frac{1}{3}$ &
         $\frac{1}{3}$ &
         $\frac{1}{2}$
         \\ 
    \end{tabular}
    }
\end{table}
\end{example}
To conclude this section, we note that under certain conditions, such as in Erd\H{o}s random graphs, the probability that the graph contains a Condorcet cycle including all responses approaches one as $n\to\infty$~\citep{erdds1959random}. Therefore, the probability of $|S_1|=n$ approaches one, and all responses could possibly be included in the Nash equilibrium strategies. This demonstrates NLHF's potential ability to preserve preferences for every single response.

\section{Proof Sketch of Theorem \ref{thm:no Condorcet winning}}\label{sec:upper bound proof}

In this subsection, we sketch the main idea to upper-bound the probability in Theorem~\ref{thm:no Condorcet winning}. 
First, we demonstrate in Lemma \ref{lem:luce-model-equivalent-formula} that selecting a permutation from the Luce model, as stated in Assumption \ref{ass:luce}, is equivalent to assigning exponentially-distributed scores to each response independently, and then ranking these scores accordingly.

\begin{lemma}\label{lem:luce-model-equivalent-formula}
    The Luce model $\operatorname{Luce}(\lambda_1,\dots,\lambda_n)$ is equivalent to the preference generation scheme: 
    \begin{enumerate}
        \item Sample $n$ scores from independent exponential random variables $U_k$ with parameters $\lambda_k$, denoted by $\{U_k\sim \operatorname{Exp}(\lambda_k)\}_{1\leqslant k\leqslant n}$;
        \item Order the scores from small to large as $U_{i_1}<\cdots<U_{i_n}$; the preference is then given by $y_{i_1}\succ \cdots\succ y_{i_n}$.
    \end{enumerate}
\end{lemma}

 This is a manifestation of the memorylessness of exponential random variable, recorded in the following Lemma.

\begin{restatable}{lemma}{integrationexponential}\label{lem:integration-exponential}
    Let $X_1,\ldots,X_n$ be independent exponential random variables with parameters $\lambda_1,\ldots,\lambda_n$ respectively. Then, we have
    \[\mathbb{P}\left(X_1>\cdots>X_n\right)=\frac{\lambda_n}{\lambda_1+\cdots+\lambda_n}\cdot\frac{\lambda_{n-1}}{\lambda_1+\cdots+\lambda_{n-1}}\cdot\cdots\cdot\frac{\lambda_2}{\lambda_1+\lambda_2}\,.\]
\end{restatable}

\begin{proof}[Proof of Lemma \ref{lem:luce-model-equivalent-formula}]
   By Lemma \ref{lem:integration-exponential}, for any permutation $\{i_1,\ldots,i_n\}=\{1,\ldots,n\}$, 
    \begin{align*}
        \mathbb{P}_{\rm score}\left(y_{i_1}\succ \cdots\succ y_{i_n}\right)= &\ \mathbb{P}\left(U_{i_n}>\cdots>U_{i_1}\right)\\
        \overset{\ref{lem:integration-exponential}}{=} &\ \frac{\lambda_{i_1}} {\lambda_{i_1}+\cdots+\lambda_{i_n}}\cdot \frac{\lambda_{i_2}}{\lambda_{i_2}+\cdots+\lambda_{i_n}}\cdot\cdots\cdot\frac{\lambda_{i_{n-1}}}{\lambda_{i_n}+\lambda_{i_{n-1}}} \\
        = &\ \mathbb{P}_{\rm Luce}\left(y_{i_1}\succ \cdots\succ y_{i_n}\right)\,.
    \end{align*}
    Hence, we complete the proof.
\end{proof}

Define the scores $\{U^{j}_i\}$ where $i=1,\ldots,n+1$ for each $j=1,\ldots,m$ according to Lemma \ref{lem:luce-model-equivalent-formula}. The event that there exists a Condorcet winning response can be expressed as a disjoint union, i.e. 
\[ \{\text{Condorcet winning response exists}\}=\cup_{i=1}^{n+1}\left\{y_i\text{ is the Condorcet winning response}\right\}.\]
Unraveling the definition of Condorcet winning response and the definition of preference yields
\[\mathbb{P}\left(y_1\text{ is the Condorcet winning response}\right)=\mathbb{P}\left(\bigcap_{i=2}^{n+1}\left\{\frac{1}{m}\sum_{j=1}^{m} \mathds{1}_{\{U^{j}_1<U^{j}_i\}}>\frac{1}{2}\right\}\right),\]
which is difficult to characterize due to the dependency between each event in the intersection.
To avoid this difficulty, we can leverage the fact that $y_1$ is the Condorcet winning response is equivalent to there does not exist any $i\neq 1$ such that $U_i^{j}<U_1^{j}$ for more than $l=\lceil\frac{m}{2}\rceil$ labelers.
We can define the following random subsets for $j=1,\ldots,m$,
\[ N_j := \left\{i\in \{2,\cdots,n+1\}\mid U^{j}_i<U^{j}_1\right\},\]
which represents the set of responses preferred over $y_1$ from the perspective of individual $j$. 
We also define the cardinality of the random sets $X_j=|N_j|$, which counts the number of responses preferred over $y_1$ by individual $j$. 
We can further define their order statistics $X_{(j)}$ with $X_{(1)}\geqslant X_{(2)} \geqslant \ldots\geqslant X_{(m)}$, and the corresponding sets are denoted by $N_{(j)}$. 
Then, we have the upper bound 
    \[ \mathbb{P}(y_1\text{ is the Condorcet winning response})\leqslant  \mathbb{P}\left(N_{(1)}\cap\cdots\cap N_{(l)}=\emptyset\right).\]
We then upper bound this probability by splitting it into two steps.

\paragraph*{Step 1.} 
In the first step, we examine the following conditional probability:
\begin{equation}
    \mathbb{P}\left(N_{(1)}\cap\cdots\cap N_{(l)}=\emptyset\mid X_{(1)}\cdots X_{(l)}\geqslant n^{l-1}(\log n)^2\right)\,.
\end{equation}
We can actually drop the order here and define the event $\mathcal{A}:=\left\{N_1\cap\cdots\cap N_l=\emptyset\right\}$ and $\boldsymbol{U}_1=(U_1^1,\ldots,U_1^m)^\top\in\mathbb{R}^m$. Now we define the critical set $\boldsymbol{B}\subseteq\mathbb{R}^m$ by
\[
\boldsymbol{B}:=\left\{\boldsymbol{u}_1=(u_1^1,\cdots,u_1^m)^\top\in\mathbb{R}^m: \prod_{j=1}^l\left(\sum_{i=2}^{n+1}\left(1-e^{-\lambda_i u_1^j}\right)\right)\geqslant \left(\frac{1}{2}\right)^l n^{l-1}(\log n)^2\right\}\,.
\]
Using Bayesian formula, we obtain the upper bound
\begin{equation}
\begin{aligned}
    &\ \mathbb{P}\left(\mathcal{A}\mid X_1=x_1,\cdots, X_m=x_m\right)\\
    &\ \leqslant\frac{\mathbb{P}\left(\boldsymbol{U}_1\in\boldsymbol{B}^c, X_1=x_1,\cdots, X_m=x_m\right)}{\mathbb{P}\left(X_1=x_1,\cdots, X_m=x_m\right)}+\frac{\mathbb{P}\left(\mathcal{A},\boldsymbol{U}_1\in\boldsymbol{B}\right)}{\mathbb{P}\left(X_1=x_1,\cdots, X_m=x_m\right)}\,.
\end{aligned}
\end{equation}

The denominator only contributes a polynomial factor in $n$ by the following lemma:
\begin{restatable}{lemma}{cardinalitylowerbound}\label{lem:cardinality-lower-bound}
Suppose $U_i\sim \operatorname{Exp}(\lambda_i)$ independently for $i=1,\dots,n+1$, where each $\lambda_i\in [A,B]$ and $B\geqslant A>0$.
Let $N=\{i\in \{2,\cdots,n+1\}, U_i<U_1\}$ and $X=|N|$. 
Then for any $k=0,\dots,n$,
    \[\mathbb{P}\left(X=k\right)\geqslant e^{-\frac{B}{A}}(n+1)^{-3-\frac{B}{A}}\,.\]
\end{restatable}

The numerator in the first term is expected to be exponentially small because condition on $\boldsymbol{u}_1\in \boldsymbol{B}^c$, each $X_j$ is a sum of Bernoulli random variables and the expectation of $X_{1}\dots X_{l}$ is less than $n^{l-1}(\log n)^2/2^{l}$. When $X_1\dots X_l\geqslant n^{l-1}(\log n)^2$, some $X_j$ must deviate from its expectation significantly, and the conclusion follows from Bernstein inequality.

To bound the numerator in the second term, we notice that the following conditional probability can be explicitly calculated by
\begin{equation*}
    \mathbb{P}\left(\mathcal{A}\mid \boldsymbol{U}_1=\boldsymbol{u}_1\right)
    =  \prod_{i=2}^{n+1}\left(1-\prod_{j=1}^l\left(1-e^{-\lambda_i u_1^j}\right)\right),
\end{equation*}
which can be shown to be exponentially small for any $\boldsymbol{u}_1\in \boldsymbol{B}$ using the following inequality:
\begin{restatable}{lemma}{revisionorderineqmultiitems}\label{lem:revision-order-ineq-multi-items}
    For any $n,l\in\mathbb{N}_+$, assume that $0\leqslant x_1^j\leqslant\ldots\leqslant x_n^j<1$ for any $1\leqslant j\leqslant l$, then
    \[
    \prod_{i=1}^n\left(1-\prod_{j=1}^l x_i^j\right)\leqslant \left(1-\prod_{j=1}^l\left(\frac{1}{n}\sum_{i=1}^n x_i^j\right)\right)^n
    \]
\end{restatable}

Putting all these together, we obtain
\begin{restatable}{lemma}{stepone}\label{lem:step-one}
Suppose $U_i^j\sim \operatorname{Exp}(\lambda_i)$ independently for $i=1,\dots,n+1$ and $j=1,\dots,m$, where each $\lambda_i\in [A,B]$ and $B\geqslant A>0$. Let $N_j=\{i\in \{2,\cdots,n+1\}, U_i^j<U_1^j\}$ and $X_j=|N_j|$. For any $1\leqslant l\leqslant m$,
    \[\begin{aligned}
    &\ \mathbb{P}\left(N_{1}\cap\cdots\cap N_{l}=\emptyset\mid X_{1}\cdots X_{l}\geqslant n^{l-1}(\log n)^2\right) \\
    &\ \leqslant e^{\frac{B}{A}m}(n+1)^{(3+\frac{B}{A})m}\cdot\left(\exp\left(-\frac{3(\log n)^2}{16}\right)+\exp\left(-\left(\frac{1}{2}\right)^l(\log n)^2\right)\right)\,.
\end{aligned}\]
\end{restatable}

\paragraph*{Step 2.}
In the second step, we control the probability of bad profiles 
\[\mathbb{P}(X_{(1)}\cdots X_{(l)}\leqslant n^{l-1}(\log n)^2)\,.\]
Note that $X_{(1)}\cdots X_{(l)}\leqslant n^{l-1}(\log n)^2$ implies $X_1\cdots X_m\leqslant \tilde{O}(n^{(l-1)\frac{m}{l}})$. 
When none of these $X_i$ is equal to zero, $\mathbb{P}(X_1\cdots X_m\leqslant \tilde{O}(n^{(l-1)\frac{m}{l}})\mid X_1\cdots X_m>0)$ can be upper bounded by $\tilde{O}(n^{-\frac{m}{l}})$ according to the following lemma:
\begin{restatable}{lemma}{lucecondorcetwinningbadprofileprob}\label{lem:luce Condorcet winning-bad profile prob}
Suppose $U_i^j\sim \operatorname{Exp}(\lambda_i)$ independently for $i=1,\dots,n+1$ and $j=1,\dots,m$, where each $\lambda_i\in [A,B]$ and $B\geqslant A>0$.
Let $N_j=\{i\in \{2,\cdots,n+1\}, U_i^j<U_1^j\}$ and $X_j=|N_j|$. Then for any $0<\alpha<1$ and $\beta>0$, we have
\[
\mathbb{P}\left(\prod_{j=1}^m X_j\leqslant n^{\alpha m}\left(\log n\right)^\beta\,\,\Big\vert\,\,X_1\cdots X_m>0\right)\leqslant \frac{n^{\alpha m}}{n^m} \left(\log n\right)^{\beta}\left(\log n+1\right)^{m}\left(\frac{2B}{A}\right)^{2m}.
\]
\end{restatable}

Thus, $\mathbb{P}(X_{(1)}\cdots X_{(l)}\leqslant n^{l-1}(\log n)^2)$ is also expected to be upper bounded by $\tilde{O}(n^{-\frac{m}{l}})$ as long as few $X_i$ are equal to zero. 
This is summarized in the following lemma:
\begin{restatable}{lemma}{steptwo}\label{lem:step-two}
    Suppose $U_i^j\sim \operatorname{Exp}(\lambda_i)$ independently for $i=1,\dots,n+1$ and $j=1,\dots,m$, where each $\lambda_i\in [A,B]$ and $B\geqslant A>0$. Let $N_j=\{i\in \{2,\cdots,n+1\}, U_i^j<U_1^j\}$ and $X_j=|N_j|$.
    Then, we define the order statistics as $X_{(1)}\geqslant \ldots\geqslant X_{(m)}$. For any $m\geqslant 2$ and $1\leqslant l\leqslant m-1$, we have
\begin{equation*}
\begin{aligned}
&\mathbb{P}\left(X_{(1)}\cdots X_{(l)}
\leqslant n^{l-1}\left(\log n\right)^2\right)\\
        \leqslant &(m+1)n^{-\frac{m}{l}}\left(\log n\right)^{\frac{2m}{l}}\left(\log n+1\right)^{m} \left(\frac{2B}{A}\right)^{2m}+ \frac{m(m-1)B^2}{A^2n^2}.
\end{aligned}
\end{equation*}
\end{restatable}

Finally, Theorem \ref{thm:no Condorcet winning} is proved by combining Lemma~\ref{lem:step-one} and~\ref{lem:step-two}. 
\begin{proof}[Proof of Theorem \ref{thm:no Condorcet winning}]
 Let $l=\lceil\frac{m}{2}\rceil$. As $m\geqslant 2$, we have $l\geqslant 1$.
    Recall that
    \begin{align*}
\mathbb{P}\left(\text{$y_1$ is the Condorcet winning response}\right)
        \leqslant \mathbb{P}\left(N_{(1)}\cap\cdots\cap N_{(l)}=\emptyset\right):=\mathbb{P}\left(\mathcal{A}\right) \,.
    \end{align*}
    Applying Lemma \ref{lem:step-one} and Lemma \ref{lem:step-two} with $l=\lceil\frac{m}{2}\rceil$, 
    \begin{align*}
        \mathbb{P}\left(\mathcal{A}\right)=&\  \mathbb{P}\left(\mathcal{A}\,,\, X_{(1)}\cdots X_{(l)}\leqslant n^{l-1}(\log n)^2\right)+ \mathbb{P}\left(\mathcal{A}\,,\, X_{(1)}\cdots X_{(l)}\geqslant n^{l-1}(\log n)^2\right)\\
        \leqslant &\ \mathbb{P}\left(X_{(1)}\cdots X_{(l)}\leqslant n^{l-1}(\log n)^2\right)+ \mathbb{P}\left(\mathcal{A}\mid X_{(1)}\cdots X_{(l)}\geqslant n^{l-1}(\log n)^2\right) \\
        \leqslant &\ (m+1)n^{-\frac{m}{l}}\left(\log n\right)^{\frac{2m}{l}}\left(\log n+1\right)^{m} \left(\frac{2B}{A}\right)^{2m}+ \frac{m(m-1)B^2}{A^2n^2} \\
        &\ \quad + e^{\frac{B}{A}m}(n+1)^{(3+\frac{B}{A})m}\cdot\left(\exp\left(-\frac{3(\log n)^2}{16}\right)+\exp\left(-\left(\frac{1}{2}\right)^l(\log n)^2\right)\right)\,.
    \end{align*}
    Finally, as the above bound does not depend on the specific choice of $y_1$, we have
    \begin{align*}
        &\ \mathbb{P}_{m,n}\left(\text{Condorcet winning response}\right)= \sum_{i=1}^{n+1}\mathbb{P}\left(\text{$y_i$ is the Condorcet winning response}\right)\\
        &\ \leqslant (m+1)(n+1)n^{-\frac{m}{l}}\left(\log n\right)^{\frac{2m}{l}}\left(\log n+1\right)^{m} \left(\frac{2B}{A}\right)^{2m}+ \frac{m(m-1)(n+1)B^2}{A^2n^2} \\
        &\ \quad + e^{\frac{B}{A}m}(n+1)^{(3+\frac{B}{A})m+1}\cdot\left(\exp\left(-\frac{3(\log n)^2}{16}\right)+\exp\left(-\left(\frac{1}{2}\right)^l(\log n)^2\right)\right)\,.
    \end{align*}
    Hence, we complete the proof.
\end{proof}

\section{Proof of Theorem \ref{thm:no Condorcet winning lower bound}}\label{sec:lower bound proof}

We first state the following technical lemma:

\begin{restatable}{lemma}{generatingfunction}\label{lem:generating function}
    There exists a universal constant $0<c<1$ such that 
    \begin{equation*}
        \sum_{k=0}^{\lfloor{\frac{n-1}{2}}\rfloor}\binom{n}{2k+1}\left(\frac{1}{2n}\right)^{2k+1} \leqslant 1-c\,,
    \end{equation*}
    for all integers $n\geqslant 1$.
\end{restatable}

\begin{proof}[Proof of Theorem \ref{thm:no Condorcet winning lower bound}]
Without loss of generality, we first lower bound the probability that $y_1$ is the Condorcet winning response. 
For $j=1,\ldots,m$, we define the random subsets
\[ N_j=\{i\in \{2,\cdots,n+1\}\mid U^{j}_i<U^{j}_1\},\]
which represent the set of responses that are preferred over $y_1$ from the perspective of individual $j$.
We also define their cardinality $X_j=|N_j|$, which counts the number of responses preferred over $y_1$ by individual $j$. 

Let $l=\lceil\frac{m}{2}\rceil$ and $\mathcal{T}$ be the set of all subsets of $\{1,\ldots,m\}$ with cardinality $l$. 
Then $|\mathcal{T}|=\binom{m}{l}$. For each element $t\in \mathcal{T}$, it is a set of indices $t=\{j_1,\ldots,j_l\}$, and we further define the set $T_t:=N_{j_1}\cap\ldots\cap N_{j_l}$. Now notice that 
\begin{equation}\label{eq:lower-winner-rewritten}
    \begin{aligned}
\mathbb{P}\left(\bigcap_{i=2}^{n+1}\left\{\mathcal{P}\left(y_1\succ y_i\right)>\frac{1}{2}\right\}\right)= &\ \mathbb{P}\left(\bigcap_{i=2}^{n+1}\left\{\frac{1}{m}\sum_{j=1}^m\mathds{1}_{\left\{U_1^j<U_i^j\right\}}>\frac{1}{2}\right\}\right)\\
= &\ \mathbb{P}\left(\bigcap_{i=2}^{n+1}\left\{\frac{1}{m}\sum_{j=1}^m\mathds{1}_{\left\{U_i^j<U_1^j\right\}}<\frac{1}{2}\right\}\right)= \mathbb{P}\left(\bigcap_{t\in\mathcal{T}}\left\{T_t=\emptyset\right\}\right)\,,
\end{aligned}
\end{equation}
where the last equality is because $y_1$ is the Condorcet winning response is equivalent to there does not exist any $i\neq 1$ such that $U_i^{j}<U_1^{j}$ for more than $l=\lceil\frac{m}{2}\rceil$ labelers.
% \begin{itemize}[itemsep=6pt]
%     \item Suppose that $T_t=\emptyset$ for any $t\in\mathcal{T}$, but there exists $i^\star\in \{2,\ldots,n+1\}$ such that 
%     \[\frac{1}{m}\sum_{j=1}^m\mathds{1}_{\{U_{i^\star}^j<U_1^j\}}\geqslant\frac{1}{2}.\]
%     Then there exists $\{j^\star_1,\ldots, j^\star_l\}:=t^\star\in\mathcal{T}$ such that $U_{i^\star}^{j_k^\star}<U_1^{j_k^\star}$ for any $k\in [l]$. Therefore, $i^\star\in N_{j_1^\star}\cap\ldots\cap N_{j_l^\star}:=T_{t^\star}$, i.e., $T_{t^\star}\neq\emptyset$, which causes a contradiction. Thus, it must hold that $\frac{1}{m}\sum_{j=1}^m\mathds{1}_{\{U_i^j<U_1^j\}}<\frac{1}{2}$ for any $i\in\{2,\ldots,n+1\}$.
%     \item Suppose that $\frac{1}{m}\sum_{j=1}^m\mathds{1}_{\{U_i^j<U_1^j\}}<\frac{1}{2}$ for any $i\in \{2,\ldots, n+1\}$, but there exists $t^\star:=\{j_1^\star,\ldots, j_l^\star\}\in\mathcal{T}$ such that $T_{t^\star}\neq \emptyset$. Taking any $i^\star\in T_{t^\star}$, we have $U_{i^\star}^{j_k^\star}<U_1^{j_k^\star},\forall k\in [l]$. Thus, we have 
%     \[\frac{1}{m}\sum_{j=1}^m\mathds{1}_{\{U_{i^\star}^j<U_1^j\}}\geqslant\frac{l}{m}\geqslant\frac{1}{2},\]
%     which causes a contradiction. Therefore, we have $T_{t}=\emptyset$ for any $t\in\mathcal{T}$.
% \end{itemize}
Let us define $Z_i:=\mathds{1}\left\{i\in \cup_{t\in\mathcal{T}}T_t\right\}$ for any $i\in \{2,\ldots,n+1\}$. Then
\begin{equation}\label{eq:lower-winner-rewritten-Z}
    \begin{aligned}
        \mathbb{P}\left(\bigcap_{t\in\mathcal{T}}\left\{T_t=\emptyset\right\}\right)=\mathbb{P}\left(\bigcup_{t\in\mathcal{T}}T_t=\emptyset \right)=\mathbb{P}\left(\bigcap_{i=2}^{n+1}\left\{i\notin \bigcup_{t\in\mathcal{T}}T_t\right\} \right)=\mathbb{P}\left(Z_2=0,\cdots,Z_{n+1}=0\right)\,.
    \end{aligned}
\end{equation}
Denote $\boldsymbol{U}_1=(U_1^1,\ldots,U_1^m)^\top\in\mathbb{R}^m$ and define the critical set
\[
\boldsymbol{B}:=\left\{\boldsymbol{U}_1\in\mathbb{R}^m: 0\leqslant U_1^j\leqslant \frac{1}{B}\cdot\frac{1}{(2n)^{\frac{1}{l}}\binom{m}{l}^{\frac{1}{l}}}\,,\text{ for any } j=1,\dots,m\right\}\,.
\]
Using the numerical inequality $1-e^{-x}\geqslant \frac{x}{2}$ for $0\leqslant x\leqslant1$, we have
\begin{equation}\label{eq:lower-winner-critical-lower-bound-prob}
         \mathbb{P}\left(\boldsymbol{U}_1\in \boldsymbol{B}\right)=\prod_{j=1}^m \left(1-\exp{\left(-\frac{\lambda_1}{B\binom{m}{l}^{\frac{1}{l}}} \frac{1}{(2n)^{\frac{1}{l}}}\right)}\right)\geqslant\left(\frac{A}{B}\right)^m\cdot\frac{1}{2^{m+\frac{m}{l}}\binom{m}{l}^{\frac{m}{l}}}\cdot\frac{1}{n^{\frac{m}{l}}}\,.
\end{equation}

Next, for any $\boldsymbol{U}_1\in \boldsymbol{B}$, we are going to establish a lower bound for the conditional probability $\mathbb{P}\left(Z_2=0,\cdots,Z_{n+1}=0\mid\boldsymbol{U}_1\right)$. 
For any $k\in [n]$, let us first focus on $\mathbb{P}\left(Z_2=1,\ldots,Z_{k+1}=1\mid \boldsymbol{U}_1\right)$.
    Notice that conditional on $\boldsymbol{U}_1$, for any $t_{1},\ldots,t_k\in\mathcal{T}$, we have
    \begin{equation*}
        \mathbb{P}\left(2\in T_{t_1},\cdots, k+1\in T_{t_k}\mid \boldsymbol{U}_1\right)=\prod_{s=1}^k\prod_{j\in t_s}\left(1-e^{-\lambda_{s+1}U_1^j}\right)\,.
    \end{equation*}
    Using the numerical inequality $1-e^{-x}\leqslant x$ for $x\geqslant 0$, for any $U_1\in\bm{B}$, 
    \begin{equation*}
       \prod_{s=1}^k\prod_{j\in t_s}\left(1-e^{-\lambda_{s+1}U_1^j}\right)
        \leqslant \prod_{s=1}^k\prod_{j\in t_s}\lambda_{s+1}U_1^j
        \leqslant \prod_{s=1}^k\frac{1}{2n\binom{m}{l}}=\frac{1}{\binom{m}{l}^k} \left(\frac{1}{2n}\right)^k\,,
    \end{equation*}
    Therefore, using union bounds, 
    \begin{align*}
        \mathbb{P}\left(Z_2=1,\cdots,Z_{k+1}=1\mid \boldsymbol{U}_1\right)= &\ \mathbb{P}\left(2\in \bigcup_{t\in\mathcal{T}}T_t,\cdots, k+1\in\bigcup_{t\in\mathcal{T}}T_t\ \Bigg\vert \ \boldsymbol{U}_1\right)\\
        = &\ \mathbb{P}\left(\bigcup_{t_1,\cdots,t_k\in\mathcal{T}} \left\{2\in T_{t_1},\cdots, k+1\in T_{t_k}\right\}\ \Bigg\vert \ \boldsymbol{U}_1\right)\\
        \leqslant & \sum_{t_1,\cdots, t_k\in\mathcal{T}}\mathbb{P}\left(2\in T_{t_1},\cdots, k+1\in T_{t_k}\mid \boldsymbol{U}_1\right)\\
        \leqslant &\ \frac{\binom{m}{l}^k}{\binom{m}{l}^k}\left(\frac{1}{2n}\right)^k= \left(\frac{1}{2n}\right)^k\,.
    \end{align*}
    Notice that this result is independent of which $k$ responses we choose, thus it holds for any $\mathbb{P}\left(Z_{i_1}=1,\cdots, Z_{i_{k}}=1\mid \boldsymbol{U}_1\right)$ where $\{i_1,\ldots,i_k\}\subseteq\{2,\dots,n+1\}$.
    Then, by the inclusion-exclusion principle, we have 
    \begin{align*}
        &\mathbb{P}\left(Z_2=0,\cdots, Z_{n+1}=0\mid \boldsymbol{U}_1\right)\\= &\ 1-\sum_{k=1}^n(-1)^{k-1}\sum_{\{i_1,\cdots,i_k\}\subseteq\{2,\dots,n+1\}}  \mathbb{P}\left(Z_{i_1}=1,\cdots, Z_{i_{k}}=1\mid \boldsymbol{U}_1\right)\\
        \geqslant &\ 1-\sum_{k=0}^{\lfloor{\frac{n-1}{2}}\rfloor}\sum_{\{i_1,\cdots,i_{2k+1}\}\subseteq\{2,\dots,n+1\}}\mathbb{P}\left(Z_{i_1}=1,\cdots, Z_{i_{2k+1}}=1\mid \boldsymbol{U}_1\right)\\
        \geqslant &\ 1 - \sum_{k=0}^{\lfloor{\frac{n-1}{2}}\rfloor}\binom{n}{2k+1}\left(\frac{1}{2n}\right)^{2k+1}\geqslant c\,,
    \end{align*}
    where $c$ is the universal constant provided in Lemma \ref{lem:generating function}. Then, from \eqref{eq:lower-winner-rewritten}, \eqref{eq:lower-winner-rewritten-Z}, and \eqref{eq:lower-winner-critical-lower-bound-prob}, we get 
    \begin{align*}
        \mathbb{P}\left(\text{$y_1$ is the Condorcet winning response}\right)= &\  \mathbb{P}\left(Z_2=0,\cdots,Z_{n+1}=0\right)\\ 
        \geqslant &\ c\cdot\left(\frac{A}{B}\right)^m\cdot\frac{1}{2^{m+\frac{m}{l}}\binom{m}{l}^{\frac{m}{l}}}\cdot\frac{1}{n^{\frac{m}{l}}}\,.
    \end{align*}
   Finally, note that this bound holds true for any response $y_i$ where $i=1,\dots,n+1$, we have
    \begin{align*}
        \mathbb{P}_{m,n}\left(\text{Condorcet winning response}\right)= &\ \sum_{i=1}^{n+1}\mathbb{P}\left(\text{$y_i$ is the Condorcet winning response}\right)\\
        \geqslant &\  c\cdot\left(\frac{A}{B}\right)^m\cdot\frac{1}{2^{m+\frac{m}{l}}\binom{m}{l}^{\frac{m}{l}}}\cdot\frac{n+1}{n^{\frac{m}{l}}}\,.
    \end{align*}
    Hence, we complete the proof.
\end{proof}

\section{Discussion}\label{sec:discuss}
In this paper, we have analyzed the statistical limits of aligning LLMs with human preferences, establishing fundamental impossibility and possibility results. We have proven that reward models cannot fully capture human preferences in the presence of Condorcet cycles, which exist with probability converging to one under two probabilistic preference models, IC and the Luce model. This reveals an inherent limitation of reward-based alignment approaches such as RLHF. In response, we have investigated NLHF as an alternative, demonstrating that its solution avoids collapse to a single response if and only if no Condorcet winning response exists---a condition that holds with probability tending to one under the same probabilistic preference models. 

Our findings suggest several promising avenues for statistical research on LLM alignment. First, while we have established that reward models cannot fully capture human preferences in the presence of Condorcet cycles, it would be valuable to obtain quantitative bounds on the approximation gap, which may depend on the number and structure of cycles. This would provide deeper insights into the extent of distortion introduced by reward-based RLHF. Additionally, since the class of reward models in Definition \ref{def:what_is_reward} is the largest possible, an intriguing theoretical question is to explore more constrained, specific reward classes, such as those in which $\mathcal{P}(y \succ y')$ depends solely on the difference $r(y) - r(y')$. Another important direction involves extending NLHF so that, when human preferences follow the BT model, the aligned policy precisely matches the BT distribution---perhaps by regularizing the payoff in the NLHF two-player game. In this context, it is noteworthy that RLHF with appropriate regularization can match the BT model \citep{xiao2024algorithmic}. Furthermore, it would be crucial to preserve preferences for responses in Condorcet winning sets. The challenges shown in Table \ref{table:4-cycle}, where Nash equilibria may overlook certain preferences, highlight the need for the development of refined non-reward-based approaches. On a slightly different direction, it would be interesting to investigate approaches that directly combine different rankings to form preferences \citep{fan2024uncertainty, fan2024ranking, fan2024spectral}.

From a practical standpoint, we advocate for increased exploration of non-reward-based alignment approaches in commercial LLM development, given the crucial importance of preserving diverse human preferences. It would be valuable to investigate how much of the observed bias in current RLHF-aligned LLMs stems directly from the fundamental limitations of reward models. When interpreting our results for practitioners, it is important to recognize that we often consider scenarios with many responses and fully optimized fine-tuning, whereas in practice, labelers typically rank a few responses and fine-tuning terminates before convergence, thus leading to partial rather than complete collapse in RLHF-aligned LLMs \citep{casper2023open, xiao2024algorithmic, song2023reward}. Nevertheless, understanding these asymptotic statistical limits provides crucial insights into the fundamental capabilities and constraints of different alignment approaches, thereby informing more effective practices for developing LLMs that better reflect the full spectrum of human values and preferences.

Our work also opens up several theoretical questions. In our paper, we have characterized $\mathbb{P}(|S_1|=1)$. For $3\leqslant k\leqslant n-1$, we conjecture that $\mathbb{P}(|S_1|=k)=\Theta(n^{k(1-\frac{m}{l})})$ for $k=\Theta(1)$, $\Theta(n^{(n-k)(1-\frac{m}{l})})$ for $n-k=\Theta(1)$, and otherwise it decays superpolynomially. For $k=n$, $\mathbb{P}(|S_1|=n)$ is the probability that a Hamiltonian cycle exists, and we expect it to tend to $1$ as $n\to\infty$ with $\Theta(n^{1-\frac{m}{l}})$ convergence rate.
As far as we know, these probabilities have not been investigated even under the IC condition, let alone under the Luce model. 
As our results indicate that the Luce model has almost the same behaviors as the IC condition, an interesting question is whether these behaviors are somehow universal, i.e., they hold for other preference distributions; for example, preferences sampled from permutons. 
Finally, in our paper, we consider the regime where $m$ is fixed and $n$ tends to infinity. A further question is to determine these probabilities when both $m$ and $n$ tend to infinity.

{\small
\section*{Acknowledgments} 
We thank the associate editor and reviewers for constructive comments that helped us significantly improve the presentation of the paper. We are grateful to Hang Du for sharing insights that significantly simplified the proof of Theorem \ref{thm:no Condorcet winning}. This work was supported in part by NIH grants U01CA274576, R01EB036016, and R01EB037101, NSF grant DMS-2310679, a Meta Faculty Research Award, and Wharton AI for Business. The content is solely the responsibility of the authors and does not necessarily represent the official views of the NIH.

\bibliographystyle{abbrvnat}
\bibliography{Bibliography}
}

\clearpage
\appendix
\section{Additional Background}\label{app:Additional Background}
\subsection{Game Theory}

Following \citet{myerson2013game}, we denote any \emph{strategic-form game} \( \Gamma \) as
\[
\Gamma = (N, (C_i)_{i \in N}, (u_i)_{i \in N}),
\]
where:
\begin{itemize}
    \item \( N \) is the set of players,
    \item \( C_i \) is the set of actions for player \( i \) (in the context of LLMs, each response is an action),
    \item \( u_i: C \to \mathbb{R} \) is the payoff function for player \( i \),
\end{itemize}
and \( C \) denotes the set of all possible profiles (\emph{i.e.} combinations) of actions that may be chosen by the players. Specifically,
\[
C = \prod_{i \in N} C_i,
\]
where each player \( i \) chooses one of their actions from \( C_i \).

A strategy for any player \(i\) is a probability distribution over $C_i$. We let $\Delta(C_i)$ denote the set of all possible strategies for player $i$. 
A \emph{pure strategy} is a probability distribution that assigns probability 1 to a single action in $C_i$. For notational simplicity, we often define a pure strategy by an action in $C_i$. All other randomized strategies except pure strategies are called \emph{mixed strategies}.

A randomized-strategy profile $\sigma\in \prod_{i \in N} \Delta(C_i)$ is a \emph{Nash equilibrium} iff no player could increase his expected payoff by unilaterally deviating from the prediction of the randomized-strategy profile. The Nash equilibrium $\sigma$ is called an \emph{equilibrium in pure strategies} iff $\sigma$ is a pure-strategy profile.

For any player \( i \in N \) and any action \( d_i \in C_i \), the action \( d_i \) is said to be (strictly) \emph{dominated} by action \(e_i\in C_i \) if 
\[
u_i(c_{-i}, d_i) <  u_i(c_{-i}, e_i), \quad \forall c_{-i} \in C_{-i},
\]
where \( C_{-i} \) is the set of strategy profiles of all players other than \( i \).

\begin{definition}[Dominant Action]\label{def:dominant strategy}
    For a player \( i \in N \) and an action \( e_i \in C_i \),
    the action $e_i$ is said to be a (strictly) \emph{dominant action} if any other action \(d_i\in C_i\setminus\{e_i\}\) is (strictly) dominated by $e_i$.
\end{definition}
Next we introduce two-person zero-sum games, which is equivalent to the two-person constant-sum games which is the focus of this paper.

\begin{definition}[two-person zero-sum game]
    A \emph{two-person zero-sum game} in strategic form is any $F$ of the form $F = (\{1, 2\}, C_1, C_2, u_1, u_2)$ such that 

$$
u_2(c_1, c_2) = -u_1(c_1, c_2), \, \forall c_1 \in C_1, \, \forall c_2 \in C_2.
$$

\end{definition}

The following theorem summarizes important mathematical properties of such games: 
\begin{theorem}
    $(\sigma_1, \sigma_2)$ is a Nash equilibrium of a finite two-person zero-sum game $(\{1, 2\}, C_1, C_2, u_1, -u_1)$ if and only if

$$
\sigma_1 \in \arg\max_{\tau_1 \in \Delta(C_1)} \min_{\tau_2 \in \Delta(C_2)} u_1(\tau_1, \tau_2),\quad
\text{and}\quad
\sigma_2 \in \arg\min_{\tau_2 \in \Delta(C_2)} \max_{\tau_1 \in \Delta(C_1)} u_1(\tau_1, \tau_2).
$$

Furthermore, if $(\sigma_1, \sigma_2)$ is an equilibrium of this game, then

$$
u_1(\sigma_1, \sigma_2) = \max_{\tau_1 \in \Delta(C_1)} \min_{\tau_2 \in \Delta(C_2)} u_1(\tau_1, \tau_2) = \min_{\tau_2 \in \Delta(C_2)} \max_{\tau_1 \in \Delta(C_1)} u_1(\tau_1, \tau_2).
$$
\end{theorem}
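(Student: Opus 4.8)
The plan is to isolate the one genuinely nontrivial ingredient---von Neumann's minimax equality---and then read off both the ``iff'' characterization and the value identity as short consequences. Throughout I would write $M(\tau_1,\tau_2):=u_1(\tau_1,\tau_2)=-u_2(\tau_1,\tau_2)$; since the game is finite, $M$ is bilinear, hence continuous, on the compact product of simplices $\Delta(C_1)\times\Delta(C_2)$, so every $\sup$/$\inf$ below is attained and may be written as $\max$/$\min$.

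\textit{Step 1 (weak duality).} For every $\tau_1,\tau_2$ one has $\min_{\tau_2'}M(\tau_1,\tau_2')\le M(\tau_1,\tau_2)\le\max_{\tau_1'}M(\tau_1',\tau_2)$; taking $\max$ over $\tau_1$ on the left and $\min$ over $\tau_2$ on the right gives $\max_{\tau_1}\min_{\tau_2}M\le\min_{\tau_2}\max_{\tau_1}M$. This inequality is free and requires nothing about zero-sum structure.

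\textit{Step 2 (minimax equality).} This is the crux. Because $M(\tau_1,\cdot)$ is affine over the polytope $\Delta(C_2)$, its minimum is attained at a vertex, so $\min_{\tau_2}M(\tau_1,\tau_2)=\min_{c_2\in C_2}M(\tau_1,c_2)$, and hence $\max_{\tau_1}\min_{\tau_2}M=\max\{v:\ v\le M(\tau_1,c_2)\ \forall c_2\in C_2,\ \tau_1\in\Delta(C_1)\}$, a linear program in $(\tau_1,v)$. The symmetric reduction expresses $\min_{\tau_2}\max_{\tau_1}M$ as $\min\{w:\ w\ge M(c_1,\tau_2)\ \forall c_1\in C_1,\ \tau_2\in\Delta(C_2)\}$, and these two linear programs are duals of one another; strong LP duality then yields $\max_{\tau_1}\min_{\tau_2}M=\min_{\tau_2}\max_{\tau_1}M=:v$. (Alternatively one may invoke a separating-hyperplane argument, or Nash's existence theorem for finite games together with the $\Rightarrow$ direction of Step 3---slightly circular, but acceptable if Nash existence is taken as a cited background fact.)

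\textit{Step 3 (equivalence and value identity).} ($\Rightarrow$) If $(\sigma_1,\sigma_2)$ is a Nash equilibrium, player $1$'s no-profitable-deviation condition gives $M(\sigma_1,\sigma_2)=\max_{\tau_1}M(\tau_1,\sigma_2)\ge\min_{\tau_2}\max_{\tau_1}M=v$, and player $2$'s (who maximizes $-M$) gives $M(\sigma_1,\sigma_2)=\min_{\tau_2}M(\sigma_1,\tau_2)\le\max_{\tau_1}\min_{\tau_2}M=v$. Hence $M(\sigma_1,\sigma_2)=v$; moreover $\min_{\tau_2}M(\sigma_1,\tau_2)=v$ shows $\sigma_1\in\arg\max_{\tau_1}\min_{\tau_2}M$, and $\max_{\tau_1}M(\tau_1,\sigma_2)=v$ shows $\sigma_2\in\arg\min_{\tau_2}\max_{\tau_1}M$; this is simultaneously the ``furthermore'' identity. ($\Leftarrow$) If $\sigma_1$ is maximin-optimal and $\sigma_2$ is minimax-optimal, then by Step 2 $\min_{\tau_2}M(\sigma_1,\tau_2)=v=\max_{\tau_1}M(\tau_1,\sigma_2)$, so $v\le M(\sigma_1,\sigma_2)\le v$, i.e. $M(\sigma_1,\sigma_2)=v$. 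Then $M(\tau_1,\sigma_2)\le v=M(\sigma_1,\sigma_2)$ for all $\tau_1$ and $M(\sigma_1,\tau_2)\ge v=M(\sigma_1,\sigma_2)$ (equivalently $-M(\sigma_1,\tau_2)\le -M(\sigma_1,\sigma_2)$) for all $\tau_2$, so neither player can profitably deviate and $(\sigma_1,\sigma_2)$ is a Nash equilibrium.

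\textit{Main obstacle.} Steps 1 and 3 are routine $\max$/$\min$ bookkeeping over compact simplices; all the real content sits in Step 2, the minimax equality, which is genuinely equivalent to LP strong duality (or a separation theorem, or Nash's fixed-point existence result). That is the only step requiring anything beyond elementary manipulation.
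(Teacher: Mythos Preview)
The paper does not actually prove this theorem; it appears in the appendix as cited background material from \citet{myerson2013game}, stated without proof, with only the remark that ``the above property is a manifestation of the celebrated von Neumann's minimax theorem.'' Your proof is correct and follows the standard textbook route---weak duality, then the minimax equality (via LP strong duality or equivalent), then the two-way equivalence and value identity by sandwiching---which is exactly the argument one finds in Myerson or any game theory text, so there is nothing to compare against.
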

The above property is a manifestation of the celebrated von Neumann's minimax theorem.
\begin{theorem}[von Neumann's Minimax Theorem]
    Let $X \subset \mathbb{R}^n$ and $Y \subset \mathbb{R}^m$ be compact convex sets. If $f: X \times Y \rightarrow \mathbb{R}$ is a continuous function that is concave-convex, i.e., $f(\cdot, y): X \to \mathbb{R}$ is concave for fixed $y$, and $f(x, \cdot): Y \to \mathbb{R}$ is convex for fixed $x$. Then we have that

$$
\max_{x \in X} \min_{y \in Y} f(x,y) = \min_{y \in Y} \max_{x \in X} f(x,y).
$$
\end{theorem}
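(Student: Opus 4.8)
The plan is to reduce the statement to the existence of a saddle point after disposing of the trivial inequality. First I would note that, since $X$ and $Y$ are compact and $f$ is continuous, the functions $\underline f(x):=\min_{y\in Y}f(x,y)$ and $\overline f(y):=\max_{x\in X}f(x,y)$ are well defined and, by Berge's maximum theorem, continuous, so $\max_{x}\underline f(x)$ and $\min_{y}\overline f(y)$ are indeed attained; moreover, for every $x_0\in X$ and $y_0\in Y$ the chain $\underline f(x_0)\leqslant f(x_0,y_0)\leqslant\overline f(y_0)$ yields, upon maximizing over $x_0$ and minimizing over $y_0$, the weak-duality inequality $\max_{x}\min_{y}f(x,y)\leqslant\min_{y}\max_{x}f(x,y)$.

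Next I would show it suffices to produce a pair $(x^\star,y^\star)\in X\times Y$ with $f(x,y^\star)\leqslant f(x^\star,y^\star)\leqslant f(x^\star,y)$ for all $x\in X$ and $y\in Y$: such a saddle point gives $\min_{y}\max_{x}f(x,y)\leqslant\max_{x}f(x,y^\star)=f(x^\star,y^\star)=\min_{y}f(x^\star,y)\leqslant\max_{x}\min_{y}f(x,y)$, which together with the weak-duality inequality forces equality throughout. To construct the saddle point I would invoke Kakutani's fixed point theorem applied to the correspondence $\Phi(x,y):=B_X(y)\times B_Y(x)$ on the compact convex set $X\times Y$, where $B_X(y):=\arg\max_{x\in X}f(x,y)$ and $B_Y(x):=\arg\min_{y\in Y}f(x,y)$; one checks that $\Phi$ is nonempty-valued (Weierstrass), convex-valued (this is exactly where concavity of $f(\cdot,y)$ and convexity of $f(x,\cdot)$ are used), and has closed graph (Berge), and any fixed point of $\Phi$ is by definition the sought saddle point.

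The main obstacle is the verification that $B_X$ and $B_Y$ are upper hemicontinuous with closed graph, which is precisely the hypothesis needed for Kakutani's theorem and is the content of Berge's maximum theorem; this is where both the continuity of $f$ and the compactness of $X,Y$ are genuinely used, and it is also the reason one must appeal to Kakutani rather than Brouwer, since the best-response maps are set-valued in general. If one prefers a route that avoids fixed-point theorems, an alternative is to use uniform continuity of $f$ on $X\times Y$ to approximate the game by bilinear matrix games on finite $\varepsilon$-nets, apply linear-programming (Farkas) duality there, and let $\varepsilon\to 0$; in that approach the delicate step instead becomes extracting convergent subsequences of the approximately optimal mixed strategies on the nets and bounding the $O(\varepsilon)$ approximation errors uniformly.
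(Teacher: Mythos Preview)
The paper does not prove this statement; it appears in the appendix on game-theoretic background as a classical result quoted without proof. Your argument via Kakutani's fixed point theorem is correct and is one of the standard proofs of the minimax theorem: the weak-duality inequality, the reduction to a saddle point, and the verification that the best-response correspondence $\Phi(x,y)=B_X(y)\times B_Y(x)$ is nonempty-, convex-valued, and has closed graph are all sound, and the concavity/convexity hypotheses are used exactly where you say. So there is nothing to compare against in the paper, and no gap in your proposal.
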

As a remark, except from being constant-sum, the NLHF game \eqref{eq:ne} is symmetric in the sense that $C_1=C_2=C$ and $u_1(c,c')=u_2(c',c)$. This means that the game look the same to all players.

\subsection{Additional Related Work}

A popular alternative to reward-based RLHF is direct preference optimization (DPO) \citep{rafailov2023direct}. DPO fine-tunes LLMs directly on human preference data, eliminating the need to train a reward model, which makes it computationally more efficient. Several notable variants of DPO have been developed \citep{liu2023statistical,azar2024general,chang2024dataset,gorbatovski2024learn,rafailov2024r,yang2024asymptotics}. However, recent studies \citep{li2023policy,xu2024dpo,tajwar2024preference} suggest that DPO is less effective than reward-based RLHF for aligning LLMs. Both \cite{li2023policy} and \cite{xu2024dpo} argued that this inferiority stems from representation misspecification in DPO, which limits its ability to achieve robust alignment compared to reinforcement learning approaches like PPO. Moreover, the on-policy nature of reward-based fine-tuning enhances LLM performance by mitigating distribution shifts between the training dataset and online responses \citep{tajwar2024preference}.

In addition to NLHF, recent research has explored the formulation of two-player constant-sum games for aligning human preferences \citep{swamy2024minimaximalist, chen2024self}. These studies employ mirror descent \citep{beck2003mirror, beck2017first} to learn the Nash equilibrium. Another line of work focuses on directly learning the Nash equilibrium from a preference dataset \citep{wu2024self, rosset2024direct, calandriello2024human, zhang2024iterative}. Furthermore, the work of \citep{wang2025magnetic, liu2024comal} analyzed the convergence of NLHF to the Nash equilibrium in the last iterate.

\section{Auxiliary Lemmas}
\subsection{Properties of Tournament Graphs}
In this section, we present two fundamental results in graph theory that are useful for our analysis. A simple directed graph is a directed graph that contains no loops (i.e., edges connecting a vertex to itself) and no multiple edges between the same pair of vertices. A tournament graph is a simple directed graph in which every pair of vertices is connected by exactly one directed edge. In the following analysis of preferences, we often reformulate the preference problem in terms of a directed graph \( G \). Specifically, we can represent the \( n \) responses \( y_1, \dots, y_n \) as vertices \( 1, \dots, n \), and introduce a directed edge \( i \rightarrow j \) if and only if \( \cP(y_i \succ y_j) > 1/2 \). 

\begin{lemma}\label{lem:graph-Condorcet winning and cycle}
    Consider a tournament graph $G$ with $n(n\geqslant 3)$ points $y_1,\ldots,y_n$. If for any $y\in G$, there exists $y^\prime\neq y$ such that $y^\prime\rightarrow y$, then there exists a cycle in this graph. In other words, if there is no cycle in this graph, there exists $y^\star\in G$ such that $y^\star\rightarrow y^\prime$ for any $y^\prime\neq y^\star$.
\end{lemma}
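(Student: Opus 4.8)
The plan is to prove the contrapositive: assuming the tournament graph $G$ has no directed cycle, I will exhibit a vertex $y^\star$ that dominates every other vertex. The key observation is that a tournament without directed cycles is a \emph{transitive tournament}, i.e. the edge relation $\to$ induces a strict total order on the vertices. So the core of the argument is to upgrade ``no cycle'' to ``transitive'', and then the dominating vertex is simply the maximum element of this order.

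First I would show acyclicity forces transitivity of $\to$. Suppose $y_i \to y_j$ and $y_j \to y_k$ with $y_i, y_j, y_k$ distinct. Since $G$ is a tournament, exactly one of $y_i \to y_k$ or $y_k \to y_i$ holds. If $y_k \to y_i$, then $y_i \to y_j \to y_k \to y_i$ is a directed 3-cycle, contradicting the hypothesis; hence $y_i \to y_k$. Thus $\to$ is transitive. Being a tournament, $\to$ is also total (any two distinct vertices are comparable) and antisymmetric (at most one edge between a pair), so $\to$ is a strict total order on the finite set $\{y_1,\dots,y_n\}$.

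Next, a strict total order on a finite nonempty set has a unique maximal element: let $y^\star$ be the vertex such that no $y' $ satisfies $y' \to y^\star$. Such a vertex exists — otherwise one could greedily build an arbitrarily long chain $\cdots \to y^{(2)} \to y^{(1)}$ of distinct vertices (distinctness because a repeat would yield a cycle), which is impossible in a finite graph. For this $y^\star$, totality gives that for every $y' \neq y^\star$ exactly one of $y^\star \to y'$ or $y' \to y^\star$ holds, and the latter is excluded by choice of $y^\star$; hence $y^\star \to y'$ for all $y' \neq y^\star$. This establishes the desired conclusion, and the original statement follows by contraposition: if every vertex is dominated by some other vertex, then no such $y^\star$ exists, so $G$ must contain a cycle.

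I do not anticipate a serious obstacle here; the only thing requiring a little care is the existence of the maximal element, which I would handle either by the greedy-chain/finiteness argument above or, more cleanly, by noting that in a tournament the vertex of maximum out-degree is a ``king'' and, under transitivity, actually dominates everyone directly — but the finiteness argument is the most self-contained. (An alternative, entirely equivalent route is a direct induction on $n$: remove one vertex, apply the inductive hypothesis to get a dominator of the remaining $n-1$, then compare it with the removed vertex; acyclicity is used to rule out the bad configuration. I would mention this as a remark but carry out the transitivity-based proof as the main line.)
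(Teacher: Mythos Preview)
Your proof is correct. The paper's own proof is shorter and attacks the forward direction directly: assuming every vertex has an incoming edge, it builds a backward chain $y_{i_1}\leftarrow y_{i_2}\leftarrow\cdots$ and invokes finiteness to force a repeated vertex, yielding a cycle. You instead prove the contrapositive, first upgrading acyclicity to transitivity (hence a strict total order) and then locating the maximal element; but note that your existence argument for the maximal element---``otherwise build an arbitrarily long chain of distinct vertices, impossible by finiteness''---is exactly the paper's chain/pigeonhole argument run in the contrapositive direction. So the core idea is the same; the difference is that you introduce the transitivity step as an intermediate lemma, which is more than is strictly needed here but does make the total-order structure of an acyclic tournament explicit. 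The paper's route is more economical; yours gives a slightly richer structural picture.
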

\begin{proof}[Proof of Lemma \ref{lem:graph-Condorcet winning and cycle}]
Starting from an arbitrary vertex \(y_{i_1} \in G \), we can construct the following sequence in this graph: $y_{i_1} \leftarrow y_{i_2} \leftarrow \ldots$. However, since the graph contains only \( n \) vertices, there must exist indices \( j_1 < j_2 \) such that \( y_{j_1} = y_{j_2} \) while all vertices \( y_{j_1}, \dots, y_{j_2-1} \) are distinct. This implies the existence of a cycle $y_{j_2}\rightarrow \ldots \rightarrow y_{j_1}$. 
\end{proof}

\begin{lemma}\label{lem: graph theory}
    Consider a tournament graph $G$ with $n$ ($n\geqslant 3$) points $y_1,\ldots, y_n$. 
    Then there exists a Hamiltonian path in $G$, i.e., a sequence of distinct indices ${i_1, \dots, i_n} = {1, \dots, n}$ such that $y_{i_1}\rightarrow\ldots\rightarrow y_{i_n}$.
\end{lemma}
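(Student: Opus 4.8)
The statement to prove is the existence of a Hamiltonian path in any tournament graph on $n \geqslant 3$ vertices. This is a classical result (Rédei's theorem), and the standard proof is by induction on $n$.

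\textbf{Proof plan.}

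The plan is to induct on the number of vertices $n$. The base case $n=3$ (or even $n=1,2$) is immediate: in a tournament on two vertices, the single directed edge is a Hamiltonian path, and in a tournament on three vertices one can check directly that a Hamiltonian path always exists (either there is a 3-cycle, any two consecutive edges of which give a path, or the preferences are transitive, in which case ordering by ``dominance'' works).

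For the inductive step, suppose every tournament on $n-1$ vertices has a Hamiltonian path, and let $G$ be a tournament on $n$ vertices $y_1,\ldots,y_n$. Remove one vertex, say $y_n$, to obtain a tournament $G'$ on the remaining $n-1$ vertices; by the inductive hypothesis there is a Hamiltonian path $y_{i_1}\rightarrow y_{i_2}\rightarrow\cdots\rightarrow y_{i_{n-1}}$ in $G'$. Now I would insert $y_n$ into this path. The key step is a case analysis on where $y_n$ fits: if $y_n\rightarrow y_{i_1}$, prepend it; if $y_{i_{n-1}}\rightarrow y_n$, append it; otherwise, since $y_{i_1}\rightarrow y_n$ but $y_n\rightarrow y_{i_{n-1}}$, there must be a smallest index $k$ such that $y_n\rightarrow y_{i_k}$ (and $y_{i_{k-1}}\rightarrow y_n$, by minimality of $k$). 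Then inserting $y_n$ between $y_{i_{k-1}}$ and $y_{i_k}$ yields a Hamiltonian path on all $n$ vertices, since $y_{i_{k-1}}\rightarrow y_n\rightarrow y_{i_k}$ and all other edges of the path are unchanged.

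The main obstacle — though it is minor — is making the ``smallest index'' argument airtight: one must be sure that the set $\{k : y_n\rightarrow y_{i_k}\}$ is nonempty (guaranteed because $y_n\rightarrow y_{i_{n-1}}$ in the remaining case) and that $k\geqslant 2$ (guaranteed because $y_{i_1}\rightarrow y_n$, so $k=1$ is excluded), which together ensure both neighbors $y_{i_{k-1}}$ and $y_{i_k}$ exist and have the right edge orientations relative to $y_n$. Everything else is routine bookkeeping on indices. No probabilistic input is needed; this is purely combinatorial, and it relies only on the tournament property (exactly one directed edge between each pair of vertices) stated just before the lemma.
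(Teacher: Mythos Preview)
Your proof is correct and follows essentially the same approach as the paper: induction on $n$, with the inductive step inserting the extra vertex into the Hamiltonian path on the smaller tournament via the same three-case analysis (prepend, append, or insert at the smallest index where the orientation flips). The paper's write-up differs only cosmetically in the ordering of the cases and in spelling out the $n=3$ base case explicitly.
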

\begin{proof}[Proof of Lemma \ref{lem: graph theory}]
We prove the statement by induction on $n$.

Base case ($n=3$):
Without loss of generality, assume $y_1 \rightarrow y_2$. We analyze three cases based on the direction of edges involving $y_3$:
\begin{itemize} \item If $y_2 \rightarrow y_3$, then $y_1 \rightarrow y_2 \rightarrow y_3$ forms the required Hamiltonian path (see the left graph in Figure~\ref{fig:demonstration n=3}).
\item If $y_3 \rightarrow y_2$ and $y_1 \rightarrow y_3$, then $y_1 \rightarrow y_3 \rightarrow y_2$ forms the required Hamiltonian path (see the middle graph in Figure~\ref{fig:demonstration n=3}).
\item If $y_3 \rightarrow y_2$ and $y_3 \rightarrow y_1$, then $y_3 \rightarrow y_1 \rightarrow y_2$ forms the required Hamiltonian path (see the right graph in Figure~\ref{fig:demonstration n=3}).
\end{itemize}

Inductive step:
Suppose the claim holds for $n$. That is, for any tournament graph with $n$ vertices, there exists a Hamiltonian path. Consider a tournament graph with $n+1$ vertices, $y_1, \dots, y_{n+1}$. By the induction hypothesis, there exists an ordering
$y_{i_1}\rightarrow\ldots\rightarrow y_{i_n}$ such that $\{i_1, \dots, i_n\} = \{1, \dots, n\}$. We now insert $y_{n+1}$ into this sequence, considering three cases:
\begin{itemize} 
    \item If $y_{i_k} \rightarrow y_{n+1}$ for all $k \in [n]$, then appending $y_{n+1}$ at the end gives the Hamiltonian path  $y_{i_1}\rightarrow\ldots\rightarrow y_{i_n}\rightarrow y_{n+1}$. (See the left graph in Figure~\ref{fig: demonstration n+1 part I}.)
    \item If $y_{n+1} \rightarrow y_{i_1}$, then prepending $y_{n+1}$ at the beginning gives the Hamiltonian path $y_{n+1}\rightarrow y_{i_1}\rightarrow\ldots\rightarrow y_{i_n}$. (See the right graph in Figure~\ref{fig: demonstration n+1 part I}.)
    \item Otherwise, there exists a smallest index $j \in {2, \dots, n}$ such that $y_{n+1} \rightarrow y_{i_j}$. In this case, we observe that $y_{i_k} \rightarrow y_{n+1}$ for all $k = 1, \dots, j-1$, allowing us to insert $y_{n+1}$ at position $j$: $y_{i_1}\rightarrow\ldots\rightarrow y_{i_{j-1}}\rightarrow y_{n+1}\rightarrow y_{i_j}\rightarrow\ldots y_{i_n}$. (See Figure~\ref{fig:demonstration n+1 part II} for an illustration.)
\end{itemize}
By induction, the claim holds for all $n \geqslant 3$.
\end{proof}

\begin{figure}
\centering
\begin{tikzpicture}
\node[draw, circle] (1) at (-4.8, 0) {$y_1$};
\node[draw, circle] (2) at (-3.6, 1) {$y_2$};
\node[draw, circle] (3) at (-2.4, 0) {$y_3$};
\tikzset{>=stealth}
\draw[->, line width=0.3mm, color=red] (1) -- (2);
\draw[->, line width=0.3mm, color=red] (2) -- (3);
    
\node[draw, circle] (1) at (-1.2, 0) {$y_1$};
\node[draw, circle] (2) at (0, 1) {$y_2$};
\node[draw, circle] (3) at (1.2, 0) {$y_3$};
\tikzset{>=stealth}
\draw[->, line width=0.3mm] (1) -- (2);
\draw[->, line width=0.3mm, color=red] (3) -- (2);
\draw[->, line width=0.3mm, color=red] (1) -- (3);

\node[draw, circle] (1) at (2.4, 0) {$y_1$};
\node[draw, circle] (2) at (3.6, 1) {$y_2$};
\node[draw, circle] (3) at (4.8, 0) {$y_3$};
\tikzset{>=stealth}
\draw[->, line width=0.3mm, color=red] (1) -- (2);
\draw[->, line width=0.3mm] (3) -- (2);
\draw[->, line width=0.3mm, color=red] (3) -- (1);
\end{tikzpicture}
\caption{Demonstration for $n=3$: the red arrows show the Hamiltonian path in the directed graph.}
\label{fig:demonstration n=3}
\end{figure}

\begin{figure}
\centering
\begin{tikzpicture}
\node[draw, circle] (1) at (-6.6, 0) {$y_{i_1}$};
\node (rest) at (-4.8,0) {$\cdots\cdots$};
\node[draw, circle] (3) at (-3, 0) {$y_{i_n}$};
\node[draw, circle] (n+1) at (-4.8, 1.5) {$y_{n+1}$};
\tikzset{>=stealth}
\draw[->, line width=0.3mm, color=red] (1) -- (rest);
\draw[->, line width=0.3mm, color=red] (rest) -- (3);
\draw[->, line width=0.3mm, color=red] (3) -- (n+1);
        
\node[draw, circle] (1) at (-1.5, 0) {$y_{i_1}$};
\node (rest) at (0.3,0) {$\cdots\cdots$};
\node[draw, circle] (3) at (2.1, 0) {$y_{i_n}$};
\node[draw, circle] (n+1) at (0.3, 1.5) {$y_{n+1}$};
\tikzset{>=stealth}
\draw[->, line width=0.3mm, color=red] (1) -- (rest);
\draw[->, line width=0.3mm, color=red] (rest) -- (3);
\draw[->, line width=0.3mm, color=red] (n+1) -- (1);
\end{tikzpicture}
\caption{Demonstration for induction in case $n+1$: the red arrows show the Hamiltonian path in the graph.}
\label{fig: demonstration n+1 part I}
\end{figure}

\begin{figure}
\centering
\begin{tikzpicture}
\node[draw, circle, minimum size=1.2cm] (1) at (3., 0) {$y_{i_1}$};
\node (rest1) at (4.5,0) {$\cdots$};
\node[draw, circle, minimum size=1.cm] (j-1) at (6.,0) {$y_{i_{j-1}}$};
\node[draw, circle, minimum size=1.2cm] (j) at (7.75,0) {$y_{i_j}$};
\node (rest2) at (9.25,0) {$\cdots$};
\node[draw, circle, minimum size=1.2cm] (3) at (10.75, 0) {$y_{i_n}$};
\node[draw, circle, minimum size=1cm] (n+1) at (6.875, 2.5) {$y_{n+1}$};
\tikzset{>=stealth}
\draw[->, line width=0.3mm, color=red] (1) -- (rest1);
\draw[->, line width=0.3mm, color=red] (rest1) -- (j-1);
\draw[->, line width=0.3mm] (j-1) -- (j);
\draw[->, line width=0.3mm, color=red] (j) -- (rest2);
\draw[->, line width=0.3mm, color=red] (rest2) -- (3);
\draw[->, line width=0.3mm] (1) -- (n+1);
\draw[->, line width=0.3mm] (n+1) -- (3);
\draw[->, line width=0.3mm] (rest1) -- (n+1);
\draw[->, line width=0.3mm] (n+1) -- (rest2);
\draw[->, line width=0.3mm, color=red] (j-1) -- (n+1);
\draw[->, line width=0.3mm, color=red] (n+1) -- (j);
\end{tikzpicture}
\\[16pt]
\begin{tikzpicture}
\node[draw, circle, minimum size=1.1cm] (1) at (-4.4, 0) {$y_{i_1}$};
\node[draw, circle, minimum size=1.1cm] (rest1) at (-2.6,0) {$y_{i_2}$};
\node (rest2) at (-1.0,0) {$\cdots$};
\node[draw, circle, minimum size=1.1cm] (3) at (0.7, 0) {$y_{i_n}$};
\node[draw, circle, minimum size=1cm] (n+1) at (-3.5, 2.) {$y_{n+1}$};
\tikzset{>=stealth}
\draw[->, line width=0.3mm] (1) -- (rest1);
\draw[->, line width=0.3mm, color=red] (rest2) -- (3);
\draw[->, line width=0.3mm, color=red] (rest1) -- (rest2);
\draw[->, line width=0.3mm, color=red] (1) -- (n+1);
\draw[->, line width=0.3mm] (n+1) -- (3);
\draw[->, line width=0.3mm, color=red] (n+1) -- (rest1);
\node[draw, circle, minimum size=1.1cm] (1) at (2.9, 0) {$y_{i_1}$};
\node (rest1) at (4.65,0) {$\cdots$};
\node[draw, circle, minimum size=1cm] (rest2) at (6.4,0) {$y_{i_{n-1}}$};
\node[draw, circle, minimum size=1.1cm] (3) at (8.2, 0) {$y_{i_n}$};
\node[draw, circle, minimum size=1.1cm] (n+1) at (7.3, 2.) {$y_{n+1}$};
\tikzset{>=stealth}
\draw[->, line width=0.3mm, color=red] (1) -- (rest1);
\draw[->, line width=0.3mm] (1) -- (n+1);
\draw[->, line width=0.3mm, color=red] (rest1) -- (rest2);
\draw[->, line width=0.3mm] (rest1) -- (n+1);
\draw[->, line width=0.3mm, color=red] (rest2) -- (n+1);
\draw[->, line width=0.3mm, color=red] (n+1) -- (3);
\draw[->, line width=0.3mm] (rest2) -- (3);
\end{tikzpicture}
\caption{Demonstration for induction in case $n+1$: the red arrows show the Hamiltonian path in the graph.}
\label{fig:demonstration n+1 part II}
\end{figure}

\begin{lemma}\label{lem: graph theory of response decomposition}
Consider a tournament graph $G$ with $n$ ($n\geqslant 3$) vertices. Then the vertex set of $G$ can be partitioned into disjoint subsets $S_1,\ldots,S_k$ such that:
\begin{enumerate}
    \item For each $S_i$, either $|S_i|=1$ or the subgraph induced by $S_i$ contains a Hamiltonian cycle;
    \item For any $i<j$, and for all $y\in S_i$, $y'\in S_j$, we have $y\rightarrow y'$.
\end{enumerate}
Moreover, this decomposition is unique.
\end{lemma}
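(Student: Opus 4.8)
## Proof Plan for Lemma \ref{lem: graph theory of response decomposition}

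The plan is to prove existence and uniqueness of the decomposition by induction on $n$, using the notion of the set of "top" vertices — those not dominated by any vertex outside a carefully chosen strongly connected component. First I would recall a classical structural fact about tournaments: the \emph{condensation} of any tournament (the graph obtained by contracting each strongly connected component to a point) is itself a transitive tournament, hence admits a unique linear order $S_1' \to S_2' \to \cdots \to S_k'$ on its components. This is because in the condensation every pair of components is joined by an edge (it inherits the tournament property) and there are no cycles (a cycle among components would merge them into a single strongly connected component), so the condensation is an acyclic tournament, which is necessarily a total order. Taking $S_1, \ldots, S_k$ to be the strongly connected components listed in this order gives property 2 immediately: if $i < j$, then all edges between $S_i$ and $S_j$ point from $S_i$ to $S_j$.

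For property 1, I would invoke the well-known theorem of Camion (or Moon's stronger theorem) that every strongly connected tournament on $r \geq 3$ vertices is Hamiltonian, i.e., contains a Hamiltonian cycle; and a strongly connected tournament on a single vertex is the trivial case $|S_i| = 1$. The only subtlety is that a strongly connected component could have exactly two vertices — but that is impossible in a tournament, since between two vertices there is a single directed edge and hence no $2$-cycle. Thus each component has either one vertex or at least three, and in the latter case Camion's theorem supplies the Hamiltonian cycle. If the paper prefers to be self-contained, this sub-step can be proved by the same insertion-style induction used in Lemma \ref{lem: graph theory}: given a Hamiltonian cycle on $r$ vertices of a strongly connected tournament and a new vertex $v$, either $v$ fits into some arc of the cycle, or (if not) strong connectivity forces an arc configuration that still lets one build a longer cycle; I would cite Lemma \ref{lem: graph theory} and adapt its argument rather than reproduce it.

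For uniqueness, I would argue that the strongly connected components of a directed graph are intrinsically defined (two vertices lie in the same component iff each is reachable from the other), so the \emph{partition} $\{S_1, \ldots, S_k\}$ is forced; and the \emph{ordering} is forced because property 2 is equivalent to saying the sequence respects the unique total order on the transitive condensation. Concretely, if $\{S_i\}$ is any partition satisfying 1 and 2, then property 1 forces each $S_i$ to be strongly connected (a Hamiltonian cycle makes a set strongly connected, and a singleton is trivially so), and property 2 forces $S_i$ to be a \emph{maximal} such set — if two of these strongly connected blocks could be merged into a larger strongly connected set, there would be edges in both directions between them, contradicting 2. Hence the $S_i$ are exactly the strongly connected components, and 2 pins down their order. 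The main obstacle I anticipate is not conceptual but expository: deciding whether to cite Camion's theorem and the condensation-is-transitive fact as known, or to reprove them; given that Lemma \ref{lem: graph theory} already develops the Hamiltonian-path machinery by hand, the cleanest route is to state these as standard tournament facts with a one-line justification and a reference, keeping the proof short.
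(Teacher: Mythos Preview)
Your proposal is correct and takes a genuinely different route from the paper. The paper proves existence by induction on $n$: in the inductive step it either peels off a dominating vertex (via Lemma~\ref{lem:graph-Condorcet winning and cycle}) when $G$ is acyclic, or else picks a \emph{maximal cycle} $S^\star$ (one to which no outside vertex can be appended to form a longer cycle), proves by a direct insertion argument that every vertex outside $S^\star$ either dominates all of $S^\star$ or is dominated by all of $S^\star$, and then recurses on the two sides. For uniqueness the paper uses an out-degree counting argument: it shows that vertices in $S_1$ have strictly larger out-degree than vertices outside $S_1$, so $S_1$ is determined, and then iterates. Your approach instead identifies the decomposition as the strongly connected component decomposition ordered by the (necessarily transitive) condensation, and appeals to Camion's theorem for the Hamiltonian cycle in each nontrivial component; your uniqueness argument---that properties 1 and 2 force each $S_i$ to be a maximal strongly connected set---is cleaner and more conceptual than the out-degree bookkeeping. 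The paper's argument is fully self-contained (its ``Conclusion 1'' and the max-cycle device effectively reprove the relevant piece of Camion's theorem from scratch), which matches the spirit of Lemma~\ref{lem: graph theory}; your argument is shorter and situates the lemma in standard tournament theory, at the cost of citing Camion's theorem. Either is fine; if you want to stay self-contained, note that the paper's insertion argument for ``Conclusion 1'' is exactly the adaptation of Lemma~\ref{lem: graph theory} you allude to.
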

\begin{proof}[Proof of Lemma \ref{lem: graph theory of response decomposition}]
First, we prove the existence of such decomposition by induction on $n$. The base case $n=3$ is straightforward to verify. Then we prove this conclusion for any $n\geqslant 3$ by induction. In the case where $\vert G\vert=n+1$, there are two separate cases to deal with:

\subsubsection*{Case 1: there is no cycle in $G$} Using Lemma \ref{lem:graph-Condorcet winning and cycle}, there exists a vertex $y^\star\in G$, such that $y^\star\rightarrow y$ for any vertex $y\neq y^\star$. Then by the induction assumption, we could seek a decomposition $\{S_1,\ldots, S_k\}$ of the directed graph $G\backslash \{y^\star\}$ as $\vert G\backslash \{y^\star\}\vert=n$. Therefore, $\{y^\star, S_1,\ldots, S_k\}$ is a desired decomposition of $G$.

\subsubsection*{Case 2: there exists a cycle in $G$} We choose $S^\star$ to be a maximum cycle, i.e. we cannot add any point $y\in G\backslash S^\star$ into $S^\star$ such that $S^\star\cup\{y\}$ also forms a cycle\footnote{A maximum cycle can be found by iteratively adding vertices outside into the current cycle until none can be added.}. Then we introduce the following crucial conclusion.

\subsubsection*{Conclusion 1}
Given $S^\star$, for any $y\in G\backslash S^\star$, it must hold either $y\rightarrow y^\star$ for any $y^\star\in S^\star$ or $y^\star\rightarrow y$ for any $y^\star\in S^\star$.

\subsubsection*{Proof of the Conclusion 1}
We prove this conclusion by contradiction. If the conclusion does not hold, there exists $y_1^\star\in S^\star$ such that $y_1^\star \rightarrow y$. Now suppose that $S^\star$ forms a cycle $y_1^\star\rightarrow\ldots\rightarrow y^\star_{\vert S^\star\vert}\rightarrow y_1^\star$. Then we can find the smallest index $k\in \{2,\ldots, \vert S^\star\vert\}$ such that $y\rightarrow y_k^\star$ as we assume that this conclusion does not hold. Therefore, we find that $y_1^\star\rightarrow\ldots\rightarrow y_{k-1}^\star\rightarrow y\rightarrow y_k^\star\rightarrow\ldots\rightarrow y_1^\star$ also forms a cycle in $G$, which causes a contradiction to the fact that $S^\star$ is the maximum cycle.

\subsubsection*{Back to the proof of Case 2}
Using the above conclusion, we can divide $G$ into $S^+, S^\star, S^-$, where for any $y^+\in S^+, y^\star\in S^\star$, $y^+\rightarrow y^\star$, and for any $y^-\in S^-, y^\star\in S^\star$, $y^\star\rightarrow y^-$. By the induction assumption, we could find decomposition in both $S_+$ and $S_{-}$, and then we complete the proof.

Now we turn to the uniqueness part. Suppose there exists two distinct decomposition $S_1,\ldots,S_k$ and $S_1^\prime,\ldots,S_l^\prime$. If $|S_1|=1$, we write $S_1=\{s_1\}$, then $s_1$ is the Condorcet winner. It then follows that $S_1^\prime=\{s_1\}=S_1$. The same argument works for the case where $|S_1^\prime|=1$. We now assume $|S_1|\geqslant 2$ and $|S_1^\prime|\geqslant 2$, and define \( f(x) \) for each vertex \( x \in G \) by
\[
f(x) = \left\vert \{ y \in G | x \rightarrow y \}\right\vert.
\]
Notice that for any decomposition \( S_1, \dots, S_k \) where $|S_1|\geqslant 2$, we have $f(x)\geqslant |G|-|S_1|+1$ for $x\in S_1$ and $f(x)\leqslant |G|-|S_1|-1$ for $x\notin S_1$. If \( |S_1| \neq |S_1'| \), without loss of generality, we assume \( |S_1| > |S_1'| \). Then, for any \( x \in S_1' \) that is not in \( S_1 \),
\[
f(x) > |G| - |S_1'| > |G| - |S_1| > f(x),
\]
which is a contradiction. Thus, \( S_1' \subseteq S_1 \), implying that there exists some \( y \in S_1 \setminus S_1' \) such that \( y \rightarrow x \) for some \( x \in S_1' \), again yielding a contradiction.
If \( |S_1| = |S_1'| \), then for any \( x \in S_1 \) that is not in \( S_1' \),
\[
f(x) > |G| - |S_1| = |G| - |S_1'| > f(x),
\]
which is again a contradiction, implying \( S_1 = S_1' \). Applying the same argument iteratively to the subgraph \( G \setminus S_1 \), we conclude \( k = l \) and \( S_i = S_i' \) for all \( i \).  
Hence, the decomposition is unique, completing the proof.
\end{proof}

\subsection{Properties of Symmetric Constant-Sum Games}
In this section we summarize several important properties the games considered in Nash learning from human feedback. These properties arise from the symmeytic and constant-sum nature of the games.
\begin{lemma}
    \label{prop: equal strategy payoff = 1/2}
    For any strategy $\pi,\pi'$, we have the following identity, 
    \[
    \cP(\pi\succ \pi')+\cP(\pi'\succ \pi)=1.
    \]
    In particular, taking $\pi'=\pi$, we obtain that 
    \[
    \cP(\pi\succ \pi)=\frac{1}{2}.
    \]
\end{lemma}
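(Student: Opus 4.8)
The plan is to unfold the definition of the policy-level preference $\cP(\pi\succ\pi')$ in terms of the response-level preference and then invoke the elementary identity $\cP(y\succ y')+\cP(y'\succ y)=1$. Concretely, I would first fix a prompt $x$ and write, using linearity of expectation,
\[
\cP(\pi\succ\pi'\mid x)+\cP(\pi'\succ\pi\mid x)=\mathbb{E}_{y\sim\pi(\cdot\mid x),\,y'\sim\pi'(\cdot\mid x)}\bigl[\cP(y\succ y'\mid x)+\cP(y'\succ y\mid x)\bigr],
\]
where the key observation is that the pair $(y,y')$ has the same joint law in both terms: the integrand appearing in $\cP(\pi'\succ\pi\mid x)$ is $\cP(y'\succ y\mid x)$ with $y\sim\pi(\cdot\mid x)$ and $y'\sim\pi'(\cdot\mid x)$, so after relabeling it combines with the first integrand over the common product measure. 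Since the bracketed quantity is identically $1$, the conditional identity $\cP(\pi\succ\pi'\mid x)+\cP(\pi'\succ\pi\mid x)=1$ follows, and taking $\mathbb{E}_{x\sim\rho}$ of both sides yields $\cP(\pi\succ\pi')+\cP(\pi'\succ\pi)=1$.

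For the second assertion, I would simply specialize to $\pi'=\pi$ in the identity just established, which gives $2\,\cP(\pi\succ\pi)=1$, hence $\cP(\pi\succ\pi)=\frac{1}{2}$.

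There is no genuine obstacle here; the only point that warrants a word of care is the behavior on the diagonal $y=y'$ inside the expectation. Since Assumption~\ref{ass:strict_preference} only restricts distinct responses, I would note that the relation $\cP(y\succ y')+\cP(y'\succ y)=1$ is understood to hold for all pairs by the very definition of $\cP$ as the expected fraction of labelers preferring one response to the other, together with the convention $\cP(y\succ y)=\frac{1}{2}$; consequently the expectation computation goes through unchanged regardless of whether $\pi$ and $\pi'$ have overlapping support.
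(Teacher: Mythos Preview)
Your proposal is correct and follows essentially the same approach as the paper: unfold the definition of $\cP(\pi\succ\pi')$, relabel variables so both terms share the common product measure $\pi\otimes\pi'$, and invoke the pointwise identity $\cP(y\succ y')+\cP(y'\succ y)=1$. Your explicit handling of the prompt-conditioning and the diagonal convention is a harmless elaboration on what the paper leaves implicit.
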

\begin{proof}[Proof of Lemma \ref{prop: equal strategy payoff = 1/2}]
    Note that 
    \[
    \begin{aligned}
       &\ \cP(\pi\succ \pi')+\cP(\pi'\succ \pi)\\
       = &\ \E_{y\sim\pi}\E_{y'\sim\pi'}\cP(y\succ y')+\E_{y\sim\pi'}\E_{y'\sim\pi}\cP(y\succ y')\\
       = &\  \E_{y\sim\pi}\E_{y'\sim\pi'}\cP(y\succ y')+\E_{y'\sim\pi'}\E_{y\sim\pi}\cP(y'\succ y)\\
       = &\ \E_{y\sim\pi}\E_{y'\sim\pi'} \left(\cP(y\succ y')+\cP(y'\succ y)\right)\\
       =&\ 1.
    \end{aligned}
    \]
    Hence, we complete the proof.
\end{proof}
\begin{lemma}
    \label{prop: min-max equivalent}
    Given strategy $\pi$, the following identity holds:
    \[
\arg\max_{\sigma}\cP(\sigma\succ \pi)=
    \arg\min_\sigma \cP(\pi\succ \sigma).
    \]
\end{lemma}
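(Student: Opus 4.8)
The plan is to show the two optimization problems have the same objective up to an additive constant that does not depend on the variable $\sigma$, so that the argmax of one equals the argmin of the other. The key is Lemma~\ref{prop: equal strategy payoff = 1/2}, which gives the identity $\cP(\sigma \succ \pi) + \cP(\pi \succ \sigma) = 1$ for every strategy $\sigma$ (this is exactly the symmetric constant-sum structure of the game). Hence $\cP(\sigma \succ \pi) = 1 - \cP(\pi \succ \sigma)$ as functions of $\sigma$.

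First I would fix $\pi$ and define $g(\sigma) := \cP(\pi \succ \sigma)$, so that $\cP(\sigma \succ \pi) = 1 - g(\sigma)$ by the above identity. Then maximizing $\sigma \mapsto \cP(\sigma \succ \pi) = 1 - g(\sigma)$ over all strategies $\sigma$ is equivalent to minimizing $g(\sigma) = \cP(\pi \succ \sigma)$, since the map $t \mapsto 1 - t$ is strictly decreasing on $\mathbb{R}$. Concretely, for any strategies $\sigma_1, \sigma_2$, the inequality $\cP(\sigma_1 \succ \pi) \geqslant \cP(\sigma_2 \succ \pi)$ holds if and only if $1 - g(\sigma_1) \geqslant 1 - g(\sigma_2)$, i.e.\ if and only if $g(\sigma_1) \leqslant g(\sigma_2)$. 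Therefore a strategy $\sigma^\star$ maximizes $\cP(\cdot \succ \pi)$ over the strategy simplex if and only if it minimizes $\cP(\pi \succ \cdot)$ over the same set, which is precisely the claimed equality $\arg\max_\sigma \cP(\sigma \succ \pi) = \arg\min_\sigma \cP(\pi \succ \sigma)$ (both argmax and argmin are understood as the sets of maximizers/minimizers, and both are nonempty since $\Delta(C)$ is compact and the objective is continuous).

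There is essentially no obstacle here: the entire content is the constant-sum identity from Lemma~\ref{prop: equal strategy payoff = 1/2}, and the rest is the trivial observation that affine-decreasing reparametrization swaps max and min. The only point worth a sentence of care is that the equality of argmax and argmin is an equality of \emph{sets} of optimizers, and that these sets are over the same domain (all strategies $\sigma$), which is immediate once we have written $\cP(\sigma\succ\pi) = 1 - \cP(\pi\succ\sigma)$ pointwise in $\sigma$.
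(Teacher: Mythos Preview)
Your proof is correct and takes essentially the same approach as the paper: the paper's proof consists of the single line ``This identity follows directly from Lemma~\ref{prop: equal strategy payoff = 1/2},'' and you have simply written out that deduction in full detail.
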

\begin{proof}[Proof of Lemma \ref{prop: min-max equivalent}]
    This identity follows directly from Lemma \ref{prop: equal strategy payoff = 1/2}.
\end{proof}
\begin{lemma}\label{lem: Nash lemma zero-sum}
    If $(\pi,\pi')$ is a Nash equilibrium, then $(\pi,\pi)$ and $(\pi',\pi')$ are both Nash equilibria.
\end{lemma}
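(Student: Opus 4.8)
The plan is to unwind the definition of a Nash equilibrium for this symmetric constant-sum game and notice that the resulting inequalities already say exactly what we need. Recall that in the NLHF game both players share the action set $C$; at a profile $(\sigma,\sigma')$ player~1's payoff is $\cP(\sigma\succ\sigma')$ and player~2's is $\cP(\sigma'\succ\sigma)=1-\cP(\sigma\succ\sigma')$ by Lemma~\ref{prop: equal strategy payoff = 1/2}. So the statement that $(\pi,\pi')$ is a Nash equilibrium, i.e.\ that neither player can gain by a unilateral deviation, unpacks into: (I) $\cP(\sigma\succ\pi')\leqslant\cP(\pi\succ\pi')$ for every strategy $\sigma$ (player~1 cannot improve when player~2 plays $\pi'$), and (II) $\cP(\sigma'\succ\pi)\leqslant\cP(\pi'\succ\pi)$ for every strategy $\sigma'$ (player~2 cannot improve when player~1 plays $\pi$, after using $\cP(\pi'\succ\pi)=1-\cP(\pi\succ\pi')$ to express player~2's payoff as a quantity to be maximized).

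First I would pin down the equilibrium value. Taking $\sigma=\pi'$ in (I) gives $\tfrac12=\cP(\pi'\succ\pi')\leqslant\cP(\pi\succ\pi')$, and taking $\sigma'=\pi$ in (II) gives $\tfrac12=\cP(\pi\succ\pi)\leqslant\cP(\pi'\succ\pi)=1-\cP(\pi\succ\pi')$, hence $\cP(\pi\succ\pi')\leqslant\tfrac12$. Combining, $\cP(\pi\succ\pi')=\cP(\pi'\succ\pi)=\tfrac12$. Now, to check that $(\pi,\pi)$ is a Nash equilibrium, I would observe that by symmetry both players' best-response conditions against a fixed opponent strategy $\pi$ reduce to the single requirement ``$\cP(\sigma\succ\pi)\leqslant\cP(\pi\succ\pi)=\tfrac12$ for all $\sigma$'' (this is precisely where Lemma~\ref{prop: min-max equivalent} applies, identifying $\arg\max_\sigma\cP(\sigma\succ\pi)$ with $\arg\min_\sigma\cP(\pi\succ\sigma)$). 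But condition (II) already states $\cP(\sigma'\succ\pi)\leqslant\cP(\pi'\succ\pi)=\tfrac12$ for every $\sigma'$, so $(\pi,\pi)$ is a Nash equilibrium. Symmetrically, condition (I) states $\cP(\sigma\succ\pi')\leqslant\cP(\pi\succ\pi')=\tfrac12$ for every $\sigma$, which is the best-response condition characterizing $(\pi',\pi')$ as a Nash equilibrium, finishing the proof.

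I do not expect a genuine obstacle here; the only care required is the bookkeeping of which player deviates and the constant-sum substitution $\cP(a\succ b)=1-\cP(b\succ a)$ when rewriting player~2's deviation payoff. As a sanity check and an alternative route, one can instead invoke the minimax theorem: the symmetric constant-sum game has value $\tfrac12$, its set of optimal (maximin) strategies coincides for the two players, any element of that set paired with itself is a Nash equilibrium, and $\pi$ and $\pi'$ both lie in it; but the direct argument above avoids even appealing to existence of equilibria.
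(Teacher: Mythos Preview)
Your proof is correct and follows essentially the same approach as the paper: both arguments extract from the equilibrium conditions that $\cP(\pi\succ\pi')=\tfrac12$, then use the best-response inequality against $\pi$ (your condition~(II), the paper's $\min_\sigma\cP(\pi\succ\sigma)=\cP(\pi\succ\pi')$) together with Lemma~\ref{prop: min-max equivalent} to conclude that $(\pi,\pi)$ is an equilibrium. The only cosmetic difference is that the paper reaches the value $\tfrac12$ via a squeeze $\tfrac12=\cP(\pi\succ\pi)\leqslant\min_\sigma\cP(\pi\succ\sigma)\leqslant\cP(\pi\succ\pi)$, whereas you exhibit both inequalities $\cP(\pi\succ\pi')\geqslant\tfrac12$ and $\cP(\pi\succ\pi')\leqslant\tfrac12$ directly; the content is the same.
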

\begin{proof}[Proof of Lemma \ref{lem: Nash lemma zero-sum}]
    It is sufficient for us to prove that $(\pi,\pi)$ is a Nash equilibrium (the conclusion for $(\pi',\pi')$ can be obtained by symmetry). By the definition of Nash equilibrium, we obtain that 
    \[
    \cP(\pi\succ \pi')=\max_{\sigma}\cP(\sigma\succ \pi')\geqslant \cP(\pi'\succ \pi')=\frac{1}{2},
    \]
    where the last identity follows from Lemma \ref{prop: equal strategy payoff = 1/2}.
    Therefore, using Lemma \ref{prop: equal strategy payoff = 1/2}, we have
    \[
    \begin{aligned}
         \cP(\pi\succ \pi)=\frac{1}{2}\leqslant &\ \cP(\pi\succ \pi')\\
         =&\ \min_{\sigma}\cP(\pi\succ \sigma)\leqslant \cP(\pi\succ \pi).
    \end{aligned}
    \]
    Thus, all inequality is equality, and we obtain that 
    \begin{equation}
        \label{eq-proof-Nash lem: Nash equilibriium proof 1}
        \pi\in
    \arg\min_{\sigma}
    \cP(\pi\succ \sigma)=\arg\max_{\sigma}\cP(\sigma\succ \pi)
    \end{equation}
     by Lemma \ref{prop: min-max equivalent}. Thus, $(\pi,\pi)$ is a Nash equilibrium by definition.
\end{proof}

\section{Proofs of Main Technical Results in Section \ref{sec:benefit}}

\subsection{Proof of Proposition \ref{prop: cycle equivalent to triangle}}\label{pf: cycle equivalent to triangle}
\begin{proof}[Proof of Proposition \ref{prop: cycle equivalent to triangle}]
    We reformulate this problem in terms of a directed graph \( G \). Specifically, we represent the \( n \) responses \( y_1, \dots, y_n \) as vertices \( 1, \dots, n \), and we introduce a directed edge \( i \rightarrow j \) if and only if \( \cP(y_i \succ y_j) > 1/2 \). Under this reformulation, a Condorcet cycle in the preference structure corresponds to a cycle in \( G \).  

Now, consider the shortest cycle in \( G \), denoted as  
\[
i_1 \rightarrow i_2 \rightarrow \cdots \rightarrow i_r \rightarrow i_1.
\]
By definition, we must have \( r \geqslant 3 \). If \( r = 3 \), the conclusion follows immediately. Otherwise, if \( i_r \rightarrow i_2 \), then we obtain a strictly shorter cycle  
\[
i_2 \rightarrow \cdots \rightarrow i_r \rightarrow i_2,
\]
contradicting the assumption that \( i_1 \rightarrow \ldots \rightarrow i_r \rightarrow i_1 \) is the shortest cycle in \( G \). Therefore, it must be that \( i_2 \rightarrow i_r \), forming a cycle  
\[
i_1 \rightarrow i_2 \rightarrow i_r \rightarrow i_1,
\]
which constitutes a Condorcet triangle in the preference structure.  
\end{proof}

\subsection{Proof of Theorem \ref{thm:preferencetoreward}}
\label{pf:preferencetoreward}
\begin{proof}[Proof of Theorem \ref{thm:preferencetoreward}]
First, suppose there exist reward values \( r_1, \ldots, r_n \) that capture the preference. If there is a Condorcet cycle, by Proposition~\ref{prop: cycle equivalent to triangle}, there is a Condorcet triangle, denoted by $\{y_i,y_j,y_k\}$, which satisfies 
\[
\cP(y_i \succ y_j) > \frac{1}{2}\,\,,\,\, \cP(y_j \succ y_k) > \frac{1}{2}\,\,,\,\, \cP(y_k \succ y_i) > \frac{1}{2}\,.
\]
Since \( r_1, \ldots, r_n \) capture the preference, we have $r_i>r_j, r_j>r_k, r_k>r_i$, which causes a contradiction. Hence, there is no Condorcet cycle. 

Now, suppose there does not exist any triple \( i, j, k \) such that
\[
\cP(y_i \succ y_j) > \frac{1}{2}\,\,,\,\, \cP(y_j \succ y_k) > \frac{1}{2}\,\,,\,\, \text{ and } \, \cP(y_k \succ y_i) > \frac{1}{2}
\]
simultaneously. We aim to find reward values \( r_1, \dots, r_n \) that satisfy $r_i>r_j$ whenever $\cP(y_i\succ y_j)>\frac{1}{2}$. 
We reformulate the problem in terms of a directed graph \( G \). Specifically, we represent the \( n \) responses \( y_1, \dots, y_n \) as vertices \( 1, \dots, n \), and we introduce a directed edge \( i \rightarrow j \) if and only if \( \cP(y_i \succ y_j) > \frac{1}{2} \). Since we assume there is no such triple satisfying the above inequalities, it follows from Proposition \ref{prop: cycle equivalent to triangle} that there does not exist any cycle in this directed graph. By Lemma \ref{lem: graph theory}, we can find a Hamiltonian path \( y_{i_1} \rightarrow y_{i_2} \rightarrow \ldots \rightarrow y_{i_n} \) in \( G \). Given that there is no cycle in the graph, it must hold that \( y_{i_k} \rightarrow y_{i_l} \) for any \( k, l \in [n] \), with \( k < l \). Thus, we can define \( r_1, \dots, r_n \) such that \( r_{i_1} > r_{i_2} > \ldots > r_{i_n} \). Then, for any \( i, j \in [n] \) with \( r_i > r_j \), we obtain that 
\[ r_i > r_j\Longleftrightarrow y_i \rightarrow y_j \Longleftrightarrow \cP(y_i \succ y_j) > \frac{1}{2}\,.\] This completes our proof. 
\end{proof}

\subsection{Proof of Theorem \ref{thm:exist condorcet}}\label{pf:exist condorcet}
 First, in Lemma \ref{lem:lowerboundexistcondorcet}, we establish that for \( n = 3 \), the probability of the existence of a Condorcet cycle  is lowered bounded by a universal constant.

\begin{lemma}\label{lem:lowerboundexistcondorcet}
Under Assumption \ref{ass:population}, suppose that there are $3$ responses $y_1,y_2,y_3$.
Then there exists a universal constant $0<c<1/2$, such that the following lower bound holds for any $m\geqslant 5$,
\[\mathbb{P}
\left(
\cP(y_1\succ y_2)>\frac{1}{2},
\cP(y_2\succ y_3)>\frac{1}{2},
\cP(y_3\succ y_1)>\frac{1}{2}
\right)\geqslant c.\]
\end{lemma}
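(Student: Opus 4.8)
The plan is to rephrase the target event in terms of three correlated binomial counts and then prove a bound uniform in $m$ by combining a multivariate central limit theorem (for large $m$) with an explicit construction (for the finitely many small $m$). For labeler $j$, set $a_j:=\mathds{1}\{y_1\succ_j y_2\}$, $b_j:=\mathds{1}\{y_2\succ_j y_3\}$, $c_j:=\mathds{1}\{y_3\succ_j y_1\}$, and $A:=\sum_{j=1}^m a_j$, $B:=\sum_{j=1}^m b_j$, $C:=\sum_{j=1}^m c_j$. Under Assumption~\ref{ass:population}, $(a_j,b_j,c_j)$ is uniform over the six points of $\{0,1\}^3$ distinct from $(0,0,0)$ and $(1,1,1)$, since those two would encode a cyclic individual preference; in particular each coordinate is $\mathrm{Bernoulli}(1/2)$, any two of them equal $1$ simultaneously with probability $1/6$, and $a_j+b_j+c_j\in\{1,2\}$. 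The event in the lemma is precisely $E_m:=\{A>m/2,\ B>m/2,\ C>m/2\}$; write $p_m:=\mathbb{P}(E_m)$. The identity $a_j+b_j+c_j\le 2$ forces $A+B+C\le 2m$, whereas $E_m$ would require $A+B+C\ge 3(\lfloor m/2\rfloor+1)$; for $m=4$ this reads $A+B+C\ge 9>8$, so $p_4=0$. This parity obstruction is exactly why the statement is restricted to $m\ge 5$, and it tells us the small values of $m$ cannot be absorbed into a purely asymptotic argument.

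For the large-$m$ regime I would invoke the multivariate CLT. The i.i.d.\ vectors $(2a_j-1,\,2b_j-1,\,2c_j-1)$ are centered with covariance matrix $R=\tfrac43 I-\tfrac13 J$, where $J$ is the all-ones matrix; its eigenvalues are $\tfrac13$ (on the span of $(1,1,1)$) and $\tfrac43$ (with multiplicity two), so $R$ is positive definite. Hence $m^{-1/2}(2A-m,\,2B-m,\,2C-m)\Rightarrow N(0,R)$, and because $R$ is nondegenerate the boundary of the open orthant $(0,\infty)^3$ is an $N(0,R)$-null set, so weak convergence yields $p_m\to p_\infty:=\mathbb{P}\big(N(0,R)\in(0,\infty)^3\big)$. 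Since $N(0,R)$ has a strictly positive density on $\mathbb{R}^3$, we have $0<p_\infty<1$. Fix $M$ large enough that $p_m>p_\infty/2$ for all $m>M$.

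It then remains to control $5\le m\le M$. For each such $m$ I would exhibit one ranking profile lying in $E_m$: distribute the $m$ labelers as evenly as possible among the three rankings $y_1\succ y_2\succ y_3$, $y_2\succ y_3\succ y_1$, $y_3\succ y_1\succ y_2$, adjusting at most one labeler according to $m\bmod 3$; a short computation confirms $A,B,C>m/2$ for every $m\ge 5$, so $p_m>0$. Consequently $c:=\min\big(\{p_\infty/2\}\cup\{p_m:5\le m\le M\}\big)$ is a universal constant satisfying $0<c\le p_\infty/2<1/2$ and $p_m\ge c$ for all $m\ge 5$, which is exactly the asserted bound.

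The only genuinely delicate point is the parity obstruction at $m=4$ (equivalently, the constraint $a_j+b_j+c_j\in\{1,2\}$): it pins down the correct hypothesis $m\ge 5$ and forces the two-regime structure of the argument. Everything else is routine bookkeeping — the multivariate CLT, the observation that a nondegenerate Gaussian charges the open orthant and assigns zero mass to its boundary, and a finite check of explicit profiles for the small values of $m$.
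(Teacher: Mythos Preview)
Your proposal is correct and follows essentially the same two-regime strategy as the paper's proof: apply the multivariate CLT to obtain a uniform lower bound for large $m$, handle the finitely many small $m$ separately (the paper simply invokes $p_m\ge 1/6^m$ without exhibiting an explicit profile, whereas you construct one), and take the minimum. Your added observation that $p_4=0$ via the constraint $a_j+b_j+c_j\le 2$ neatly explains the restriction $m\ge 5$, which the paper does not comment on.
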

\begin{proof}[Proof of Lemma \ref{lem:lowerboundexistcondorcet}]
First, note the following lower bound for any $m\geqslant 5$, which comes from the fact that there exists at least one case that forms a Condorcet cycle in the total $6^m$ cases sharing the same probability,
\[
\mathbb{P}
\left(
\cP(y_1\succ y_2)>\frac{1}{2},
\cP(y_2\succ y_3)>\frac{1}{2},
\cP(y_3\succ y_1)>\frac{1}{2}
\right)\geqslant \frac{1}{6^m} > 0.
\]
Using Lemma \ref{lem:luce-model-equivalent-formula} we  consider i.i.d. random variables $\{U^j_{i}\}^{j\in [m]}_{i\in [3]}\sim \mathcal{U}[0,1]$. For simplicity, we denote by $X_j^{12}:=\mathds{1}_{\{U^j_1>U^j_2\}}$,  $X_j^{23}:=\mathds{1}_{\{U^j_2>U^j_3\}}$,  $X_j^{31}:=\mathds{1}_{\{U^j_3>U^j_1\}}$, for any $j\in [m]$. then we can rewrite the preference as follows, 
\[\cP(y_1\succ y_2)=\frac{1}{m}\sum_{j=1}^mX_j^{12}, \quad \cP(y_2\succ y_3)=\frac{1}{m}\sum_{j=1}^mX_j^{23}, \quad \cP(y_3\succ y_1)=\frac{1}{m}\sum_{j=1}^mX_j^{31}.\]
We can calculate the following expectations and covariances:
\[
\begin{aligned}
    &\ \mathbb{E}\left[X_j^{12}\right]=\mathbb{E}\left[X_j^{23}\right]=\mathbb{E}\left[X_j^{31}\right]=\frac{1}{2},\\
    &\ \operatorname{Var}\left(X_j^{12}\right)=\operatorname{Var}\left(X_j^{23}\right)=\operatorname{Var}\left(X_j^{31}\right)=\frac{1}{2}-\left(
    \frac{1}{2}\right)^2=\frac{1}{4},\\
    &\ \operatorname{Cov}\left(
    X_j^{12}, X_j^{23}
    \right) = \operatorname{Cov}\left(
    X_j^{23}, X_j^{31}
    \right) = 
    \operatorname{Cov}\left(
    X_j^{31}, X_j^{12}\right) = \frac{1}{6} - \frac{1}{2}\times\frac{1}{2}=-\frac{1}{12}.
\end{aligned}
\]
By the Central Limit Theorem, we obtain
{\small\[
\left(
\frac{1}{\sqrt{m}}\sum_{j=1}^m 
\left(X_j^{12}-\frac{1}{2}\right),
\frac{1}{\sqrt{m}}\sum_{j=1}^m 
\left(X_j^{23}-\frac{1}{2}\right),
\frac{1}{\sqrt{m}}\sum_{j=1}^m 
\left(X_j^{31}-\frac{1}{2}\right)
\right)\overset{d}{\to}\mathcal{N}\left(0, \Sigma\right),
\]}
where $\Sigma=\begin{pmatrix}
    \frac{1}{4} & -\frac{1}{12} & -\frac{1}{12} \\
    -\frac{1}{12} & \frac{1}{4} & -\frac{1}{12}\\
    -\frac{1}{12} & -\frac{1}{12} & \frac{1}{4} 
\end{pmatrix}$ is the covariance matrix. Consider $(Z_1, Z_2, Z_3)
\sim\mathcal{N}(0,\Sigma)$, then we obtain that 
{\footnotesize\[\begin{aligned}
&\lim_{m\to\infty}\mathbb{P}
\left(
\cP(y_1\succ y_2)>\frac{1}{2},
\cP(y_2\succ y_3)>\frac{1}{2},
\cP(y_3\succ y_1)>\frac{1}{2}
\right)\\
&\lim_{m\to\infty}\mathbb{P}\left(
\frac{1}{\sqrt{m}}\sum_{j=1}^m \left(X_j^{12}-\frac{1}{2}\right)>0, 
\frac{1}{\sqrt{m}}\sum_{j=1}^m \left(X_j^{23}-\frac{1}{2}\right)>0, 
\frac{1}{\sqrt{m}}\sum_{j=1}^m \left(X_j^{31}-\frac{1}{2}\right)>0
\right) \\ 
=&\mathbb{P}\left(Z_1>0, Z_2>0, Z_3>0\right) = c_1 >0,
    \end{aligned}
\]}
where $c_1$ is a universal constant. Therefore, there exists a universal constant $M_1\in\mathbb{N}_+$, such that for any $m>M_1$,
\[
    \mathbb{P}\left(
    \cP(y_1\succ y_2)>\frac{1}{2}, \cP(y_2\succ y_3)>\frac{1}{2}, \cP(y_3\succ y_1)>\frac{1}{2}
    \right)\geqslant \frac{c_1}{2} > 0.
\]
Thus, for any $m\geqslant 5$, we obtain the following lower bound 
\[
\mathbb{P}\left(
    \cP(y_1\succ y_2)>\frac{1}{2}, \cP(y_2\succ y_3)>\frac{1}{2}, \cP(y_3\succ y_1)>\frac{1}{2}
    \right)\geqslant\min\left\{\frac{1}{6^{M_1}}, \frac{c_1}{2}\right\}>0,
\]
and the conclusion follows.
\end{proof}

\begin{proof}[Proof of Theorem \ref{thm:exist condorcet}]
Let us first suppose $m=3$. Then
\[
\begin{aligned}
    &\ \mathbb{P}\left(
\text{there exists a Condorcet cycle}
\right)=1-\mathbb{P}\left(\text{there does not exist any Condorcet cycle}\right)\\
\geqslant &\ 1 - \mathbb{P}\left(\bigcap_{p=1}^{\lfloor n/3 \rfloor}\left\{\text{$3p-2, 3p-1, 3p$ does not form a Condorcet cycle}\right\}\right).
\end{aligned}
\]
Using Lemma \ref{lem:luce-model-equivalent-formula}, the preference choosing strategies can be characterized by i.i.d. random variables $\{U^j_i\}^{j\in [3]}_{i\in [n]}\sim \mathcal{U}[0,1]$. For any $p=1,\ldots, \lfloor n/3 \rfloor$, the random event that $3p-2, 3p-1, 3p$ does not form a Condorcet cycle can be characterized by i.i.d. random variables $\{U^j_i\}_{i\in \{3p-2,3p-1,3p\}}^{j\in [m]}$, so they are independent random events. We have 
\[
\begin{aligned}
    &\ \mathbb{P}\left(\bigcap_{p=1}^{\lfloor n/3 \rfloor}\left\{\text{$3p-2, 3p-1, 3p$ does not form a Condorcet cycle}\right\}\right)\\
    =&\ \prod_{p=1}^{\lfloor n/3 \rfloor}\mathbb{P}\left(\text{$3p-2, 3p-1, 3p$ does not form a Condorcet cycle}\right) \\
    =&\ \prod_{p=1}^{\lfloor n/3 \rfloor}\left(1-\mathbb{P}\left(\text{$3p-2, 3p-1, 3p$ form a Condorcet cycle}\right)\right)
    =\left(\frac{17}{18}\right)^{\lfloor n/3 \rfloor}.
\end{aligned}
\]
For $m=4$, similarly we have
\[
\begin{aligned}
    &\ \mathbb{P}\left(
\text{there exists a Condorcet cycle}
\right)=1-\mathbb{P}\left(\text{there does not exist any Condorcet cycle}\right)\\
\geqslant &\ 1 - \mathbb{P}\left(\bigcap_{p=1}^{\lfloor n/4 \rfloor}\left\{\text{$4p-3, 4p-2, 4p-1,4p$ does not form a Condorcet cycle}\right\}\right).
\end{aligned}
\]
Using Lemma \ref{lem:luce-model-equivalent-formula}, the preference choosing strategies can be characterized by i.i.d. random variables $\{U^j_i\}^{j\in [4]}_{i\in [n]}\sim \mathcal{U}[0,1]$. For any $p=1,\ldots, \lfloor n/4 \rfloor$, the random event that $4p-3, 4p-2, 4p-1,4p$ does not form a Condorcet cycle can be characterized by i.i.d. random variables $\{U^j_i\}_{i\in \{4p-3, 4p-2, 4p-1,4p\}}^{j\in [m]}$, so they are independent random events. We have 
\[
\begin{aligned}
    &\ \mathbb{P}\left(\bigcap_{p=1}^{\lfloor n/4 \rfloor}\left\{\text{$4p-3, 4p-2, 4p-1,4p$ does not form a Condorcet cycle}\right\}\right)\\
    =&\ \prod_{p=1}^{\lfloor n/4 \rfloor}\mathbb{P}\left(\text{$4p-3, 4p-2, 4p-1,4p$ does not form a Condorcet cycle}\right) \\
    =&\ \prod_{p=1}^{\lfloor n/4 \rfloor}\left(1-\mathbb{P}\left(\text{$4p-3, 4p-2, 4p-1,4p$ form a Condorcet cycle}\right)\right)
    \leqslant \left(\frac{2303}{2304}\right)^{\lfloor n/4 \rfloor}.
\end{aligned}
\]
Now for $m\geqslant 5$, we have similarly
\[
\begin{aligned}
    &\ \mathbb{P}\left(
\text{there exists a Condorcet cycle}
\right)=1-\mathbb{P}\left(\text{there does not exist any Condorcet cycle}\right)\\
\geqslant &\ 1 - \mathbb{P}\left(\bigcap_{p=1}^{\lfloor n/3 \rfloor}\left\{\text{$3p-2, 3p-1, 3p$ does not form a Condorcet cycle}\right\}\right).
\end{aligned}
\]
Similar to the proof for the case $m=3$, using Lemma \ref{lem:lowerboundexistcondorcet} and by symmetry, for all $m\geqslant 5$,
\[
\begin{aligned}
    &\ \mathbb{P}\left(\bigcap_{p=1}^{\lfloor n/3 \rfloor}\left\{\text{$3p-2, 3p-1, 3p$ does not form a Condorcet cycle}\right\}\right)\\
    =&\ \prod_{p=1}^{\lfloor n/3 \rfloor}\mathbb{P}\left(\text{$3p-2, 3p-1, 3p$ does not form a Condorcet cycle}\right) \\
    =&\ \prod_{p=1}^{\lfloor n/3 \rfloor}\left(1-\mathbb{P}\left(\text{$3p-2, 3p-1, 3p$ form a Condorcet cycle}\right)\right)
    \leqslant (1-2c)^{\lfloor n/3 \rfloor} .
\end{aligned}
\]
In summary, there exists constant $c_1,c_2>0$ such that for any $n\geqslant 3$ and $m\geqslant 3$,
 \[
\begin{aligned}
    &\ \mathbb{P}(\text{there exists a Condorcet cycle})\geqslant 1-c_1e^{-c_2n}.
\end{aligned}
\]
\end{proof}

\subsection{Proof of Theorem \ref{thm:luce exist condorcet}}

\begin{lemma}\label{lem:luce exist condorcet: linear solution-n=3}
    For any $m\geqslant 3$, there exists $(x_1,\ldots,x_6)\in\mathbb{N}^6$ satisfying the system
    \begin{align*}
        \begin{cases}
            x_1+x_2+x_3+x_4+x_5+x_6=m\\
            x_1+x_2+x_3 > m/2\\
            x_2+x_5+x_6 > m/2\\
            x_3+x_4+x_6 > m/2
        \end{cases}\,,
    \end{align*}
    if and only if $m\neq 4$. 
\end{lemma}
\begin{proof}[Proof of Lemma \ref{lem:luce exist condorcet: linear solution-n=3}]
    For $m\geqslant 6$, $x_3=x_6=\lceil m/2 \rceil-1, x_2=m+2-2\lceil m/2 \rceil, x_1=x_4=x_5=0$ satisfy $x_1+\ldots+x_6=m$, and 
    \begin{align*}
    \begin{cases}
        x_1+x_2+x_3=\lceil m/2 \rceil-1+m+2-2\lceil m/2 \rceil=m+1-\lceil m/2 \rceil\geqslant m/2+1>m/2\\
        x_2+x_5+x_6=m+2-2\lceil m/2 \rceil+\lceil m/2 \rceil-1=m+1-\lceil m/2 \rceil\geqslant m/2+1>m/2\\
        x_3+x_4+x_6=2\lceil m/2 \rceil-2 > m/2
    \end{cases}\,,
    \end{align*}
    where the last inequality follows from the fact that $m\geqslant 6$. When $m=3$, it is easy to check that $x_2=x_3=x_6=1, x_1=x_4=x_5=0$ is a solution to the system. When $m=5$, $x_2=x_3=2,x_6=1,x_1=x_4=x_5=0$ is a solution to the linear system. However, when $m=4$, if $x_1,\ldots,x_6$ is a solution to the system, i.e., $x_1+\ldots+x_6=4$, and
     {\footnotesize\begin{align*}
        \begin{cases}
            x_1+x_2+x_3\geqslant 3\\
            x_2+x_5+x_6\geqslant 3\\
            x_3+x_4+x_6\geqslant 3
        \end{cases}\Longrightarrow x_i\in\{0,1\}\,\, \text{for }i=1,\cdots,6\Longrightarrow x_1=x_2=x_3=x_4=x_5=x_6=1\,,
    \end{align*}}
    which causes a contradiction to the fact that $x_1+\ldots+x_6=4$. Hence, we complete the proof.
\end{proof}

\begin{lemma}\label{lem:luce exist condorcet: cycle prob-n=3}
    For $m\geqslant 5$ and $m=3$, Luce model with $n=3$ and $\lambda_i\in [A,B]$ for $i=1,2,3$ satisfies 
    \begin{align*}
        \mathbb{P}\left(\mathcal{P}(y_1\succ y_2)>\frac{1}{2},\mathcal{P}(y_2\succ y_3)>\frac{1}{2},\mathcal{P}(y_3\succ y_1)>\frac{1}{2}\right)\geqslant \left(\frac{A^2}{6B^2}\right)^m\,.
    \end{align*}
\end{lemma}
\begin{proof}[Proof of Lemma \ref{lem:luce exist condorcet: cycle prob-n=3}]
    Notice that for any permutation $\{\pi_1,\pi_2,\pi_3\}=\{1,2,3\}$, 
    \begin{align*}
        \mathbb{P}\left(y_{\pi_1}\succ y_{\pi_2}\succ y_{\pi_3}\right)=\frac{\lambda_{\pi_1}}{\lambda_{\pi_1}+\lambda_{\pi_2}+\lambda_{\pi_3}}\cdot \frac{\lambda_{\pi_2}}{\lambda_{\pi_2}+\lambda_{\pi_3}}\geqslant \frac{A}{3B}\cdot\frac{A}{2B}=\frac{A^2}{6B^2}\,.
    \end{align*}
    Let $x_1,\ldots,x_6$ satisfy the system in Lemma \ref{lem:luce exist condorcet: linear solution-n=3}, and $N_i=\sum_{j=1}^i x_j$ for $i=1,\ldots,6$ and $N_0=0$. We define 
    \begin{align*}
        \mathcal{A}_1:=\left\{j\in [m]:\text{$j$-th individual prefers }y_3\succ y_2\succ y_1\right\}\,,\\
        \mathcal{A}_2:=\left\{j\in [m]:\text{$j$-th individual prefers }y_2\succ y_3\succ y_1\right\}\,,\\
        \mathcal{A}_3:=\left\{j\in [m]:\text{$j$-th individual prefers }y_3\succ y_1\succ y_2\right\}\,,\\
        \mathcal{A}_4:=\left\{j\in [m]:\text{$j$-th individual prefers }y_1\succ y_3\succ y_2\right\}\,,\\
        \mathcal{A}_5:=\left\{j\in [m]:\text{$j$-th individual prefers }y_2\succ y_1\succ y_3\right\}\,,\\
        \mathcal{A}_6:=\left\{j\in [m]:\text{$j$-th individual prefers }y_1\succ y_2\succ y_3\right\}\,.
    \end{align*}
    Then  
    \begin{align*}
        &\ \mathbb{P}\left(\mathcal{P}(y_1\succ y_2)>\frac{1}{2},\mathcal{P}(y_2\succ y_3)>\frac{1}{2},\mathcal{P}(y_3\succ y_1)>\frac{1}{2}\right) \\ 
        = &\  \mathbb{P}\left(\vert\mathcal{A}_3\vert+\vert\mathcal{A}_4\vert+\vert \mathcal{A}_6\vert>\frac{m}{2}, \vert\mathcal{A}_2\vert+\vert\mathcal{A}_5\vert+\vert \mathcal{A}_6\vert>\frac{m}{2}, \vert\mathcal{A}_1\vert+\vert\mathcal{A}_2\vert+\vert \mathcal{A}_3\vert>\frac{m}{2}\right) \\
        \geqslant &\ \mathbb{P}\left(\bigcap_{i=1}^6\left\{\vert \mathcal{A}_i\vert=x_i\right\}\right)\geqslant\mathbb{P}\left(\bigcap_{i=1}^6\bigcap_{j=N_{i-1}+1}^{N_i}\left\{j\in \mathcal{A}_i\right\}\right)=\prod_{i=1}^6\prod_{j=N_{i-1}+1}^{N_i}\mathbb{P}\left(j\in \mathcal{A}_i\right)\\\geqslant & \left(\frac{A^2}{6B^2}\right)^m\,.
    \end{align*}
    Hence, we complete the proof.
\end{proof}

\begin{lemma}\label{lem:luce exist condorcet: cycle prob-n=4}
    For $m=4$, Luce model with $n=4$ and $\lambda_i\in [A,B]$ for $i=1,2,3,4$ satisfies 
    \begin{align*}
        \mathbb{P}\left(\mathcal{P}(y_1\succ y_2)>\frac{1}{2},\mathcal{P}(y_2\succ y_3)>\frac{1}{2},\mathcal{P}(y_3\succ y_4)>\frac{1}{2},\mathcal{P}(y_4\succ y_1)>\frac{1}{2}\right)\geqslant \left(\frac{A^3}{24B^3}\right)^4\,.
    \end{align*}
\end{lemma}
\begin{proof}[Proof of Lemma \ref{lem:luce exist condorcet: cycle prob-n=4}]
    Notice that for any permutation $\{\pi_1,\pi_2,\pi_3,\pi_4\}=\{1,2,3,4\}$, 
    \begin{align*}
        \mathbb{P}\left(y_{\pi_1}\succ y_{\pi_2}\succ y_{\pi_3}\succ y_{\pi_4}\right)=&\ \frac{\lambda_{\pi_1}}{\lambda_{\pi_1}+\lambda_{\pi_2}+\lambda_{\pi_3}+\lambda_{\pi_4}}\cdot \frac{\lambda_{\pi_2}}{\lambda_{\pi_2}+\lambda_{\pi_3}+\lambda_{\pi_4}}\cdot\frac{\lambda_{\pi_3}}{\lambda_{\pi_3}+\lambda_{\pi_4}}\\
        \geqslant &\ \frac{A}{4B}\cdot\frac{A}{3B}\cdot\frac{A}{2B}=\frac{A^3}{24B^3}\,.
    \end{align*}
    We define 
    \begin{align*}
        \mathcal{A}_1:=\left\{j\in [m]:\text{$j$-th individual prefers }y_1\succ y_2\succ y_3\succ y_4\right\}\,,\\
        \mathcal{A}_2:=\left\{j\in [m]:\text{$j$-th individual prefers }y_4\succ y_1\succ y_2\succ y_3\right\}\,,\\
        \mathcal{A}_3:=\left\{j\in [m]:\text{$j$-th individual prefers }y_3\succ y_4\succ y_1\succ y_2\right\}\,,\\
        \mathcal{A}_4:=\left\{j\in [m]:\text{$j$-th individual prefers }y_2\succ y_3\succ y_4\succ y_1\right\}\,.
    \end{align*}
    Then  
    \begin{align*}
        &\ \mathbb{P}\left(\mathcal{P}(y_1\succ y_2)>\frac{1}{2},\mathcal{P}(y_2\succ y_3)>\frac{1}{2},\mathcal{P}(y_3\succ y_4)>\frac{1}{2},\mathcal{P}(y_4\succ y_1)>\frac{1}{2}\right) \\ 
        \geqslant &\ \mathbb{P}\left(\vert\mathcal{A}_1\vert=1,\vert\mathcal{A}_2\vert=1,\vert\mathcal{A}_3\vert=1,\vert\mathcal{A}_4\vert=1\right)\geqslant\mathbb{P}\left(\bigcap_{j=1}^4\left\{j\in\mathcal{A}_j\right\}\right)=\prod_{j=1}^4\mathbb{P}\left(j\in\mathcal{A}_j\right)\geqslant\left(\frac{A^3}{24B^3}\right)^4\,.
    \end{align*}
    Hence, we complete the proof.
\end{proof}

\begin{proof}[Proof of Theorem \ref{thm:luce exist condorcet}]
    For $m\geqslant 5$ and $m=3$, we define for $k=1,\ldots,[n/3]$, 
    \begin{align*}
        \mathcal{S}^{(3)}_k=\left\{\text{No Condorcet cycle in }y_{3k-2}, y_{3k-1}, y_{3k}\right\}\,.
    \end{align*}
    By the equivalence shown in Lemma \ref{lem:luce-model-equivalent-formula}, random events $S^{(3)}_1,\ldots,S^{(3)}_{[n/3]}$ are independent. Then, using Lemma \ref{lem:luce exist condorcet: cycle prob-n=3},
    \begin{align*}
        \mathbb{P}\left(\mathcal{S}_k^{(3)}\right)&=1-\mathbb{P}\left(\left\{\text{Condorcet cycle in }y_{3k-2}, y_{3k-1}, y_{3k}\right\}\right)\\
        &\leqslant 1-\mathbb{P}\left(\mathcal{P}\left(y_{3k-2}\succ y_{3k-1}\right)>\frac{1}{2},\mathcal{P}\left(y_{3k-1}\succ y_{3k}\right)>\frac{1}{2},\mathcal{P}\left(y_{3k}\succ y_{3k-2}\right)>\frac{1}{2}\right)\\
        &\leqslant 1-\left(\frac{A^2}{6B^2}\right)^m\,.
    \end{align*}
    Hence, we get
    \begin{align*}
        \mathbb{P}_{m,n}\left(\text{Condorcet cycle}\right)=&\ 1-\mathbb{P}\left(\text{No Condorcet cycle}\right)\\
        \geqslant &\ 1-\mathbb{P}\left(\bigcap_{k=1}^{[n/3]}\mathcal{S}_k^{(3)}\right)=1-\prod_{k=1}^{[n/3]}\mathbb{P}\left(\mathcal{S}_k^{(3)}\right)\\
        \geqslant &\ 1-\left(1-\left(\frac{A^2}{6B^2}\right)^m\right)^{[n/3]}\geqslant 1-\exp\left(-\left(\frac{A^2}{6B^2}\right)^m\cdot\frac{n-2}{3}\right)\,.
    \end{align*}
    For $m=4$, we define for $k=1,\ldots,[n/4]$,
    \begin{align*}
        \mathcal{S}^{(4)}_k=\left\{\text{No Condorcet cycle in }y_{4k-3}, y_{4k-2}, y_{4k-1}, y_{4k}\right\}\,.
    \end{align*}
    Similarly, random events $S_1^{(4)},\ldots,S_{[n/4]}^{(4)}$ are independent by the equivalence shown in Lemma \ref{lem:luce-model-equivalent-formula}. Using Lemma \ref{lem:luce exist condorcet: cycle prob-n=4}, we get 
    \begin{align*}
        &\mathbb{P}\left(\left\{\text{Condorcet cycle in }y_{4k-3}, y_{4k-2}, y_{4k-1}, y_{4k}\right\}\right)\\
        &\geqslant \mathbb{P}\left(\mathcal{P}\left(y_{4k-3}\succ y_{4k-2}\right)>\frac{1}{2},\mathcal{P}\left(y_{4k-2}\succ y_{4k-1}\right)>\frac{1}{2},\mathcal{P}\left(y_{4k-1}\succ y_{4k}\right)>\frac{1}{2},\mathcal{P}\left(y_{4k}\succ y_{4k-3}\right)>\frac{1}{2}\right)\\
        &\geqslant\left(\frac{A^3}{24B^3}\right)^4\,.
    \end{align*}
    Thus, we have 
    \begin{align*}
        \mathbb{P}\left(\mathcal{S}_k^{(4)}\right)=1-\mathbb{P}\left(\left\{\text{Condorcet cycle in }y_{4k-3}, y_{4k-2}, y_{4k-1}, y_{4k}\right\}\right)\leqslant 1-\left(\frac{A^3}{24B^3}\right)^4\,.
    \end{align*}
    Hence, we get 
    \begin{align*}
        \mathbb{P}_{m,n}\left(\text{Condorcet cycle}\right)=&\ 1-\mathbb{P}\left(\text{No Condorcet cycle}\right)\\
        \geqslant &\ 1-\mathbb{P}\left(\bigcap_{k=1}^{[n/4]}\mathcal{S}_k^{(4)}\right)=1-\prod_{k=1}^{[n/4]}\mathbb{P}\left(\mathcal{S}_k^{(4)}\right)\\
        \geqslant &\ 1-\left(1-\left(\frac{A^3}{24B^3}\right)^4\right)^{[n/4]}\geqslant 1-\exp\left(-\left(\frac{A^3}{24B^3}\right)^4\cdot\frac{n-3}{4}\right)\,.
    \end{align*}
    We complete the proof.
\end{proof}

\section{Proofs of Main Technical Results in Section \ref{sec:align}}

\subsection{Proof of Proposition \ref{prop: 0/1 Condorcet winning response}}\label{pf: 0/1 Condorcet winning response}
\begin{proof}[Proof of Proposition \ref{prop: 0/1 Condorcet winning response}]
    If there exist a Condorcet winning response $y^\star$, then for any $y\neq y^\star$, $\cP(y^\star\succ y)>1/2$, so $\cP(y\succ y^\star)<1/2$, therefore $y$ is not a Condorcet winning response. Thus there is at most one Condorcet winning response.
\end{proof}

\subsection{Proof of Corollary \ref{coro:determinstic-preferences-luce}}

\begin{lemma}\label{lem:determinstic-preferences-luce}
    Let the number of responses be $n+1$ and the number of labelers $m\geqslant 3$, and define $l=\lceil \frac{m}{2}\rceil$. 
     For $0\leqslant k<m-l$, suppose that $m-k$ individuals follows Assumption \ref{ass:luce} with bounded weights $\lambda_i\in[A,B]$, where $B\geqslant A>0$, and the remaining $k$ individuals have arbitrary deterministic preference. Then, 
    \begin{align*}
        &\ \mathbb{P}_{m,n}\left(\text{Condorcet winning response}\right)\\
        &\ \qquad\leqslant (m-k+1)n^{-\frac{m-k}{l}}(n+1)\left(\log n\right)^{\frac{2m-2k}{l}}\left(\log n+1\right)^{m-k} \left(\frac{2B}{A}\right)^{2m-2k}\\
        &\ \qquad\qquad + \frac{(m-k)(m-k-1)(n+1)B^2}{A^2n^2} \\
        &\ \qquad\qquad + e^{(m-k)\frac{B}{A}}(n+1)^{(m-k)(3+\frac{B}{A})+1}\cdot\left(\exp\left(-\frac{3(\log n)^2}{16}\right)+\exp\left(-\left(\frac{1}{2}\right)^l(\log n)^2\right)\right)\,.
    \end{align*}
\end{lemma}
\begin{proof}[Proof of Lemma \ref{lem:determinstic-preferences-luce}]
    We use $\{A_{ij}\}_{1\leqslant i,j\leqslant n,i\neq j}$ to represent the number of individuals that have deterministic preferences and prefer $y_i$ over $y_j$. By the definition, we have $A_{ij}\leqslant k$, Then, for any $w\in\{1,\ldots,n+1\}$, 
    \begin{align*}
        &\ \mathbb{P}\left(y_w\text{ is a Condorcet winner}\right)\\
        = &\ \mathbb{P}\left(\bigcap_{\substack{i=1\\ i\neq w}}^{n+1}\left\{\mathcal{P}\left(y_w\succ y_i\right)>\frac{1}{2}\right\}\right)\\
            = &\ \mathbb{P}\left(\bigcap_{\substack{i=1\\ i\neq w}}^{n+1}\left\{\frac{1}{m}\left(A_{wi}+\sum_{j=1}^{m-k}\mathbbm{1}_{\left\{U^j_i>U_w^j\right\}}\right)>\frac{1}{2}\right\}\right)\\
             \leqslant &\ \mathbb{P}\left(\bigcap_{\substack{i=1\\ i\neq w}}^{n+1}\left\{\frac{1}{m}\left(k+\sum_{j=1}^{m-k}\mathbbm{1}_{\left\{U^j_i>U_w^j\right\}}\right)>\frac{1}{2}\right\}\right)\\
             = &\ \mathbb{P}\left(\bigcap_{\substack{i=1\\ i\neq w}}^{n+1}\left\{\sum_{j=1}^{m-k}\mathbbm{1}_{\left\{U^j_i<U_w^j\right\}}<\frac{m}{2}\right\}\right)\\
             \leqslant &\ (m-k+1)n^{-\frac{m-k}{l}}\left(\log n\right)^{\frac{2m-2k}{l}}\left(\log n+1\right)^{m-k} \left(\frac{2B}{A}\right)^{2m-2k}\\
        &\ + \frac{(m-k)(m-k-1)B^2}{A^2n^2} \\
        &\ + e^{(m-k)\frac{B}{A}}(n+1)^{(m-k)(3+\frac{B}{A})}\cdot\left(\exp\left(-\frac{3(\log n)^2}{16}\right)+\exp\left(-\left(\frac{1}{2}\right)^l(\log n)^2\right)\right)\,.
    \end{align*}
    where the last step uses Lemma \ref{lem:step-one} and Lemma \ref{lem:step-two} with $l=\lceil m/2\rceil \leqslant (m+1)/2<m-k$, i.e., $1\leqslant l\leqslant m-k-1$. Hence, 
    \begin{align*}
        &\ \mathbb{P}_{m,n}\left(\text{Condorcet winning response}\right)\\
        =&\ \sum_{w=1}^{n+1}\mathbb{P}\left(\text{$y_w$ is a Condorcet winning response}\right)\\
        \leqslant &\ (m-k+1)n^{-\frac{m-k}{l}}(n+1)\left(\log n\right)^{\frac{2m-2k}{l}}\left(\log n+1\right)^{m-k} \left(\frac{2B}{A}\right)^{2m-2k}\\
        &\ + \frac{(m-k)(m-k-1)(n+1)B^2}{A^2n^2} \\
        &\ + e^{(m-k)\frac{B}{A}}(n+1)^{(m-k)(3+\frac{B}{A})+1}\cdot\left(\exp\left(-\frac{3(\log n)^2}{16}\right)+\exp\left(-\left(\frac{1}{2}\right)^l(\log n)^2\right)\right)\,.
    \end{align*}
    We complete the proof.
\end{proof}

\begin{proof}[Proof of Corollary \ref{coro:determinstic-preferences-luce}]
    It directly follows from Lemma \ref{lem:determinstic-preferences-luce}.
\end{proof}

\subsection{Proof of Proposition \ref{prop: condorcet and Condorcet winning response}}\label{pf: condorcet and Condorcet winning response}
\begin{proof}[Proof of Proposition \ref{prop: condorcet and Condorcet winning response}]
    We reformulate this problem in terms of a directed graph \( G \). Specifically, we represent the \( n \) responses \( y_1, \dots, y_n \) as vertices, and we introduce a directed edge \( y_i \rightarrow y_j \) if and only if \( \cP(y_i \succ y_j) > 1/2 \). Under this reformulation, we obtain a tournament graph, and a Condorcet cycle in the preference corresponds to a cycle in \( G \). Then this result follows from Lemma \ref{lem:graph-Condorcet winning and cycle}. 
\end{proof}

\subsection{Proof of Theorem \ref{thm:Condorcet winning response}}\label{pf:Condorcet winning}
\begin{proof}[Proof of Theorem \ref{thm:Condorcet winning response}]
    First, suppose that there exists a Nash equilibrium in pure strategies that chooses $y^\star$. Then, using Lemma \ref{lem: Nash lemma zero-sum}, we obtain that $(y^\star, y^\star)$ is a Nash equilibrium. For any $y$, we have 
    \[
    \cP(y^\star\succ y) \geqslant \min_\sigma\cP(y^\star\succ \sigma)=\cP(y^\star\succ y^\star)=\frac{1}{2}.
    \]
    Therefore, for any $y\neq y^\star$, we have $\cP(y^\star\succ y)>\frac{1}{2}$, as we have assumed that $\cP(y\succ y')\neq \frac{1}{2}$ for any $y\neq y'$. Hence, $y^\star$ is a Condorcet winning response. Conversely, suppose that there exists a Condorcet winning response $y^\star$. It is easy to see that $(y^\star, y^\star)$ is a Nash equilibrium, since 
    \[
    \begin{aligned}
        &\ \max_{\sigma}\cP(\sigma\succ y^\star)\leqslant \frac{1}{2} = \cP(y^\star\succ y^\star)\Longrightarrow y^\star=\arg\max_{\sigma}\cP(\sigma\succ y^\star).
    \end{aligned}
    \]
    Then, we prove the uniqueness of Nash equilibrium by contradiction. Suppose that there exists another Nash equilibrium $(\pi,\pi')$ where $\pi$ is not the strategy of choosing $y^*$. Using Lemma \ref{lem: Nash lemma zero-sum}, we obtain a Nash equilibrium $(\pi, \pi)$. However, note that 
    \begin{equation}\label{eq: contradiction in uniqueness of Nash with Condorcet winning response}
        \frac{1}{2}=\cP(\pi\succ \pi)=\max_{\sigma}\cP(\sigma\succ \pi)\geqslant \cP(y^\star\succ \pi)>\frac{1}{2},
    \end{equation}
    where the first identity holds by Lemma \ref{prop: equal strategy payoff = 1/2}, and the last inequality holds as $\pi$ is not the Nash equilibrium choosing the Condorcet winning response $y^\star$. Hence, \eqref{eq: contradiction in uniqueness of Nash with Condorcet winning response} causes a contradiction and we complete the proof.
\end{proof}

\subsection{Proof of Theorem \ref{thm:BTL NLHF}}\label{pf:BTL NLHF}
\begin{proof}[Proof of Theorem \ref{thm:BTL NLHF}]
    Suppose that the strategy with the highest reward is $y^\star$. Then RLHF chooses $y^\star$ through maximizing the reward. In the following, we show that $y^\star$ is a Condorcet winning response. In fact, $y^\star$ is a dominant response, i.e. $\cP(y^\star\succ y')>\cP(y\succ y')$ for any $y\neq y^\star$ and any $y'$, see Definition \ref{def:dominant strategy} for general definitions in game theory.
    From this definition we can see that a dominant response must be a Condorcet winning response, while the converse is generally false. Now,
    {\small\[
    \begin{aligned}
        \cP(y^\star\succ y')=\frac{e^{r(y^\star)}}{e^{r(y^\star)}+e^{r(y')}}
            =\frac{1}{1+e^{r(y')-r(y^\star)}}
            >\frac{1}{1+e^{r(y')-r(y)}}
            =\frac{e^{r(y)}}{e^{r(y)}+e^{r(y')}}
            =\cP(y\succ y').
    \end{aligned}
    \]}
    By Theorem \ref{thm:Condorcet winning response}, the Nash equilibrium is unique and collapses to the dominant response $y^\star$. Hence, we complete the proof.
\end{proof}

\subsection{Proof of Theorem \ref{thm:response decomposition}}\label{appendix: proof of response decomposition}
\label{append:response decomposition}
\begin{proof}[Proof of Theorem \ref{thm:response decomposition}]
 We reformulate this problem in terms of a directed graph \( G \). Specifically, we represent the \( n \) responses \( y_1, \dots, y_n \) as vertices \( 1, \dots, n \), and we introduce a directed edge \( i \rightarrow j \) if and only if \( \cP(y_i \succ y_j) > 1/2 \).
Under this reformulation, a Condorcet cycle in the preference structure corresponds to a cycle in \( G \). Therefore, this theorem is equivalent to Lemma \ref{lem: graph theory of response decomposition}.
\end{proof}
Actually, the following algorithm achieves $O(n^2)$ complexity for seeking the dominated sub-decomposition $S_1$. Repeatedly using this algorithm provides a $O(n^3)$ complexity algorithm that yields the decomposition.
\begin{algorithm}[H]
    \caption{: Decomposition}
    \label{algorithm:decomposition}
    \begin{algorithmic}[1]
        \REQUIRE responses $y_1,\ldots,y_n$ ranked by out-degree
        \STATE $S=\{y_1\}$
        \FOR{$i=1,\ldots,n$}
        \IF{$y_i\in S$}
        \FOR{$j=n,\ldots,i+1$}
        \IF{$y_j$ beats $y_i$}
        \STATE{$S=\{y_1,\ldots,y_j\}$}
        \STATE Break
        \ENDIF
        \ENDFOR
        \ELSE
        \RETURN $S$
        \ENDIF
        \ENDFOR.
    \end{algorithmic}
\end{algorithm}

\subsection{Proof of Theorem \ref{thm:support}}
\label{appendix: proof of support}
\begin{proof}[Proof of Theorem \ref{thm:support}]
    We prove by contradiction. Suppose that there exists a Nash equilibrium $(\pi,\pi')$ where the support $\pi$ is not contained in $S_1$. By Lemma \ref{lem: Nash lemma zero-sum}, $(\pi,\pi)$ is a Nash equilibrium with payoff $1/2$. First we consider the case where $\operatorname{supp}(\pi)\cap S_1= \emptyset$, then for any $y\in S_1$, $\cP(y\succ \pi)>\frac{1}{2}$, contradicting to the assumption that $(\pi,\pi)$ is a Nash equilibrium.
    
    Now if $\operatorname{supp}(\pi)\cap S_1\neq \emptyset$, let $\tilde{\pi}$ be the strategy obtained by redistributing the mass outside of $S_1$ onto $S_1$ proportional to the original mass. That is, 
    \[\tilde{\pi}_i=\frac{\pi_i}{\sum_{y_j\in S_1}\pi_j}\text{ for } y_i\in S_1\text{ and }\tilde{\pi}_i=0\text{ for }y_i\notin S_1.\]
    Next, we show that the payoff at $(\tilde{\pi},\pi)$ is higher than $1/2$, contradicting to the assumption that $(\pi,\pi)$ is a Nash equilibrium.
\begin{align*}
    &\ \cP(\tilde{\pi}\succ \pi)\\
    =&\ \sum_{ij}\tilde{\pi}_i\pi_j\cP(y_i\succ y_j)\\
    =&\ \sum_{y_i,y_j\in S_1}\tilde{\pi}_i\pi_j\cP(y_i\succ y_j)+\sum_{y_i\in S_1,y_j\in S_1^c}\tilde{\pi}_i\pi_j\cP(y_i\succ y_j) \\
    =&\ \frac{1}{2}+\sum_{y_i,y_j\in S_1}\tilde{\pi}_i\pi_j\left(\cP(y_i\succ y_j)-\frac{1}{2}\right)+\sum_{y_i\in S_1,y_j\in S_1^c}\tilde{\pi}_i\pi_j\left(\cP(y_i\succ y_j)-\frac{1}{2}\right).
\end{align*}
Noting that $\sum_{y_i,y_j\in S_1}\pi_i\pi_j\left(\cP(y_i\succ y_j)-\frac{1}{2}\right)=0$ and by anti-symmetry, we have
\begin{align*}              
  &\ \cP(\tilde{\pi}\succ \pi)\\
    =&\ \frac{1}{2}+\sum_{y_i,y_j\in S_1}(\tilde{\pi}_i-\pi_i)\pi_j\left(\cP(y_i\succ y_j)-\frac{1}{2}\right)+\sum_{y_i\in S_1,y_j\in S_1^c}\tilde{\pi}_i\pi_j\left(\cP(y_i\succ y_j)-\frac{1}{2}\right)\\
    >&\ \frac{1}{2}+\sum_{y_i,y_j\in S_1}(\tilde{\pi}_i-\pi_i)\pi_j\left(\cP(y_i\succ y_j)-\frac{1}{2}\right)
\end{align*}
as $\cP(y_i\succ y_j)>\frac{1}{2}$ for $y_i\in S_1,y_j\in S_1^c$.
For $y_i\in S_1$, $\tilde{\pi}_i=k\pi_i$ for some $k$ by the definition of $\tilde{\pi}$, we have
\[\sum_{y_i,y_j\in S_1}(\tilde{\pi}_i-\pi_i)\pi_j\left(\cP(y_i\succ y_j)-\frac{1}{2}\right)=(k-1)\sum_{y_i,y_j\in S_1}\pi_i\pi_j\left(\cP(y_i\succ y_j)-\frac{1}{2}\right)=0\]
again by anti-symmetry. Hence, we have $\cP(\tilde{\pi}\succ\pi)>1/2$, concluding our proof.  
\end{proof}

\section{Proofs of Lemmas in Section \ref{sec:upper bound proof} and Section \ref{sec:lower bound proof}}

\subsection{Proof of Lemma \ref{lem:integration-exponential}}
\begin{proof}[Proof of Lemma \ref{lem:integration-exponential}]
    When $n=2$, through a direct calculation using the density of exponential random variables, we yield
    \[
    \mathbb{P}\left(X_1>X_2\right)=\int_0^\infty \lambda_2 e^{-\lambda_2 t} e^{-\lambda_1 t}  \mathrm{~d}t = \frac{\lambda_2}{\lambda_1+\lambda_2}\int_0^\infty (\lambda_1+\lambda_2) e^{-(\lambda_1+\lambda_2) t}\mathrm{~d}t=\frac{\lambda_2}{\lambda_1+\lambda_2}\,.
    \]
    Notice that by the memoryless property of exponential random variables, for any $t\geqslant 0$, $X_i\mid X_i > t \overset{d}{=} Y_i + t$, where $Y_i\overset{d}{=}X_i\sim \operatorname{Exp}(\lambda_i)$. Then, using the induction hypothesis, 
    \[
    \begin{aligned}
        &\ \mathbb{P}\left(X_1 > \cdots > X_{n+1}\right)
        = \int_0^\infty \lambda_{n+1}e^{-\lambda_{n+1}t}\cdot \mathbb{P}\left(X_1>\cdots > X_n> t\right)\mathrm{~d}t\\
        = &\ \int_0^\infty \lambda_{n+1}e^{-\lambda_{n+1}t}\cdot \mathbb{P}\left(X_1>t, \cdots, X_n> t\right)\cdot \mathbb{P}\left(X_1>\cdots>X_n\mid X_1>t,\cdots, X_n>t\right)\mathrm{~d}t\\
        = &\ \int_0^\infty \lambda_{n+1}e^{-\left(\lambda_1+\cdots+\lambda_{n+1}\right)t}\cdot \mathbb{P}\left(Y_1>\cdots>Y_n\right)\mathrm{~d}t\\
        = &\ \frac{\lambda_{n+1}}{\lambda_1+\cdots+\lambda_{n+1}} \cdot\underbrace{\frac{\lambda_n}{\lambda_1+\cdots+\lambda_n}\cdot\frac{\lambda_{n-1}}{\lambda_1+\cdots+\lambda_{n-1}}\cdot\cdots\cdot\frac{\lambda_2}{\lambda_1+\lambda_2}}_{\text{induction hypothesis}}\,.
    \end{aligned}
    \]
    Hence, the conclusion follows from mathematical induction. 
\end{proof}

\subsection{Proof of Lemma \ref{lem:cardinality-lower-bound}}
\label{pf:no Condorcet winning}

We leverage the following lemma which characterize the most probable number of successes in $n$ independent Bernoulli trials with possibly different parameters.
\begin{lemma}[Theorem 4 in \citet{Darroch1964}]\label{lem:revision-independent-trials}
    Consider $n$ independent Bernoulli random variables $Z_1,\ldots,Z_n$ that satisfies $\mathbb{P}(Z_i=1)=p_i$ for $i=1,\ldots,n$. Let $X=Z_1+\ldots+Z_n$, $\mu:=\sum_{i=1}^n p_i$ and $\delta(\mu)=\mu-[\mu]$, then 
    \[
    \operatorname*{argmax}_{k\in\{0,\cdots,n\}}\mathbb{P}(X=k)=\begin{cases}
        \mu\,, & \delta(\mu)=0\\
        [\mu]\,, & 0\leqslant \delta(\mu)< 1/([\mu]+2)\\
        \text{one or both of }[\mu],[\mu]+1\,, & 1/([\mu]+2)\leqslant \delta(\mu)\leqslant (n-[\mu])/(n-[\mu]+1)\\
        [\mu]+1\,, & (n-[\mu])/(n-[\mu]+1)<\delta(\mu)<1
        \end{cases}\,.
    \]
\end{lemma}

\begin{proof}[Proof of Lemma \ref{lem:cardinality-lower-bound}]
    First, for $k=0$, 
    \[ \mathbb{P}\left(X=k\right)=\frac{\lambda_1}{\sum_{i=1}^{n+1}\lambda_i}\geqslant \frac{A}{B}\frac{1}{n+1}.\]
    Second, for any $k\in\{1,\ldots,n-1\}$, we have 
    \[
    \mathbb{P}\left(X=k\right)=\int_0^\infty\lambda_1 e^{-\lambda_1 u_1}\cdot \mathbb{P}\left(Z_2+\cdots+Z_{n+1}=k\mid U_1=u_1\right)\mathrm{~d}u_1\,,
    \]
    where $Z_i:=\mathds{1}\{U_i<U_1\}$ for $i=2,\ldots,n+1$. Therefore, conditional on $U_1=u_1$, $Z_2,\ldots,Z_{n+1}$ are independent Bernoulli random variables with mean $p_i(u_1):=\mathbb{P}(Z_i=1\mid U_1=u_1)=1-e^{-\lambda_i u_1}$ for any $i\in\{2,\ldots,n+1\}$. 
    We define $\mu(u_1):=\sum_{i=2}^{n+1}p_i(u_1)=n-\sum_{i=2}^{n+1}e^{-\lambda_i u_1}$, which satisfies $\mu^\prime(u_1)=\sum_{i=2}^{n+1}\lambda_i e^{-\lambda_i u_1}\in (0, Bn]$. We consider $u_1^k$ such that $\mu(u_1^k)=k$, which satisfies 
    \[
    1\leqslant n-k= \sum_{i=2}^{n+1}e^{-\lambda_i u_1^k}\leqslant ne^{-Au_1^k}\Longrightarrow e^{Au_1^k}\leqslant n\Longrightarrow u_1^k\leqslant \frac{\log n}{A}\,.
    \]
    For any $u_1\in [u_1^k, u_1^k+\frac{1}{Bn(n+2)}]$, 
    \[
    k=\mu(u_1^k)\leqslant \mu(u_1)\leqslant \mu(u_1^k)+Bn\cdot\frac{1}{Bn(n+2)}=k+\frac{1}{n+2}<k+\frac{1}{k+2}\,.
    \]
    Through Lemma \ref{lem:revision-independent-trials}, for any $u_1\in [u_1^k, u_1^k+\frac{1}{Bn(n+2)}]$,
    {\small\[
    \mathbb{P}\left(Z_2+\cdots+Z_{n+1}=k\mid U_1=u_1\right)=\max_{k\in \{0,\cdots,n\}}\mathbb{P}\left(Z_2+\cdots+Z_{n+1}=k\mid U_1=u_1\right)\geqslant \frac{1}{n+1}\,.
    \]}
    Then using the numerical inequality $1-e^{-x}\geqslant \frac{x}{2}$ for $0\leqslant x\leqslant 1$, we have the following lower bound,
    \[\begin{aligned}
        \mathbb{P}(X=k)\geqslant &\ \int_{u_1^k}^{u_1^k+\frac{1}{Bn(n+2)}}\lambda_1 e^{-\lambda_1 u_1}\cdot\frac{1}{n+1}\mathrm{~d}u_1\\
        = &\ \frac{1}{n+1}\cdot\left(e^{-\lambda_1 u_1^k}-e^{-\lambda_1\left(u_1^k+\frac{1}{Bn(n+2)}\right)}\right)\\
        = &\ \frac{1}{n+1}\cdot e^{-\lambda_1 u_1^k}(1-e^{-\lambda_1 \frac{1}{Bn(n+2)}})\\
        \geqslant &\ \frac{1}{n+1}\cdot n^{-\frac{B}{A}}\cdot \frac{A}{2Bn(n+2)}\\
        \geqslant &\  \frac{A}{2B}\cdot (n+1)^{-3-\frac{B}{A}}\,.
    \end{aligned}\]
    Finally, we consider the case where $k=n$. We define $u_1^n$ such that $\mu(u_1^n)=n-1/2$, which satisfies 
    \[
    \frac{1}{2}=\sum_{i=2}^{n+1}e^{-\lambda_i u_1^n}\leqslant ne^{-Au_1^n}\Longrightarrow e^{Au_1^n}\leqslant 2n\Longrightarrow u_1^n\leqslant \frac{\log 2+\log n}{A}\,.
    \]
    For any $u_1\in (u_1^n,+\infty)$, 
    \[
    n-\frac{1}{2}=\mu(u_1^n)<\mu(u_1)< n\,.
    \]
    Through Lemma \ref{lem:revision-independent-trials}, for any $u_1\in(u_1^n,+\infty)$, 
    \[
    \mathbb{P}\left(Z_2+\cdot+Z_{n+1}=k\mid U_1=u_1\right)=\max_{k\in\{0,\cdots,n\}}\mathbb{P}\left(Z_2+\cdots+Z_{n+1}=k\mid U_1=u_1\right)\geqslant \frac{1}{n+1}\,.
    \]
    Thus, we have the following lower bound,
    \[\begin{aligned}
        \mathbb{P}\left(X=k\right)\geqslant &\ \int_{u_1^n}^\infty \lambda_1 e^{-\lambda_1 u_1}\cdot \frac{1}{n+1}\mathrm{~d}u_1\\
        = &\ \frac{1}{n+1}\cdot e^{-\lambda_1 u_1^n}\\
        \geqslant &\ \frac{1}{n+1}\cdot e^{-Bu_1^n}\geqslant \frac{1}{n+1}\cdot (2n)^{-\frac{B}{A}}\geqslant e^{-\frac{B}{A}}\left(n+1\right)^{-1-\frac{B}{A}}\,.
    \end{aligned}\]
    In summary, we obtain for any $k\in\{0,\ldots,n\}$,
    \[
    \mathbb{P}\left(X=k\right)\geqslant e^{-\frac{B}{A}}(n+1)^{-3-\frac{B}{A}}\,.
    \]
    Hence, we complete the proof.
\end{proof}

\subsection{Proof of Lemma \ref{lem:revision-order-ineq-multi-items}}

This is proved by the following two lemmas.
\begin{lemma}\label{lem:revision-order-ineq-basic-stronger}
    For any $n\in\mathbb{N}_+$, assume that $0\leqslant x\leqslant x_{n+1}<1, 0\leqslant y_1\leqslant \ldots\leqslant y_{n+1}<1$, then 
    \[
    \left(\prod_{i=1}^n\left(1-xy_i\right)\right)\cdot (1-x_{n+1}y_{n+1})\leqslant \prod_{i=1}^{n+1}\left(1-\left(\frac{nx+x_{n+1}}{n+1}\right)y_i\right)\,.
    \]
\end{lemma}
\begin{proof}[Proof of Lemma \ref{lem:revision-order-ineq-basic-stronger}]
    We define $nx+x_{n+1}=(n+1)C$ where $x\leqslant C\leqslant x_{n+1}$. Then, we consider the following function 
    \[
    f(t):=\left(\prod_{i=1}^n \left(1-ty_i\right)\right)\cdot \left(1-\left((n+1)C-nt\right)y_{n+1}\right)\,.
    \]
    Notice that 
    {\footnotesize\[\begin{aligned}
        f^\prime(t)=&\ \prod_{i=1}^n (1-ty_i)\cdot ny_{n+1} - \sum_{i=1}^n\frac{\prod_{j=1}^n (1-ty_i)}{1-ty_i} y_i\cdot\left(1-\left((n+1)C-nt\right)y_{n+1}\right)\\
        = &\ \prod_{i=1}^n (1-ty_i)\cdot\left[ny_{n+1}-\sum_{i=1}^n\frac{y_i}{1-ty_i}+\sum_{i=1}^n\frac{(n+1)Cy_iy_{n+1}}{1-ty_i}-\sum_{i=1}^n\frac{ny_iy_{n+1}t}{1-ty_i}\right]\\
        = &\ \prod_{i=1}^n (1-ty_i)\cdot\left[\sum_{i=1}^n\frac{y_{n+1}(1-ty_i)}{1-ty_i}-\sum_{i=1}^n\frac{y_i}{1-ty_i}+\sum_{i=1}^n\frac{(n+1)Cy_iy_{n+1}}{1-ty_i}-\sum_{i=1}^n\frac{ny_iy_{n+1}t}{1-ty_i}\right]\\
        = &\ \prod_{i=1}^n (1-ty_i)\cdot\left[\sum_{i=1}^n\frac{y_{n+1}-y_i}{1-ty_i}-\sum_{i=1}^n\frac{y_{n+1}y_it}{1-ty_i}+\sum_{i=1}^n\frac{(n+1)Cy_iy_{n+1}}{1-ty_i}-\sum_{i=1}^n\frac{ny_iy_{n+1}t}{1-ty_i}\right]\\
        = &\ \prod_{i=1}^n (1-ty_i)\cdot\left[\sum_{i=1}^n\frac{y_{n+1}-y_i}{1-ty_i}+\sum_{i=1}^n\frac{(n+1)y_iy_{n+1}(C-t)}{1-ty_i}\right]\,.
    \end{aligned}\]}
    Therefore, $f^\prime(t)\geqslant 0$ for any $x\leqslant t\leqslant C$, we have $f(x)\leqslant f(C)$, i.e., 
    \[
    \left(\prod_{i=1}^n\left(1-xy_i\right)\right)\cdot (1-x_{n+1}y_{n+1})\leqslant \prod_{i=1}^{n+1}\left(1-\left(\frac{nx+x_{n+1}}{n+1}\right)y_i\right)\,.
    \]
    Hence, we conlude our proof.
\end{proof}

\begin{lemma}\label{lem:revision-order-ineq-two-items}
    For any $n\in\mathbb{N}_+$, assume that $0\leqslant x_1\leqslant\ldots\leqslant x_n<1$ and $0\leqslant y_1\leqslant\ldots\leqslant y_n<1$, then 
    \[\prod_{i=1}^n(1-x_iy_i)\leqslant \prod_{i=1}^n\left(1-\left(\frac{1}{n}\sum_{i=1}^n x_i\right)y_i\right)\leqslant \left(1-\left(\frac{1}{n}\sum_{i=1}^n x_i\right)\left(\frac{1}{n}\sum_{i=1}^n y_i\right)\right)^n\,.\]
\end{lemma}
\begin{proof}[Proof of Lemma \ref{lem:revision-order-ineq-two-items}]
First, we prove the first inequality using mathematical induction. When $n=1$, the inequality is trivial. Suppose that the inequality is true for $n$, that is,
    \[
    \prod_{i=1}^n(1-x_iy_i)\leqslant \prod_{i=1}^n\left(1-\left(\frac{1}{n}\sum_{i=1}^n x_i\right)y_i\right)\leqslant \left(1-\left(\frac{1}{n}\sum_{i=1}^n x_i\right)\left(\frac{1}{n}\sum_{i=1}^n y_i\right)\right)^n\,,
    \]
    then for $n+1$ we have 
    \[\begin{aligned}
        \prod_{i=1}^{n+1}(1-x_iy_i)\leqslant &\ \prod_{i=1}^n\left(1-\left(\frac{1}{n}\sum_{i=1}^n x_i\right)y_i\right)\left(1-x_{n+1}y_{n+1}\right)\\
        \overset{\ref{lem:revision-order-ineq-basic-stronger}}{\leqslant} &\ \prod_{i=1}^{n+1}\left(1-\left(\frac{1}{n+1}\sum_{i=1}^{n+1}x_i\right)y_i\right).
    \end{aligned}\]
    Using similar arguments as above on $\{y_i\}_{i=1}^{n+1}$, we yield
    \[\prod_{i=1}^{n+1}\left(1-\left(\frac{1}{n+1}\sum_{i=1}^{n+1}x_i\right)y_i\right)\leqslant  \left(1-\left(\frac{1}{n+1}\sum_{i=1}^{n+1} x_i\right)\left(\frac{1}{n+1}\sum_{i=1}^{n+1} y_i\right)\right)^{n+1}\,.\]
    Hence, we complete the proof by induction.
\end{proof}

\begin{proof}[Proof of Lemma \ref{lem:revision-order-ineq-multi-items}]
    We prove this inequality by repeatedly using Lemma \ref{lem:revision-order-ineq-two-items}. For $k<l$, denote $\prod_{j=k}^lx_i^j$ by $T_i^{k:l}$. Then we have
    \[\begin{aligned}
        &\prod_{i=1}^n\left(1-\prod_{j=1}^l x_i^j\right)\\ = &\ \prod_{i=1}^n\left(1-x_i^1 T_i^{2:l}\right)\overset{\ref{lem:revision-order-ineq-two-items}}{\leqslant} \prod_{i=1}^n \left(1-\left(\frac{1}{n}\sum_{i=1}^n x_i^1\right)T_i^{2:l}\right)\\
        = &\ \prod_{i=1}^n \left(1-x_i^2\left(\frac{1}{n}\sum_{i=1}^n x_i^1\right)T_i^{3:l}\right)\overset{\ref{lem:revision-order-ineq-two-items}}{\leqslant}\prod_{i=1}^n\left(1-\left(\frac{1}{n}\sum_{i=1}^n x_i^1\right)\left(\frac{1}{n}\sum_{i=1}^n x_i^2\right)T_i^{3:l}\right)\\
        \leqslant &\ \cdots\cdots \leqslant \prod_{i=1}^n \left(1-\prod_{j=1}^l\left(\frac{1}{n}\sum_{i=1}^n x_i^j\right)\right)=\left(1-\prod_{j=1}^l\left(\frac{1}{n}\sum_{i=1}^n x_i^j\right)\right)^n\,.
    \end{aligned}\]
\end{proof}

\subsection{Proof of Lemma \ref{lem:step-one}}

To prove Lemma \ref{lem:step-one}, we will use the Bernstein inequality:
\begin{lemma}[Bernstein Inequality]\label{lem:bernstein}
    Let $X_1,\ldots,X_n$ be independent zero mean random variables. Suppose that $\vert X_i\vert\leqslant M$ for all $i$, then for any positive $t$,
    \[
    \mathbb{P}\left(\sum_{i=1}^n X_i\geqslant t\right)\leqslant \exp\left(-\frac{t^2/2}{\sum_{i=1}^n\mathbb{E}\left[X_i^2\right]+Mt/3}\right)\,.
    \]
\end{lemma}
Specifically, if $Z_1,\ldots,Z_n$ are independent Bernoulli random variables with $\mathbb{P}(Z_i=1)=p_i$ for all $i\in [n]$. Then, from Lemma \ref{lem:bernstein} we get 
    \[\begin{aligned}
        \mathbb{P}\left(\sum_{i=1}^n Z_i-\sum_{i=1}^n p_i\geqslant t\right)\leqslant \exp\left(-\frac{t^2/2}{t/3+\sum_{i=1}^n\mathrm{Var}(Z_i)}\right)\leqslant\exp\left(-\frac{t^2/2}{t/3+\sum_{i=1}^n p_i}\right)\,.
    \end{aligned}\]

\begin{proof}[Proof of Lemma \ref{lem:step-one}]
	
    Define the event $\mathcal{A}:=\left\{N_1\cap\cdots\cap N_l=\emptyset\right\}$ and $$\boldsymbol{U}_1=(U_1^1,\ldots,U_1^m)^\top\in\mathbb{R}^m.$$ Now we define the critical set $\boldsymbol{B}\subseteq\mathbb{R}^m$ by
\[
\boldsymbol{B}:=\left\{\boldsymbol{u}_1=(u_1^1,\cdots,u_1^m)^\top\in\mathbb{R}^m: \prod_{j=1}^l\left(\sum_{i=2}^{n+1}\left(1-e^{-\lambda_i u_1^j}\right)\right)\geqslant \left(\frac{1}{2}\right)^l n^{l-1}(\log n)^2\right\}\,.
\]
For simplicity, we denote by $f(u)=\sum_{i=2}^{n+1}\left(1-e^{-\lambda_i u}\right)$ in the following.

Notice that 
\begin{equation}
\begin{aligned}
    &\ \mathbb{P}\left(\mathcal{A}\mid X_1=x_1,\cdots, X_m=x_m\right)\\
    = &\ \mathbb{P}\left(\mathcal{A},\boldsymbol{U}_1\in\boldsymbol{B}^c\mid X_1=x_1,\cdots, X_m=x_m\right)+\mathbb{P}\left(\mathcal{A},\boldsymbol{U}_1\in\boldsymbol{B}\mid X_1=x_1,\cdots, X_m=x_m\right)\\
    \leqslant &\ \mathbb{P}\left(\boldsymbol{U}_1\in\boldsymbol{B}^c\mid X_1=x_1,\cdots, X_m=x_m\right)+\frac{\mathbb{P}\left(\mathcal{A},\boldsymbol{U}_1\in\boldsymbol{B},X_1=x_1,\cdots, X_m=x_m\right)}{\mathbb{P}\left(X_1=x_1,\cdots, X_m=x_m\right)}\\
    \leqslant &\ \frac{\mathbb{P}\left(\boldsymbol{U}_1\in\boldsymbol{B}^c, X_1=x_1,\cdots, X_m=x_m\right)}{\mathbb{P}\left(X_1=x_1,\cdots, X_m=x_m\right)}+\frac{\mathbb{P}\left(\mathcal{A},\boldsymbol{U}_1\in\boldsymbol{B}\right)}{\mathbb{P}\left(X_1=x_1,\cdots, X_m=x_m\right)}\,.
\end{aligned}
\end{equation}

By Lemma \ref{lem:cardinality-lower-bound}, we can lower bound the denominator by
\begin{equation}\label{eq:luce-model-goal-step1-result1}
    \mathbb{P}\left(X_1=x_1,\cdots, X_m=x_m\right)=\prod_{j=1}^m\mathbb{P}\left(X_j=x_j\right)\geqslant e^{-m\frac{B}{A}}(n+1)^{-m(3+\frac{B}{A})}\,.
\end{equation}

Now we upper bound the numerator in the first term when $x_1\dots x_l\geqslant n^{l-1}(\log n)^2$. For any $\boldsymbol{u}_1\in\boldsymbol{B}^c$, 
\[
\prod_{j=1}^l f\left(u_1^j\right)<\left(\frac{1}{2}\right)^ln^{l-1}(\log n)^2\Longrightarrow \prod_{j=1}^l \frac{x_j}{2f\left(u_1^j\right)}>\frac{n^{l-1}(\log n)^2}{n^{l-1}(\log n)^2}=1\,.
\]
Therefore, there exists $j^\star\in\{1,\ldots,l\}$ such that $x_{j^\star}>2f\left(u_1^{j^\star}\right)$. Hence,

\begin{equation*}%\label{eq:luce-model-goal-step1-result2-1-improved}
    \begin{aligned}
    \mathbb{P}\left(X_{j^\star}=x_{j^\star}\mid U_1^{j^\star}=u_1^{j^\star}\right)\overset{(i)}{\leqslant}&\  \mathbb{P}\left(X_{j^\star}-f\left(u_1^{j^\star}\right)\geqslant\frac{x_{j^\star}}{2}\mid U_1^{j^\star}=u_1^{j^\star}\right)\\
    \overset{(ii)}{\leqslant} &\ \exp\left(-\frac{x_{j^\star}^2/8}{x_{j^\star}/6+f\left(u_1^{j^\star}\right)}\right)\\
    \overset{(iii)}{\leqslant} &\ \exp\left(-\frac{x_{j^\star}^2/8}{x_{j^\star}/6+x_{j^\star}/2}\right)\\
    = &\ \exp\left(-\frac{3x_{j^\star}}{16}\right)\overset{(iv)}{\leqslant}\exp\left(-\frac{3(\log n)^2}{16}\right)\,.
\end{aligned}
\end{equation*}
where $(i)$ uses $x_{j^\star}>x_{j^\star}/2+f(u_1^{j^\star})$, $(ii)$ applies Bernstein inequality (Lemma \ref{lem:bernstein}) noticing $X_{j^\star}$ is the sum of $n$ independent Bernoulli r.v., with mean $f\left(u_1^{j^\star}\right)$ given $U_1^{j^\star}=u_1^{j^\star}$, $(iii)$ uses $f(u_1^{j^\star})<x_{j^\star}/2$, and $(iv)$ uses $x_{j^\star}\geqslant \min_{j\in [l]} x_j\geqslant (\log n)^2$, which follows from 
\[
n^{l-1}\min_{j\in [l]} x_j\geqslant x_1\cdots x_l\geqslant n^{l-1}(\log n)^2\Longrightarrow \min_{j\in [l]} x_j\geqslant (\log n)^2\,.
\]
Hence, 
 {\footnotesize\begin{equation*}
    \begin{aligned}
    \mathbb{P}\left(X_1=x_1,\cdots,X_m=x_m\mid \boldsymbol{U}_1=\boldsymbol{u}_1\right)\leqslant &\ \mathbb{P}\left(X_{j^\star}=x_{j^\star}\mid \boldsymbol{U}_1=\boldsymbol{u}_1\right)\\
    =&\ \mathbb{P}\left(X_{j^\star}=x_{j^\star}\mid U_1^{j^\star}=u_1^{j^\star}\right)\leqslant \exp\left(-\frac{3(\log n)^2}{16}\right)\,.
\end{aligned}
\end{equation*}}
Then,
\begin{equation}\label{eq:luce-model-goal-step1-result2-2-improved}
    \begin{aligned}
&\ \mathbb{P}\left(\boldsymbol{U}_1\in\boldsymbol{B}^c,X_1=x_1,\cdots,X_m=x_m\right)\\
= &\ \int_{\boldsymbol{u}_1\in\boldsymbol{B}^c}\mathbb{P}\left(X_1=x_1,\cdots,X_m=x_m\mid \boldsymbol{U}_1=\boldsymbol{u}_1\right)\mathrm{~d}\mathbb{P}_{\boldsymbol{U}_1}(\boldsymbol{u}_1)\\
\leqslant &\ \exp\left(-\frac{3(\log n)^2}{16}\right)\cdot \mathbb{P}\left(\boldsymbol{U}_1\in\boldsymbol{B}^c\right)\leqslant \exp\left(-\frac{3(\log n)^2}{16}\right)\,. 
\end{aligned}    
\end{equation}

Now we upper bound the numerator in the second term when $x_1\dots x_l\geqslant n^{l-1}(\log n)^2$.
We examine the probability $\mathbb{P}\left(\mathcal{A}\mid \boldsymbol{U}_1=\boldsymbol{u}_1\right)$ for any realization $\boldsymbol{u}_1$ of $\boldsymbol{U}_1\in\boldsymbol{B}$. Notice that 
\begin{align*}
    \mathbb{P}\left(\mathcal{A}\mid \boldsymbol{U}_1=\boldsymbol{u}_1\right)= &\ \mathbb{P}\left(\bigcap_{i=2}^{n+1}\left\{i\notin N_1\cap\cdots\cap N_l\right\}\ \bigg\vert\ \boldsymbol{U}_1=\boldsymbol{u}_1\right)\notag\\
    = &\ \prod_{i=2}^{n+1}\mathbb{P}\left(i\notin N_1\cap\cdots\cap N_l\mid\boldsymbol{U}_1=\boldsymbol{u}_1\right)\notag\\
    =&\ \prod_{i=2}^{n+1}\left(1-\mathbb{P}\left(i\in N_1\cap\cdots\cap N_l\mid \boldsymbol{U}_1=\boldsymbol{u}_1\right)\right)\label{eq:luce-model-goal-step1-result3-1}\\
    = &\  \prod_{i=2}^{n+1}\left(1-\prod_{j=1}^l\mathbb{P}\left(i\in N_j\mid \boldsymbol{U}_1=\boldsymbol{u}_1\right)\right)\notag\\
    = &\ \prod_{i=2}^{n+1}\left(1-\prod_{j=1}^l\left(1-e^{-\lambda_i u_1^j}\right)\right)\overset{\ref{lem:revision-order-ineq-multi-items}}{\leqslant} \left(1-\prod_{j=1}^l\left(\frac{1}{n}\sum_{i=2}^{n+1}\left(1-e^{-\lambda_i u_1^j}\right)\right)\right)^n\notag\,.
\end{align*}
Recall that the realization $\boldsymbol{u}_1\in\boldsymbol{B}$, 
\begin{equation*}
    \begin{aligned}
        &\left(1-\prod_{j=1}^l\left(\frac{1}{n}\sum_{i=2}^{n+1}\left(1-e^{-\lambda_i u_1^j}\right)\right)\right)^n\\= &\ \left(1-\frac{1}{n^l}\cdot \prod_{j=1}^l\left(\sum_{i=2}^{n+1}\left(1-e^{-\lambda_i u_1^j}\right)\right)\right)^n\\
        \leqslant &\ \left(1-\frac{1}{n^l}\cdot \left(\frac{1}{2}\right)^ln^{l-1}(\log n)^2\right)^n\\
        = &\ \left(1-\left(\frac{1}{2}\right)^l\cdot\frac{(\log n)^2}{n}\right)^n\leqslant \exp\left(-\left(\frac{1}{2}\right)^l(\log n)^2\right)\,.
    \end{aligned}
\end{equation*}
Then, 
\begin{equation}\label{eq:luce-model-goal-step1-result3}
    \begin{aligned}
\mathbb{P}\left(\mathcal{A},\boldsymbol{U}_1\in\boldsymbol{B}\right)=&\ \int_{\boldsymbol{u}_1\in\boldsymbol{B}}\mathbb{P}\left(\mathcal{A}\mid\boldsymbol{U}_1=\boldsymbol{u}_1\right)\mathrm{~d}\mathbb{P}_{\boldsymbol{U}_1}(\boldsymbol{u}_1)\\
\leqslant &\ \exp\left(-\left(\frac{1}{2}\right)^l(\log n)^2\right)\cdot\mathbb{P}\left(\boldsymbol{U}_1\in\boldsymbol{B}\right)\leqslant \exp\left(-\left(\frac{1}{2}\right)^l(\log n)^2\right)\,. 
\end{aligned}
\end{equation}

Combining \eqref{eq:luce-model-goal-step1-result1}, \eqref{eq:luce-model-goal-step1-result2-2-improved}, and \eqref{eq:luce-model-goal-step1-result3} together, for any $x_1\dots x_l\geqslant n^{l-1}(\log n)^2$, we have
\[\begin{aligned}
    &\ \mathbb{P}\left(N_{1}\cap\cdots\cap N_{l}=\emptyset\mid X_1=x_1,\cdots,X_m=x_m\right)\\
    &\ \quad\quad \leqslant e^{m\frac{B}{A}}(n+1)^{m(3+\frac{B}{A})}\cdot\left(\exp\left(-\frac{3(\log n)^2}{16}\right)+\exp\left(-\left(\frac{1}{2}\right)^l(\log n)^2\right)\right)\,.
\end{aligned}\]
Hence, 
\[\begin{aligned}
    &\ \mathbb{P}\left(N_{1}\cap\cdots\cap N_{l}=\emptyset\mid X_{1}\cdots X_{l}\geqslant n^{l-1}(\log n)^2\right) \\
    &\ \quad\quad \leqslant e^{m\frac{B}{A}}(n+1)^{m(3+\frac{B}{A})}\cdot\left(\exp\left(-\frac{3(\log n)^2}{16}\right)+\exp\left(-\left(\frac{1}{2}\right)^l(\log n)^2\right)\right)\,.
\end{aligned}\]

\end{proof}

\subsection{Proof of Lemma \ref{lem:luce Condorcet winning-bad profile prob}}

We first state a technical lemma that controls the moment of each $1/X_j$ when $X_j>0$. This shows that each $X_j$ is typically not too large.
\begin{lemma}\label{lem:luce moment}
Suppose $U_i\sim \operatorname{Exp}(\lambda_i)$ independently for $i=1,\dots,n+1$, where each $\lambda_i\in [A,B]$ and $B\geqslant A>0$.
Let $N=\{i\in \{2,\cdots,n+1\}, U_i<U_1\}$ and $X=|N|$. Then
     \[\mathbb{E}\left[\frac{1}{X}\bigg\vert X>0\right]\leqslant \left(\frac{2B}{A}\right)^2\frac{1+\log n}{n}\,.\]
\end{lemma}
\begin{proof}[Proof of Lemma \ref{lem:luce moment}]
First, we can write down the probability distribution of $X$.
For any $k=0,\ldots,n$, we have
 {\footnotesize\begin{align*}
&\mathbb{P}\left(X=k\right)\\= &\sum_{\substack{i_1,\cdots,i_k\in \{2,\cdots, n+1\}}}\int_0^\infty \lambda_1  e^{-\lambda_1 t}\cdot \prod_{j\in \{2,\cdots, n+1\}\backslash \{i_1,\cdots, i_k\}}e^{-\lambda_{j}t}\cdot \prod_{i\in \{i_1,\cdots, i_k\}}\left(1-e^{-\lambda_i t}\right)\mathrm{~d}t.
\end{align*}}
In particular, 
\[\mathbb{P}\left(X=0\right)=\int_0^\infty \lambda_1  e^{-\lambda_1 t}\cdot \prod_{i\in \{2,\cdots, n+1\}}e^{-\lambda_i t}\mathrm{~d}t=\frac{\lambda_1}{\sum_{i=1}^{n+1}\lambda_i}\leqslant \frac{B}{B+nA}.\]
Consider the generating function
\[
G_{X}(t):=\mathbb{E}t^{X}=\lambda_1\int_0^\infty e^{-\lambda_1 u}\prod_{i=2}^{n+1}(t+(1-t)e^{-\lambda_i u})\mathrm{~d}u.
\]

    Notice that 
    \[\begin{aligned}
        \mathbb{E}\left[\frac{1}{X}\bigg\vert X>0\right]=&\sum_{k=1}^n\frac{1}{k}\cdot\frac{\mathbb{P}(X=k)}{\mathbb{P}(X\neq 0)}\\=& \ \frac{1}{1-\mathbb{P}(X=0)}\int_0^1\sum_{k=1}^n\mathbb{P}(X=k)\cdot t^{k-1}\mathrm{~d}t\\
        = &\ \frac{1}{1-\mathbb{P}(X=0)}\int_0^1\frac{G_X(t)-\mathbb{P}(X=0)}{t}\mathrm{~d}t\\
        = &\ \frac{1}{1-\mathbb{P}(X=0)}\left(\underbrace{\int_0^{\frac{1}{n}}}_{:=\Delta_1}+\underbrace{\int_{\frac{1}{n}}^1}_{:=\Delta_2}\right)\frac{G_X(t)-\mathbb{P}(X=0)}{t}\mathrm{~d}t\,,
    \end{aligned}\]
    
    Note that for any $0\leqslant t\leqslant 1$,
    \[G_X(t)=\sum_{k=0}^n \mathbb{P}(X=k)t^k\leqslant \mathbb{P}(X=0)+t\,,\]
    Thus, $\Delta_1\leqslant\frac{1}{n}$. 
    
    For $\Delta_2$, notice that for any $0\leqslant t\leqslant 1$,
    \[\begin{aligned}
        G_X(t)=&\ \int_0^\infty\lambda_1 e^{-\lambda_1 u}\prod_{i=2}^{n+1}\left( t+(1-t) e^{-\lambda_i u}\right)\mathrm{~d}u\\
        \leqslant &\ \int_0^\infty B e^{-A u}\prod_{i=2}^{n+1}\left(t+(1-t) e^{-A u}\right)\mathrm{~d}u\\
        = &\ \frac{B}{A}\cdot \int_0^\infty A e^{-A u}\prod_{i=2}^{n+1}\left(t+(1-t) e^{-A u}\right)\mathrm{~d}u\\
        = &\ - \frac{B}{A}\cdot \int_0^\infty \prod_{i=2}^{n+1}\left(t+(1-t) e^{-A u}\right)\mathrm{~d}\left(e^{-A u}\right)\\
        = &\ \frac{B}{A}\cdot \int_0^1 \left(t+(1-t)z\right)^n\mathrm{~d}z\,.
    \end{aligned}\]
    Notice that 
    \[\begin{aligned}
        \int_0^1 \left(t+(1-t)z\right)^n\mathrm{~d}z=&\ \frac{1}{1-t}\int_0^1\left(t+(1-t)z\right)^n\mathrm{~d}\left(\left(t+(1-t)z\right)\right)\\
        = &\ \frac{1}{1-t}\int_t^1 x^n\mathrm{~d}x=\frac{1}{n+1}\cdot \frac{1-t^{n+1}}{1-t}\\
        = &\ \frac{1}{n+1}\left(1+t+\cdots+t^n\right)\,.
    \end{aligned}\]
    Therefore, we have 
    \[\begin{aligned}
        G_X(t)\leqslant \frac{1}{n+1}\cdot\frac{B}{A}\cdot\left(1+t+\cdots+t^n\right)\,.
    \end{aligned}\]
    Thus, 
    \[\begin{aligned}
        \Delta_2\leqslant &\ \int_{\frac{1}{n}}^1\frac{1}{n+1}\cdot\frac{B}{A}\cdot \left(\frac{1}{t}+1+\cdots+t^{n-1}\right)\mathrm{~d}t\\
        \leqslant &\ \frac{B}{A}\cdot\frac{1}{n+1}\cdot\left(\log n+1+\frac{1}{2}+\cdots+\frac{1}{n}\right)\\
        \leqslant &\ \frac{B}{A}\cdot\frac{1}{n+1}\cdot\left(2\log n+1\right)
       \,.
    \end{aligned}\]
    Hence, 
     \[\begin{aligned}
        \mathbb{E}\left[\frac{1}{X}\bigg\vert X>0\right]
        = &\ \frac{\Delta_1+\Delta_2}{1-\mathbb{P}(X=0)} \\
        \leqslant  &\ \left(1+\frac{B}{nA}\right)\left(\frac{1}{n}+\frac{B}{A}\cdot\frac{1}{n+1}\cdot\left(2\log n+1\right)\right)\\
        \leqslant  &\ \left(\frac{2B}{A}\right)^2\frac{1+\log n}{n}\,,
    \end{aligned}\]
    we complete the proof.
\end{proof}

\begin{proof}[Proof of Lemma \ref{lem:luce Condorcet winning-bad profile prob}]
    We have by Markov's inequality
    \begin{align*}
        &\mathbb{P}\left(\prod_{j=1}^m X_j\leqslant n^{\alpha m}\left(\log n\right)^\beta\Big\vert X_1\cdots X_m>0\right)\\
        =&  \mathbb{P}\left(
        \sum_{j=1}^m -\log X_j\geqslant -\alpha m\log n - \beta\log\log n \Big\vert X_1\cdots X_m>0
        \right)\\
        \leqslant &\ e^{\alpha m\log n + \beta\log \log n}\cdot\mathbb{E}\left[e^{-\sum_{j=1}^m\log X_j}\Big\vert X_1\cdots X_m>0\right]\\
        =&\ n^{\alpha m}\left(\log n\right)^\beta\cdot\prod_{j=1}^m\mathbb{E}\left[\frac{1}{X_j}\Big\vert X_j>0\right],
    \end{align*}
    Using Lemma \ref{lem:luce moment}, we have 
    \[
    \begin{aligned}
    \mathbb{P}\left(\prod_{j=1}^m X_j\leqslant n^{\alpha m}\left(\log n\right)^\beta\Big\vert X_1\cdots X_m>0\right)
       \leqslant &\ n^{\alpha m}\left(\log n\right)^\beta\cdot\prod_{j=1}^m\left(\frac{2B}{A}\right)^2\frac{1+\log n}{n}\\=&\ \frac{n^{\alpha m}}{n^m} \left(\log n\right)^{\beta}\left(\log n+1\right)^{m}\left(\frac{2B}{A}\right)^{2m}.
       \end{aligned}
    \]
    Hence, we complete the proof.
\end{proof}

\subsection{Proof of Lemma \ref{lem:step-two}}

We first state a technical lemma that controls the probability that some $X_j$ is equal to zero.
\begin{lemma}
    \label{lem:luce non-zero-condition}
    Suppose $U_i^j\sim \operatorname{Exp}(\lambda_i)$ independently for $i=1,\dots,n+1$ and $j=1,\dots,m$, where each $\lambda_i\in [A,B]$ with $B\geqslant A>0$. Let $N_j=\{i\in \{2,\cdots,n+1\}:U_i^j<U_1^j\}$ and $X_j=|N_j|$, then 
    \begin{align}\label{eq:luce non-zero-condition-1}
        \mathbb{P}\left(X_j=0\right)\leqslant \frac{B}{nA+B}\,,\,\text{ for all }j=1,\cdots,m\,.
    \end{align}
    Furthermore, consider their order statistics $\{X_{(j)}\}_{j\in [m]}$ with $X_{(1)}\geqslant \cdots \geqslant X_{(m)}$, we have 
    \begin{align*}
        \mathbb{P}\left(X_{(m-1)}=0\right)\leqslant \frac{m(m-1)B^2}{(nA+B)^2}\,.
    \end{align*}
\end{lemma}
\begin{proof}[Proof of Lemma \ref{lem:luce non-zero-condition}]
The inequality \eqref{eq:luce non-zero-condition-1} directly follows from 
\begin{align*}
    &\mathbb{P}\left(X_j=0\right)=\mathbb{P}\left(\bigcap_{i=2}^{n+1}\left\{U_i^j\geqslant U_1^j\right\}\right)\\=&\ \int_0^\infty \lambda_1e^{-\lambda_1 u_1^j}\cdot \prod_{i=2}^{n+1} e^{-\lambda_i u_1^j}\mathrm{~d}u_1^j\\
    = &\ \int_0^\infty \lambda_1 e^{-\left(\sum_{i=1}^{n+1}\lambda_i\right)u_1^j}\mathrm{~d}u_1^j=\frac{\lambda_1}{\sum_{i=1}^{n+1}\lambda_i}\leqslant \frac{\lambda_1}{\lambda_1+nA}\leqslant \frac{B}{nA+B}\,.
\end{align*}
Thus, for any $1\leqslant \ell,k\leqslant n$ and $\ell\neq k$, 
\begin{align*}
    \mathbb{P}\left(X_\ell=X_k=0\right)=\mathbb{P}\left(X_\ell=0\right)\mathbb{P}\left(X_k=0\right)\leqslant \frac{B^2}{(nA+B)^2}\,.
\end{align*}
Hence, through union bounds, we get 
 {\footnotesize\begin{align*}
    \mathbb{P}\left(X_{(m-1)}=0\right)\leqslant &\ \mathbb{P}\left(\bigcup_{\substack{1\leqslant \ell,k\leqslant m\\ \ell\neq k}}\left\{X_\ell=X_k=0\right\}\right)\leqslant\sum_{\substack{1\leqslant \ell,k\leqslant n\\ \ell\neq k}}\mathbb{P}\left(X_\ell=X_k=0\right)\leqslant \frac{m(m-1)B^2}{(nA+B)^2}\,.
\end{align*}}
We complete the proof.
\end{proof}

\begin{proof}[Proof of Lemma \ref{lem:step-two}]
First, we consider $\mathbb{P}\left(X_{(1)}\ldots X_{(l)}\leqslant n^{l-1}\left(\log n\right)^2, X_{(m)}\neq 0\right)$. 
    Notice that 
    \[
    X_{(1)}\cdots X_{(l)}\leqslant n^{l-1}\left(\log n\right)^2\Longrightarrow X_1\cdots X_m\leqslant n^{\frac{(l-1)m}{l}}\left(\log n\right)^{\frac{2m}{l}},
    \]
    which follows from the inequality $X_{(1)}\ldots X_{(l)}\geqslant \left(X_1\ldots X_m\right)^{\frac{l}{m}}$. 
    Thus, we have 
    \begin{equation}\label{proof-main-2-step-2-bound-1-1}
        \begin{aligned}
        &\ \mathbb{P}\left(X_{(1)}\cdots X_{(l)}\leqslant n^{l-1}\left(\log n\right)^2, X_{(m)}\neq 0\right)\\
        = &\ \mathbb{P}\left(X_{(1)}\cdots X_{(l)}\leqslant n^{l-1}\left(\log n\right)^2, X_1\cdots X_m>0\right)\\
        \leqslant &\ \mathbb{P}\left(
        X_1\cdots X_m\leqslant n^{\frac{(l-1)m}{l}}\left(\log n\right)^{\frac{2m}{l}}, X_1\cdots X_m>0
        \right)\\
        \leqslant &\ \mathbb{P}\left(
        X_1\cdots X_m\leqslant n^{\frac{(l-1)m}{l}}\left(\log n\right)^{\frac{2m}{l}}\ \Big\vert \ X_1\cdots X_m>0
        \right).
    \end{aligned}
    \end{equation}
   Thus, by Lemma \ref{lem:luce Condorcet winning-bad profile prob}, we have 
    \begin{equation}\label{proof-main-2-step-2-bound-1-2}
    \begin{aligned}
        &\ \mathbb{P}\left(X_1\cdots X_m\leqslant n^{\frac{(l-1)m}{l}}\left(\log n\right)^{\frac{2m}{l}}\ \Big\vert \ X_1\cdots X_m>0\right)\\
        \leqslant &\ \frac{n^{\frac{(l-1)m}{l}}}{n^m}\left(\log n\right)^{\frac{2m}{l}}\left(\log n+1\right)^{m}\left(\frac{2B}{A}\right)^{2m}
        =n^{-\frac{m}{l}}\left(\log n\right)^{\frac{2m}{l}}\left(\log n+1\right)^{m}\left(\frac{2B}{A}\right)^{2m}.
    \end{aligned}
    \end{equation}
    Combing \eqref{proof-main-2-step-2-bound-1-1} with \eqref{proof-main-2-step-2-bound-1-2}, we obtain:
    \begin{equation}\label{proof-main-2-step-2-bound-1}
        \mathbb{P}\left(X_{(1)}\cdots X_{(l)}\leqslant n^{l-1}\left(\log n\right)^2, X_{(m)}\neq 0\right)\leqslant n^{-\frac{m}{l}}\left(\log n\right)^{\frac{2m}{l}}\left(\log n+1\right)^{m}\left(\frac{2B}{A}\right)^{2m}.
    \end{equation}
    Then we focus on $$\mathbb{P}\left(X_{(1)}\cdots X_{(l)}\leqslant n^{l-1}\left(\log n\right)^2, X_{(m)}=0, X_{(m-1)}\neq 0\right).$$
    We have the following decomposition:
    \begin{equation}\label{proof-main-2-step-2-bound-2-decomposition}
        \begin{aligned}
            &\ \mathbb{P}\left(X_{(1)}\cdots X_{(l)}\leqslant n^{l-1}\left(\log n\right)^2, X_{(m)}=0, X_{(m-1)}\neq 0\right)\\
            = &\ \mathbb{P}\left(\bigcup_{j=1}^m\left\{X_{(1)}\cdots X_{(l)}\leqslant n^{l-1}\left(\log n\right)^2,X_j=0,\prod_{i\neq j}X_i>0\right\}\right)\\
            \leqslant &\ \sum_{j=1}^m\mathbb{P}\left(X_{(1)}\cdots X_{(l)}\leqslant n^{l-1}\left(\log n\right)^2,X_j=0,\prod_{i\neq j}X_i>0\right).
        \end{aligned}
    \end{equation}
    Notice that 
    \[\begin{aligned}
    &X_{(1)}\cdots X_{(l)}\leqslant n^{l-1}\left(\log n\right)^2, X_j=0, \prod_{i\neq j}X_i> 0\\\Longrightarrow &\prod_{i\neq j}X_i\leqslant n^{\frac{(l-1)(m-1)}{l}}\left(\log n\right)^{\frac{2(m-1)}{l}}, X_j=0, \prod_{i\neq j}X_i> 0,
    \end{aligned}\]
    which follows from $X_{(1)}\cdots X_{(l)}\geqslant \left(\prod_{i\neq j}X_i\right)^{\frac{l}{m-1}}$ given $X_j=0, \prod_{i\neq j}X_i>0$ and $l\leqslant m-1$. Therefore, using Lemma \ref{lem:luce non-zero-condition}, we have 
    \begin{equation}\label{proof-main-2-step-2-bound-2-1}
        \begin{aligned}
            &\ \mathbb{P}\left(X_{(1)}\cdots X_{(l)}\leqslant n^{l-1}\left(\log n\right)^2,X_j=0,\prod_{i\neq j}X_i>0\right)\\
            \leqslant &\ \mathbb{P}\left(\prod_{i\neq j}X_i\leqslant n^{\frac{(l-1)(m-1)}{l}}\left(\log n\right)^{\frac{2(m-1)}{l}}, X_j=0, \prod_{i\neq j}X_i> 0\right)\\
            = &\ \mathbb{P}\left(\prod_{i\neq j}X_i\leqslant n^{\frac{(l-1)(m-1)}{l}}\left(\log n\right)^{\frac{2(m-1)}{l}}, \prod_{i\neq j}X_i> 0\right)\cdot \mathbb{P}\left(X_j=0\right)\\
            \leqslant &\ \mathbb{P}\left(\prod_{i\neq j}X_i\leqslant n^{\frac{(l-1)(m-1)}{l}}\left(\log n\right)^{\frac{2(m-1)}{l}}\ \bigg\vert \ \prod_{i\neq j}X_i> 0\right)\cdot \frac{B}{nA+B}.
        \end{aligned}
    \end{equation}
    Again by Lemma \ref{lem:luce Condorcet winning-bad profile prob},
    \begin{equation}\label{proof-main-2-step-2-bound-2-2}
    	\begin{aligned}
        &\mathbb{P}\left(\prod_{i\neq j}X_i\leqslant n^{\frac{(l-1)(m-1)}{l}}\left(\log n\right)^{\frac{2(m-1)}{l}}\bigg| \prod_{i\neq j}X_i> 0\right)\\
        \leqslant & n^{-\frac{m-1}{l}}\left(\log n\right)^{\frac{2(m-1)}{l}}\left(\log n+1\right)^{m-1}\left(\frac{2B}{A}\right)^{2m-2}.
        \end{aligned}
    \end{equation}
    Therefore, combining \eqref{proof-main-2-step-2-bound-2-decomposition}, \eqref{proof-main-2-step-2-bound-2-1}, and \eqref{proof-main-2-step-2-bound-2-2}, we have 
    \begin{equation}\label{proof-main-2-step-2-bound-2}
        \begin{aligned}
             &\ \mathbb{P}\left(X_{(1)}\cdots X_{(l)}\leqslant n^{l-1}\left(\log n\right)^2, X_{(m)}=0, X_{(m-1)}\neq 0\right)\\
            \leqslant &\ \sum_{j=1}^m \frac{B}{nA+B}\cdot n^{-\frac{m-1}{l}}\left(\log n\right)^{\frac{2(m-1)}{l}}\left(\log n+1\right)^{m-1}\left(\frac{2B}{A}\right)^{2m-2}\\
            \leqslant &\ m n^{-\frac{m-1}{l}-1}\left(\log n\right)^{\frac{2(m-1)}{l}}\left(\log n+1\right)^{m-1}\left(\frac{2B}{A}\right)^{2m-1}\\
            \leqslant &\ mn^{-\frac{m}{l}}\left(\log n\right)^{\frac{2(m-1)}{l}}\left(\log n+1\right)^{m-1}\left(\frac{2B}{A}\right)^{2m-1},
        \end{aligned}
    \end{equation}
    where the last inequality holds as $l\geqslant 1$. 
    By Lemma \ref{lem:luce non-zero-condition}, we have 
     {\footnotesize\begin{equation}\label{proof-main-2-step-2-bound-3}
        \begin{aligned}
             \mathbb{P}\left(X_{(1)}\cdots X_{(l)}\leqslant n^{l-1}\left(\log n\right)^2, X_{(m-1)}= 0\right)
             \leqslant &\ \mathbb{P}\left(X_{(m-1)}= 0\right)\leqslant \frac{m(m-1)B^2}{(nA+B)^2}.
        \end{aligned}
    \end{equation}}
    Finally, combining \eqref{proof-main-2-step-2-bound-1}, \eqref{proof-main-2-step-2-bound-2}, and \eqref{proof-main-2-step-2-bound-3}, we have 
    \begin{equation*}
    %\label{proof-main-2-step-2-results}
        \begin{aligned}
        &\ \mathbb{P}\left(X_{(1)}\cdots X_{(l)}\leqslant n^{l-1}\left(\log n\right)^2\right)\\
        = &\ \mathbb{P}\left(X_{(1)}\cdots X_{(l)}\leqslant n^{l-1}\left(\log n\right)^2, X_{(m)}\neq 0\right) \\
        &\  + \mathbb{P}\left(X_{(1)}\cdots X_{(l)}\leqslant n^{l-1}\left(\log n\right)^2, X_{(m)}= 0, X_{(m-1)}\neq 0\right)\\
        &\ + \mathbb{P}\left(X_{(1)}\cdots X_{(l)}\leqslant n^{l-1}\left(\log n\right)^2, X_{(m-1)}= 0\right)\\
        \leqslant &\ n^{-\frac{m}{l}}\left(\log n\right)^{\frac{2m}{l}}\left(\log n+1\right)^{m}\left(\frac{2B}{A}\right)^{2m}\\
        &\ +  mn^{-\frac{m}{l}}\left(\log n\right)^{\frac{2(m-1)}{l}}\left(\log n+1\right)^{m-1}\left(\frac{2B}{A}\right)^{2m-1} + \frac{m(m-1)B^2}{(nA+B)^2}\\
        \leqslant &\ (m+1)n^{-\frac{m}{l}}\left(\log n\right)^{\frac{2m}{l}}\left(\log n+1\right)^{m} \left(\frac{2B}{A}\right)^{2m}+ \frac{m(m-1)B^2}{A^2n^2},
    \end{aligned}
    \end{equation*}
    which holds for any $1\leqslant l\leqslant m-1$.

\end{proof}

\subsection{Proof of Lemma~\ref{lem:generating function}}

\begin{proof}[Proof of Lemma~\ref{lem:generating function}]

Based on the Binomial Theorem, we have
\[
\begin{aligned}
    (1-x)^n=&\ 1- \sum_{k=1}^n (-1)^{k-1}{n\choose k}x^k,\\
    (1+x)^n=&\ 1 + \sum_{k=1}^n {n\choose k}x^k.
\end{aligned}
\]
Thus, we obtain the following equality: 
\begin{equation*}
        \sum_{k=0}^{\lfloor{\frac{n-1}{2}}\rfloor}{n\choose 2k+1} x^{2k+1} =\frac{(1+x)^n-(1-x)^n}{2} .    
    \end{equation*}
Therefore, by letting $x=\frac{1}{2n}$, we have
\begin{equation*}
    \sum_{k=0}^{\lfloor{\frac{n-1}{2}}\rfloor}{n\choose 2k+1}\left(\frac{1}{2n}\right)^{2k+1}=\frac{(1+\frac{1}{2n})^n-(1-\frac{1}{2n})^n}{2} :=f(n).
\end{equation*}
Notice that for all $n\geqslant 1$, we have $1>f(n)>0$. Furthermore, $\lim_{n\to\infty}f(n)=\frac{e^{\frac{1}{2}}-e^{-\frac{1}{2}}}{2}<1$. This shows that there exists a universal constant $c$ with $1>c>0$ such that for all $n\geqslant 1$, $f(n)\leqslant 1-c$. 
\end{proof}

\section{Simulation Study}
\label{append:simulations}
We conduct simulations to validate our theoretical tight bound on the probability of the existence of a Condorcet winning response. Specifically, we consider settings where $m \in \{3,4,5,6\}$, with $n \in [100,300]$ for $m = 3,4$ and $n \in [200,400]$ for $m = 5,6$. Under the Luce model, we set the parameters as
\[
\lambda_i = \exp\left( \lambda_p \cdot \mathds{1}\{i=1\} \right), \quad i \in \{1,\cdots,n\}\,,
\]
where $\lambda_p \in \{0, 0.005, 0.05, 0.1, 0.2, 0.3, 0.4, 0.5\}$. When $\lambda_p = 0$, this setup exactly recovers the Impartial Culture model. For each configuration, we estimate the probability that $y_1$ is the Condorcet winning response using $1,000,000$ simulations. The results, shown in Figure~\ref{fig:simulation-probability}, align perfectly with our theoretical tight bound $\Theta(n^{-m/\lceil m/2\rceil})$.

\begin{figure}[htbp]
    \centering
    \includegraphics[width=0.96\linewidth]{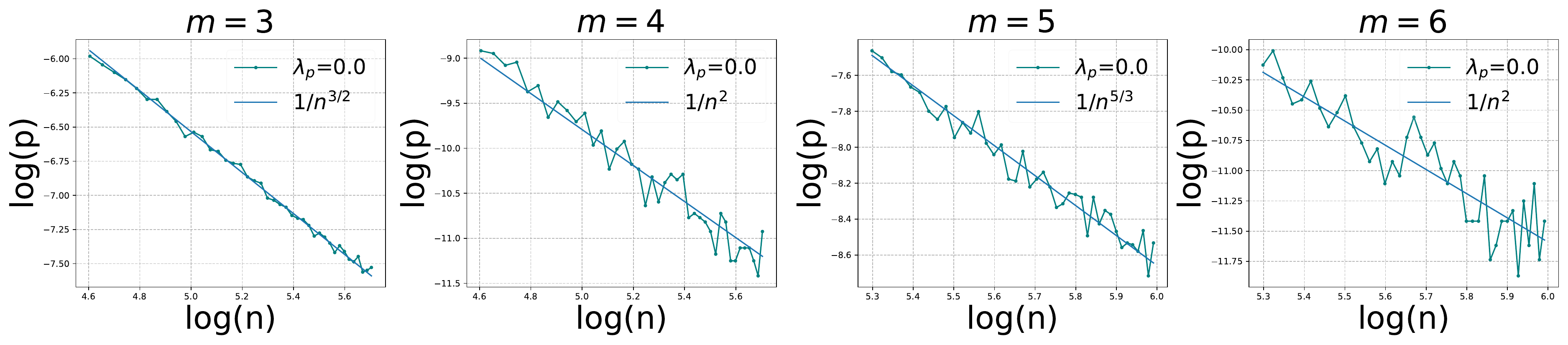}
    \includegraphics[width=0.96\linewidth]{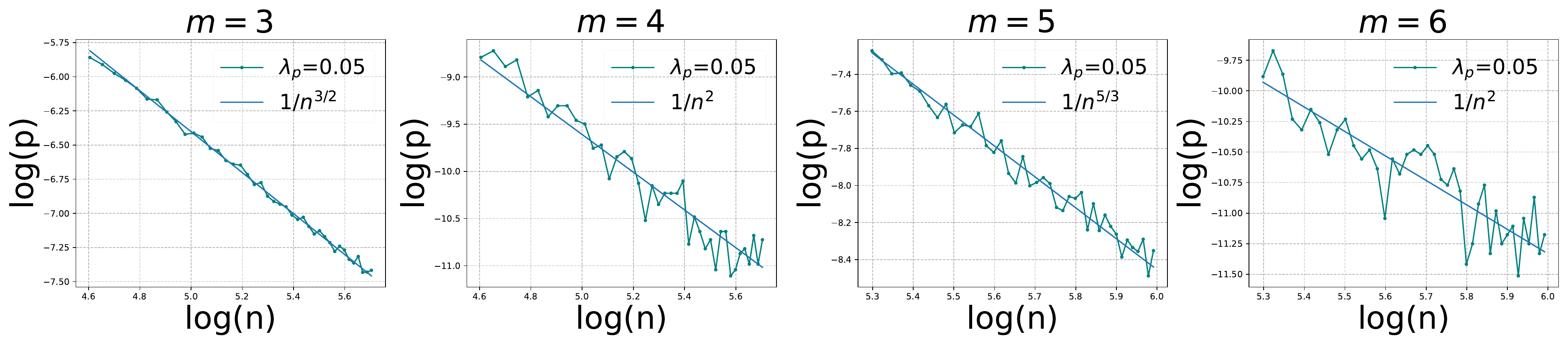}
    \includegraphics[width=0.96\linewidth]{figs/lambda=0.1.pdf}
    \includegraphics[width=0.96\linewidth]{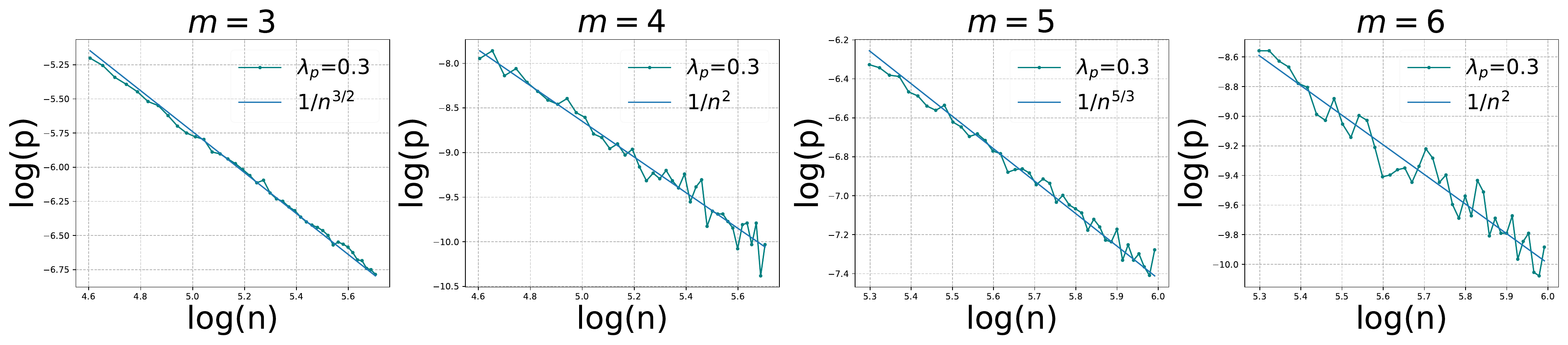}
    \includegraphics[width=0.96\linewidth]{figs/lambda=0.5.pdf}
    \caption{Simulation results for the probability that $y_1$ is the Condorcet winning response under the Luce model with varying $\lambda_p$, $m$, and $n$. The empirical estimates align closely with our theoretical tight bound $\tilde{\Theta}(n^{-\frac{m}{l}})$.}
    \label{fig:simulation-probability}
\end{figure}
\end{document}